\documentclass[12pt]{article}
\usepackage[english]{babel}
\usepackage{amsthm}
\usepackage{amssymb}
\usepackage{amsmath}
\usepackage{float}
\usepackage{caption}
\usepackage{natbib}
\usepackage{bm}
\usepackage{multirow}
\usepackage{xcolor}
\usepackage{pdfpages}
\usepackage{titling}

\theoremstyle{plain}
\newtheorem{proposition}{Proposition}
\newtheorem{lemma}{Lemma}
\newtheorem{theorem}{Theorem}
\newtheorem{corollary}{Corollary}
\newtheorem{remark}{Remark}

\usepackage[bf,SL,BF]{subfigure}
\usepackage{algorithm}
\usepackage{algpseudocode}
\usepackage{graphicx}
\usepackage{enumerate}
\usepackage{natbib}
\usepackage{url} % not crucial - just used below for the URL 
\usepackage[colorlinks,linkcolor=blue,citecolor=blue,urlcolor=blue,filecolor=blue,backref=page]{hyperref}

%%%%% KEY WORDS
\providecommand{\keywords}[1]
{
  \small	
  \textbf{\textit{Keywords---}} #1
}

%%%% flow chart

%%%% flowchart

\usepackage{tikz}
\usetikzlibrary{shapes.geometric, arrows}

\tikzstyle{startstop} = [rectangle, rounded corners, 
minimum width=3cm, 
minimum height=1cm,
text centered, 
draw=black, 
fill=blue!30]

\tikzstyle{process} = [rectangle, 
minimum width=3cm, 
minimum height=1cm, 
text centered, 
text width=3cm, 
draw=black, 
fill=orange!30]

\tikzstyle{arrow} = [thick,->,>=stealth]

\tikzstyle{line} = [draw, thick, color=black!50, -latex']

% DON'T change margins - should be 1 inch all around.
\addtolength{\oddsidemargin}{-.5in}%
\addtolength{\evensidemargin}{-1in}%
\addtolength{\textwidth}{1in}%
\addtolength{\textheight}{1.7in}%
\addtolength{\topmargin}{-1in}%

\setcounter{page}{1}
\def\spacingset#1{\renewcommand{\baselinestretch}%
{#1}\small\normalsize} 

\def\spacingset#1{\renewcommand{\baselinestretch}%
{#1}\small\normalsize}
\spacingset{1.1}
\date{}

  \title{\bf \Large Non-segmental Bayesian Detection of Multiple
Change-points}
  \author{ Chong Zhong\thanks{
  \footnotesize
   Co-first author. 
   The author is a PhD student of Department of Applied Mathematics, The Hong Kong Polytechnic University. },\hspace{.1cm}
     Zhihua Ma\thanks{  
    \footnotesize
    Co-first author. The author is an Assistant professor of School of Economic, Shenzhen University, China.
   },\hspace{.1cm}
Xu Zhang\thanks{
    \footnotesize
    The author is a PostDoc fellow of Department of Applied Mathematics, The Hong Kong Polytechnic University, HKSAR.},\hspace{.1cm} and
    Catherine C. Liu \thanks{
      \footnotesize
    The author is an Associate Professor of Department of Applied Mathematics, The Hong Kong Polytechnic University, HKSAR (\url{macliu@polyu.edu.hk}). } 
    }
\begin{document}
\maketitle

\begin{abstract}
We propose an original and general NOn-SEgmental (NOSE) approach  for the detection of multiple change-points. 
NOSE identifies change-points by the non-negligibility of posterior estimates of the jump heights. 
Alternatively, under the Bayesian paradigm, 
NOSE treats the step-wise signal as a global infinite dimensional parameter drawn from a proposed process of atomic representation, 
where the random jump heights determine the locations and the number of change-points simultaneously. 
The random jump heights are further modeled by a Gamma-Indian buffet process shrinkage prior under the form of discrete spike-and-slab.
The induced  maximum a posteriori estimates of the jump heights are consistent and enjoy zerodiminishing false negative rate in discrimination under a 3-sigma rule. 
The success of NOSE is guaranteed by the posterior inferential results such as the minimaxity of posterior contraction rate, and posterior consistency of both locations and the number of abrupt changes.
NOSE is applicable and effective to detect scale shifts, mean shifts, and structural changes in regression coefficients under linear or autoregression models.
Comprehensive simulations and several real-world examples demonstrate the superiority of NOSE in detecting abrupt changes under various data settings.
\end{abstract}
\keywords{Change-point, Minimax optimal rate, Posterior consistency, Spike-and-slab, 3-sigma}

\newpage
\spacingset{1.7}
\section{Introduction}

Detection of multiple change-points has long been an active research topic with a broad range of applications in economics, health study, genetics, and finance, to name a few.
The change detection is needy in cases with mean shifts (\cite{frick2014multiscale}; \cite{fryzlewicz2014wild}; \cite{du2016stepwise}; \cite{romano2022detecting}; among others), scale shifts (\cite{killick2012optimal}; \cite{haynes2017computationally}; among others), and structural abrupt changes in regression models (\cite{bai2003computation}; \cite{korkas2017multiple}; \cite{baranowski2019narrowest}; among others). 
Since the abrupt change pattern used to be mathematically expressed as a stepwise function or sum of segment-wise functions, 
existing methods incline to study segmental parameters such as piecewise mean parameters and segment-wise log-likelihood ratios to unveil the changes such as the number, locations, and jump sizes.
In this article, we attempt to propose an original and general procedure of change-point detection under a novel NOn-SEgmental (NOSE) spirit which models the pure jump process of the change mechanism by a \textit{global infinite-dimensional parameter}.

Our approach is motivated by a suspected change-point under-discrimination case arising from asset pricing and portfolio management.
Specifically, we look into the US log daily returns of agriculture industry portfolios (DRAIP) from January 2007 to December 2019, available at \href{http://mba.tuck.dartmouth.edu/pages/faculty/ken.french/data\_library.html}{http://mba.tuck.dartmouth.edu}. 
Understanding the shifts on the scale of the recast daily return data can help evaluate the risk of investment on these portfolios since the variation of daily returns usually acts as a measure of the risk of a portfolio. 
The DRAIP dataset is displayed as a black line in Figure \ref{fig:scaleexample}.
One can observe noticeably that, 
i) the data have no shifts on the mean since all data are centered around zero stably; 
ii) the variations of daily returns have uneven shifts, most of which are modest except the apparent variation on time interval $(400, 500)$.
Existing methods such as
 NOT \citep{baranowski2019narrowest}, SMUCE \citep{frick2014multiscale}, and PELT \citep{killick2012optimal} can work on this dataset to detect scale changes, summarized in Figures \ref{Res_SMUCE}-\ref{Res_PELT}. 
The numbers of change-points detected are 4, 4, and 5, respectively. \textit{Nonetheless}, one may suspect the possibility of under-detection of change-points for areas highlighted in, a) the orange rectangle between (200, 400) that is bouncing-visible and b) the blue rectangle between (0, 200) that is bouncing-mild. 
Note that the aforementioned methods share the same spirit of modeling the \textit{local segment parameters} directly, and may lose the structural information. 
Instead, we are driven to formulate a \textit{global process} for the underneath abrupt change mechanism to discover the possible changes. 
Our approach is introduced in subsections \ref{subsec:global}-\ref{subsec:3sigma}.

\begin{figure}[H]
    \centering
    \subfigure[]{
    \begin{minipage}[t]{0.45\linewidth}
      \centering
\includegraphics[height = .5\textwidth]{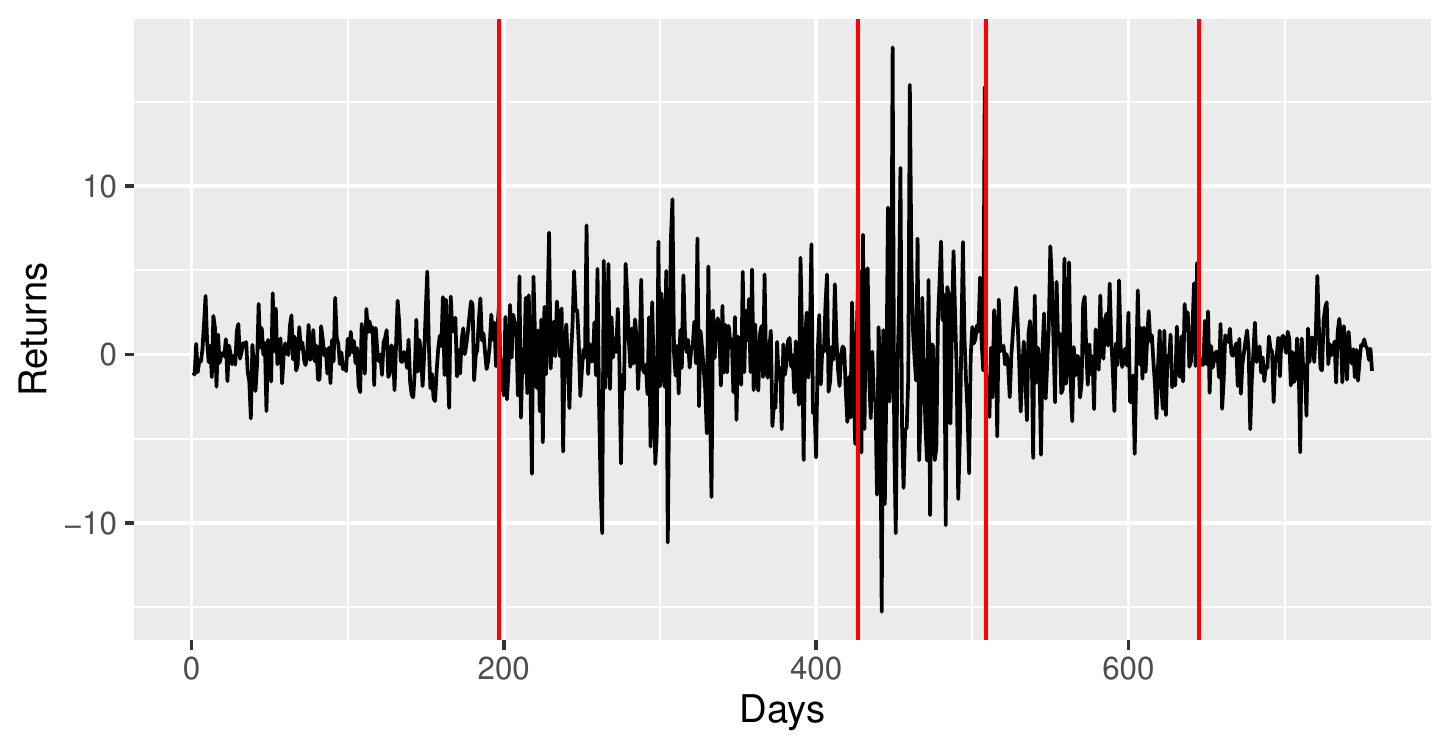}
\label{Res_SMUCE}
    \end{minipage}
    }
\subfigure[]{
    \begin{minipage}[t]{0.45\linewidth}
      \centering
\includegraphics[height = .5\textwidth]{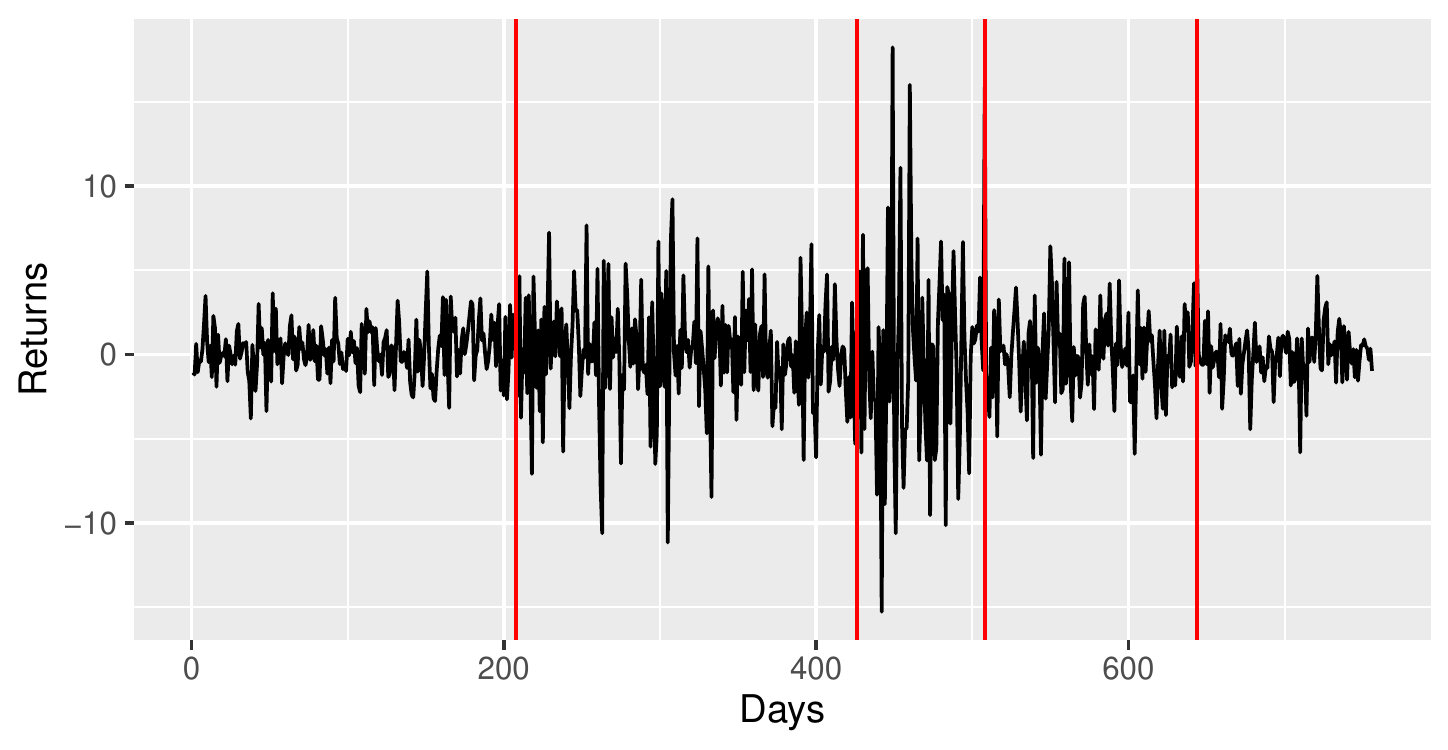}
\label{Res_NOT}
    \end{minipage}
    }
    \subfigure[]{
    \begin{minipage}[t]{0.45\linewidth}
      \centering
\includegraphics[height = .5\textwidth]{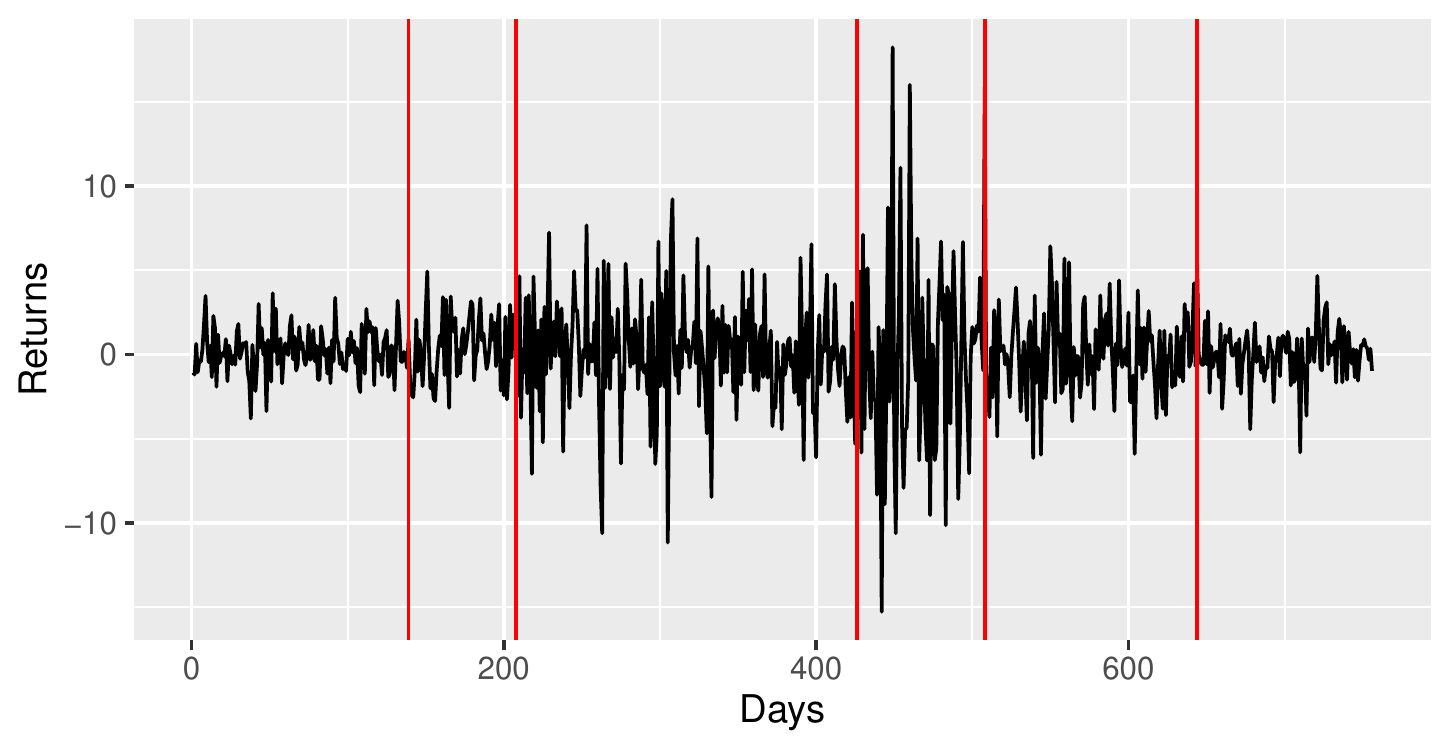}
\label{Res_PELT}
    \end{minipage}
    }
\subfigure[]{
    \begin{minipage}[t]{0.45\linewidth}
      \centering
\includegraphics[height = .5\textwidth]{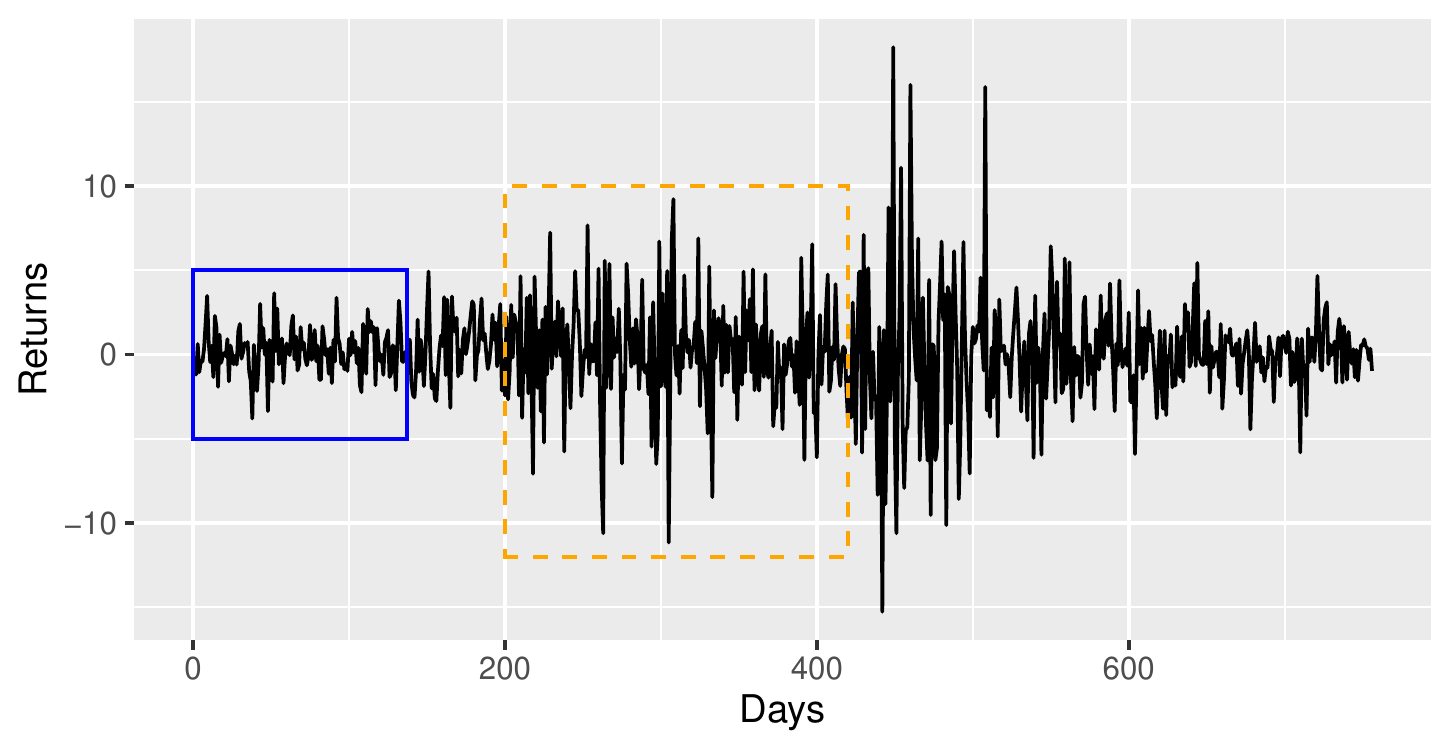}
\label{MissRegion}
    \end{minipage}
    }
    \caption{\footnotesize{Plots of  estimated locations of change-points (in red vertical lines) by different methods and DRAIP data (in black lines). (a), SMUCE; (b), NOT; (c), PELT; (d) original data.  }}
    \label{fig:scaleexample}
\end{figure}

\subsection{Global curve function parameter $\theta(t)$}\label{subsec:global}
The abrupt change, in almost all literature, is characterized as a \textit{pure jump process} $\sum_{k=1}^{K+1} \theta_k \\ I(\tau_{k-1} \le t < \tau_{k})$, and have been dealt with by focusing on segment parameters $\theta_k$ directly. 
Here $K$ denotes the unknown total number of change-points, $\tau_k$ denotes the $k$-th change-point,
and  the argument $t$ is defined on a state space $\mathcal{T}$ that is not limited to a temporal or spatial state.
Let $\bm{\tau}_{1:K} = \{\tau_1, \ldots, \tau_K\}$, where $\tau$ can be a placeholder.
We assume that the adjacent $\theta_k$'s are distinguishable in the sense that $\theta_k \not = \theta_{k+1}$ for all $1\le k \le K$. 
Rather than looking into local segmental parameters $\theta_k$, we globally denote the pure jump process or the stepwise function as $\theta(t)$. 
Consequently, our approach starts from an atomic representation of the curve function $\theta(t)$ from the perspective of jump sizes and locations of change-points.

Let $(h_1, \xi_1),  (h_2, \xi_2), \ldots$ be a countably infinite collection of atoms and heights at locations. 
A draw of an atomic random measure is written as
\begin{align}\label{model:atomicprocess}
    q(\cdot) \equiv \sum_{\ell=1}^\infty h_\ell\delta_{\xi_\ell}(\cdot), 
\end{align}
where $\delta_{\xi_\ell}$ is  an atom at $\xi_\ell$ with $h_\ell$ being its height of the jump in $q$.   
Then, we propose a prior process $\bm{Q}$ for $\theta(t)$ in the form of the cumulative integral of $q$
\begin{eqnarray}
\label{prior:theta}
\begin{aligned}
   \theta(t) \sim \bm{Q} \equiv \int_{-\infty}^t q(u)du = \sum_{\ell=1}^\infty h_\ell I(\xi_\ell \le t).
\end{aligned}
\end{eqnarray}
As the jumps may be downward or upward,
the jump sizes $h_\ell \in \mathbb{R}$ are allowed to be \textit{sign-varying} and may be \textit{dependent}
rather than being \textit{non-negative} and \textit{independent} in the atomic representation in a completely random measure \citep{kingman1967completely}. 

Since those jumps with negligible heights are not considered to be abrupt changes, 
one may approximate the prior process $\bm{Q}$ in a truncation form $\bm{Q}^L$, 
\begin{align}
    \label{prior:truncation}
     \bm{Q}^L = \int_{-\infty}^tq^{L}(u)du = \sum_{\ell=1}^L h_\ell I(\xi_\ell \le t) ~\text{with}~  q^{L} = \sum_{\ell=1}^L h_\ell \delta_{\xi_\ell}.  
\end{align}
In practice, one may assume the number of change-points $K$ is bounded by some sufficiently large number $ L$, say, $L=[n/D]$, the integer part of the ratio between the number of observations $n$ and $D$.
Here $D$ reflects one's prior belief on the minimum distance between any two adjacent change-points.
For example, the PELT method sets the default minimum segment length as $D=2$ in the \texttt{R} package \texttt{changepoint} \citep{killick2014changepoint}. 
In Theorem \ref{theorem:converge} of Section \ref{sec:implement}, we will state the asymptotic equivalence of the truncation form \eqref{prior:truncation} to the atomic expression \eqref{prior:theta} under the Gamma-IBP prior model proposed in \eqref{prior:h}.

\subsection{Shrinkage prior for $\theta(t)$}\label{subsec:shrinkage}
Let $\theta(t) \equiv \theta$. The underlying distribution for drawing a sample sequence $\bm{y}=(y_1,\ldots,y_n) $ is denoted by $f(\cdot|\theta, \bm{\gamma})$, where $\theta$ is the abrupt change parameter that determines the abrupt changes and $\bm{\gamma}$ is  the nuisance parameters that does not contribute to the abrupt change mechanism. 
Suppose that the $n$ samples $\bm{y}$ are observed at $\bm{t}_{1:n}$. 
Then the likelihood is 
$$
\bm{l}(\bm{y} |\theta, \bm{\gamma}) = \prod_{i=1}^n f(y_i|\theta(t_i), \bm{\gamma}). 
$$

This brings us to the posterior estimate of $\theta(t)$ under prior \eqref{prior:truncation}. 
Once we obtain a posterior estimate based on the observed data $\bm{y}$, 
we immediately have the increments of $\theta(t)$ between $t_i$ and $t_{i+1}$, denoted as $d_i = \theta(t_{i+1}) - \theta(t_i)$. 
The increment sequence $d_i$ acts as a KEY signal of change-points in our methodology:
clearly, the jump height vector $\bm{d} = (d_1, \ldots, d_{n-1})$ represents the jump heights/sizes at all states. 
Thus, those locations with non-negligible jump sizes are naturally segregated from those ignorable and thus, identified as change-points. 
Consequently, we tend to employ posterior estimates of $d_i$ sequence as the features to discriminate change-points based on some criterion rule that will be presented in subsection \ref{subsec:3sigma}.

Note that drawing a random trajectory of $\theta(t)$ is equivalent to randomly drawing vectors $\bm{\xi} = (\xi_1, \ldots, \xi_L)$ and $\bm{h} = (h_1, \ldots, h_L)$. 
Since $\bm{h}$ are heights of atoms at $\bm{\xi}$, we sample $\bm{\xi}$ first and then sample $\bm{h}$, and randomly assign $\bm{h}$ to the atoms. 
Since one can only observe $\bm{y}$ at discrete states $\bm{t}_{1:n}$, it is meaningless to assume that the change-points take place between two adjacent data points. 
Hence,  we assume that all jumps of $\theta(t)$ only take place on $t_i, i=1, \ldots, (n-1)$ without loss of generality (the last data point is omitted as a change-point). 
Then the prior for atoms $\xi_\ell$ is naturally defined as 
\begin{eqnarray}
\label{prior:xi}
\begin{aligned}
     &\xi_1 \sim U(\bm{t}_{1:(n-1)}), 
     &\xi_\ell|\xi_1, \ldots, \xi_{\ell-1} \sim U(\bm{t}_{1:(n-1)}\setminus \bm{\xi}_{1:(\ell-1)}), ~\ell\ge 2,
\end{aligned}
\end{eqnarray}
where $Z \setminus A$ denotes the complement of set $A$ given the universe $Z$. 
In other words, $\xi_\ell$ are sampled from $\bm{t}_{1:(n-1)}$ uniformly without replacement. 
As a result, $\bm{\xi}$ is just a subset of $\bm{t}_{1:(n-1)}$ for any $L < (n-1)$.   

Note that under prior \eqref{prior:xi}, $\bm{h}$ is a subset of $\bm{d}$ containing all non-zero entries of $\bm{d}$. 
Hence we will discuss the sparseness of the jump height vector $\bm{d}$ before the prior elicitation of $\bm{h}$. 

\subsubsection*{Nearly black vector: $K_n$-sparsity}
In general, we allow  the number of change-points $K$ to be arbitrarily large but require $K << n$ as $n \to \infty$. 
One may select a sufficiently large truncation number $L$ so that $K<< L$ too. 
Then the jump height vector $\bm{d}$ belongs to $l_0[K_n]$, a class of \textit{nearly black vectors} (\cite{donoho1992maximum}; \cite{castillo2012needles}), explicitly expressed as 
$$
l_0[K_n] = \{\bm{v} \in \mathbb{R}^{p}: \sum_{i=1}^p I(|v_i|>0) \le K_n\},
$$
where $v_i$ is the $i$th entry of $\bm{v}$ and $K_n (\ge K)$ is a given integer so that $K_n =o(L)$, as $n , L\to \infty$.
We call that $\bm{d}$ possesses $K_n$-sparsity. 
Note that $\bm{h}$ is also $K_n$-sparse since $\bm{d}$ and $\bm{h}$ share the same cardinality. 

Under the above $K_n$ sparsity, we transfer change-point detection to searching for a sparse posterior solution to the jump height vector $\bm{d}$ and $\bm{h}$. 
Therefore, we will introduce next a shrinkage prior for the random vector $\bm{h}$ in model \eqref{prior:h}. 
Our $K_n$-sparsity is inspired by the ``horizontal" sparsity of the vector of jump locations in \citet[subsection 6.3]{frick2014multiscale} under Gaussian linear models, though we take a ``vertical" view on the jump heights instead. 
By penalizing the number of change points, the SMUCE method by Frick, Munk, and Sieling attains a minimax optimal rate up to a logarithm term on the distance between locations of true and estimated change-points;
by a constructed shrinkage prior, our proposed NOSE achieves the minimax optimal posterior contraction rate over the $l_0[K_n]$ class within the Bayesian context.
Nonetheless, these two different kinds of views on sparsity lead to different estimation procedures and consistency. 
SMUCE has to estimate the number and locations of change-points sequentially and obtains the consistency of the number of change-points only. 
In contrast, NOSE estimates the number and the locations of change-points \textit{simultaneously} because, under the jump-size-weighted atomic representation \eqref{prior:truncation}, a non-negligible jump size certainly indicates a change-point. 
As a result, NOSE achieves consistency of both the number and  locations of change-points. 

\subsubsection*{Prior for $\bm{h}$: Gamma-IBP model}
The prior for $\bm{h}$ is expressible as follows. 
\begin{eqnarray}\label{prior:h}
\begin{aligned}
h_\ell|Z_\ell &\sim (1-Z_\ell) \delta_0 + Z_\ell F_0, ~F_0 = \text{Laplace}(0, \lambda),\\
Z_\ell|\eta_\ell &\sim \text{Bernoulli}(\eta_\ell),
 ~\eta_\ell = \prod_{j=1}^\ell p_j,
~p_j |\alpha  \sim \text{Beta}(\alpha, 1), ~ \alpha \sim \text{Gamma}(a, b),
\end{aligned}
\end{eqnarray}
where $Z_\ell$ are latent binary variables determined by the sparsity parameters $\eta_\ell$, $\delta_0$ denotes the mass function at $0$, $\text{Laplace}(0, \lambda)$ represents a zero-centered Laplace distribution with precision parameter $\lambda$, and $\text{Gamma}(a, b)$ represents the Gamma distribution with density $\{\Gamma(a) b^a\}^{-1}x^{a-1}\exp(-x/b)$. 
Prior \eqref{prior:h} is a special class of \textit{discrete spike-and-slab prior} with a surely-zero spike $\delta_0$ and a Laplace slab $F_0$. 
Specifically, the sparsity parameters $\eta_\ell$ are exponentially decreasing products of a series of Beta variables with a mass parameter $\alpha$, which is modeled by a Gamma hyperprior for the purpose of dominating the whole sparsity of prior \eqref{prior:h}. 
Consequently, $\bm{Z} = (Z_1, \ldots, Z_L)$ can be viewed as a stick-breaking representation of an $L$-truncated single row in the Indian buffet process (IBP) \citep{teh2007stick}. 
Therefore, prior $\bm{h}$ is named as the Gamma-IBP model hereafter. 

The nest of the IBP construction and the Gamma hyperprior results in a \textit{strict exponential decrease} on the dimensionality $|\bm{Z}|$, and maintains sufficient weight on the true sparsity level $K_n$. 
Therefore, it suffices to reach the minimax optimal posterior contraction rate \citep{castillo2012needles}. 
On the other hand, the IBP construction further controls the tail probability $Pr\{|\bm{Z}| > k\}$ for any $k>0$, and hence, obtains consistent posterior model selection with a smaller cut-off compared to \cite{castillo2015bayesian}.
The detailed justifications and results are summarized in Section \ref{sec:theory}.

\subsection{Discrimination of change-points} \label{subsec:3sigma}
After the prior elicitation in subsection \ref{subsec:shrinkage}, we propose a change-point discrimination procedure based on the induced posterior. 
We first obtain posterior estimates of the increments $\bm{d}$ and then simply compare the value of the estimates with some data-driven threshold. 
Under the priors \eqref{prior:xi} and \eqref{prior:h}, the posterior of $\bm{\xi}$ and $\bm{h}$ are sampled through Markov Chain Monte Carlo (MCMC). 
Suppose one has drawn $N$ posterior samples of $\bm{h}$ and $\bm{\xi}$, denoted as $^{j}h_\ell$ and $^{j}\xi_\ell, j=1, \ldots, N$. 
Then for any $t_i$, the marginal posterior samples of $\theta(t_i)$ are determined as $^{j}\theta(t_i) = \sum_{\ell=1}^{L}  ~ ^{j}h_\ell I(^{j}\xi_\ell \le t_i)$. 

With $N$ marginal posterior samples of $\theta(t_i)$, one can approximate the maximum of posteriori (MAP) estimate of $\theta(t_i)$ as the mode of sample density of $\{^{j}\theta(t_i)\}_{j=1}^N$, denoted as $\hat{\theta}(t_i)^{\text{MAP}}$. 
Let $\{\zeta_i\}_{i=2}^n$ be 
$$ 
\zeta_i = \hat{\theta}(t_{i+1})^{\text{MAP}} - \hat{\theta}(t_i)^{\text{MAP}}, i=1, \ldots, (n-1), 
$$
the diffed series of $\hat{\theta}(t_i)^{\text{MAP}}$. 
Note that $\zeta_i$ is a posterior estimate of $d_i$ i.e. a posterior estimate of the jump size at $t_i$. 
Nevertheless, $\zeta_i$ is not the MAP estimate $\hat{d}_i^{\text{MAP}} = \{\widehat{\theta(t_{i+1}) - \theta(t_i)}\}^{\text{MAP}}$ but an approximation to $\hat{d}_i^{\text{MAP}}$ in practice. 
The reason why we do not employ $\hat{d}_i^{\text{MAP}}$ directly is that the marginal posterior of $d_i$ is poorly approximated by MCMC samples due to high auto-correlation between samples of $^{j}d_i = \{^{j}\theta(t_{i+1}) - ^{j}\theta(t_{i})\}, j=1, \ldots, N$. 
Therefore, the density of $d_i$ estimated from MCMC samples of $\theta(t_i)$ is useless and so is the mode. 
Let $\hat{\sigma} \equiv (\text{Var}\{\zeta_i\}_{i=1}^{n-1})^{1/2}$ be the sample standard deviation of $\{\zeta_i\}_{i=2}^n$. 
Then we determine change-point locations $\tau_k, k \in 1, \ldots, K$ based on the following discrimination rule. 

\subsubsection*{Discrimination rule} 
 \textbf{3-sigma} \textit{If at $t_i$, the absolute posterior estimate of jump size $|\zeta_{i}|> 3\hat{\sigma}$, then $t_i$ is discriminated as a change-point; otherwise, not a change-point.}

It is intuitive to employ the above  3-sigma rule for change-point discrimination due to the  nearly black nature of $\bm{d}$. 
The 3-sigma rule  has been widely used in outlier detection \citep{pukelsheim1994three}, where the outliers are considered to be far away from the center of the population. 
In our case, the nearly black $\bm{d}$ indicates that the population of $\zeta_i$ concentrates at zero except for some outliers. 
Hence, those points that are sufficiently far away from zero are naturally discriminated as outliers, i.e. change-points. 

The threshold for negligibility takes the value $3\hat{\sigma}$. 
It is a kind of ``global" threshold based all entries of the posterior estimates of vector $\bm{d}$. 
In existing approaches, most thresholds for spike-and-slab priors are ``local". 
Some local thresholds shrink those coordinates whose posterior estimates are under some prespecified values to zero (\cite{pati2014posterior}; \cite{rovckova2016fast}; \cite{rovckova2018bayesian}; among others), and the others shrink those coordinates whose posterior non-zero probability is smaller than 0.5 (\cite{barbieri2004optimal}; \cite{scheipl2012spike}; \cite{cappello2023bayesian}; among others). 
However, a local threshold may be sensitive to the ratio between jump sizes and within-segment variations in our numerical experience.
The 3-sigma global criterion grants us a strong ability to recognize those even small jump sizes since each jump size is compared with the vast majority of zeros on stationary points, regardless of the within segment variations.  
Under the 3-sigma rule, we show the near zero false negative rate of discrimination; see Corollary \ref{Falsenegative} in Section \ref{sec:theory}. 

\begin{figure}[!htb]
    \centering
    \small
\begin{tikzpicture}[node distance=5cm]

\node (prior) [startstop] {$\begin{array}{c}\textbf{Prior elicitation}:\\ 
\theta(t) \sim \bm{Q}^L =\sum_{\ell=1}^L h_\ell I(\xi_\ell \le t);\\
\bm{\xi} \sim \mathrm{Uniform};\\
\bm{h} \sim \text{Gamma-IBP model}.\\
\end{array}$};
\node (mcmc) [startstop, below of=prior] { $
\begin{array}{c}
\textbf{Posterior  estimates}:\\
   d_i = \theta(t_{i+1}) - \theta(t); \\
   \hat{\theta}(t_i)^{\text{MAP}}: \text{marginal posterior mode of} ~\theta(t_i);   \\
   \zeta_i=  \hat{\theta}(t_{i+1})^{\text{MAP}} - \hat{\theta}(t_i)^{\text{MAP}}, i=1, \ldots, (n-1).\\
\end{array}
$ };
\node (discrimination) [startstop, below of=mcmc] {
$
\begin{array}{c}
    \textbf{Change-point discrimination (3-sigma)}:   \\
    \hat{\sigma}: \text{sample SD of}~ \zeta_i; \\
    \text{Change-points set}~ \mathcal{S}_{\mathcal{C}} =  \{t_i: I(|\zeta_i| > 3\hat{\sigma}), i<n\}.\\
\end{array}
$
};

\draw [arrow] (prior) -- (mcmc);
\draw [arrow] (mcmc) -- (discrimination);

\end{tikzpicture}
\caption{Flowchart of the proposed methodology.}
    \label{fig:flowchart}
\end{figure}
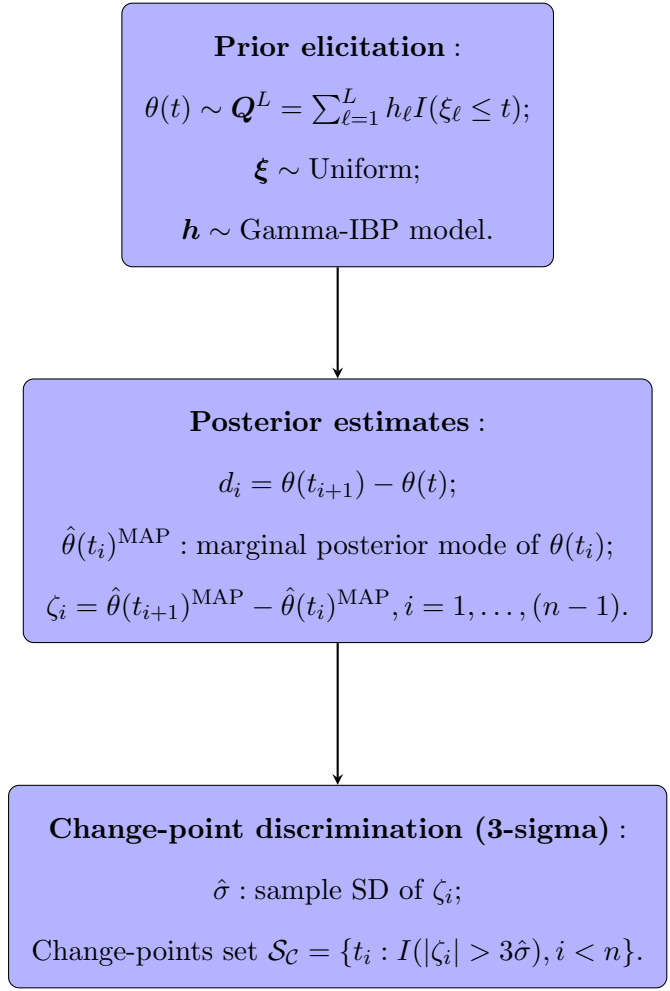

We provide an overview of the workflow of the proposed change-point detection method in Figure \ref{fig:flowchart} and summarize it as follows. 
\begin{enumerate}
    \item[] Step 1: construct a truncated prior for $\theta(t)$ in the form of \eqref{prior:truncation}. 
    Assign priors \eqref{prior:xi} and \eqref{prior:h} to $\bm{\xi}$ and $\bm{h}$, respectively. 
    \item[] Step 2: draw $N$ posterior samples of $\bm{\xi}$ and $\bm{h}$. 
    Obtain the marginal MAP estimate of $\theta(t)$ as $\hat{\theta}(t_i)^{\text{MAP}} = \arg \max_x f_i(x)$, where $f_i$ is the empirical density of $^{j}\theta(t_i) = \sum_{\ell=1}^L ~ ^{j} h_\ell I(^{j}\xi_\ell \le t_i), ~j=1, \ldots, N, ~i=1, \ldots, n$. 
    \item[] Step 3: obtain $\zeta_i = \hat{\theta}(t_{i+1})^{\text{MAP}} - \hat{\theta}(t_i)^{\text{MAP}}$ as an estimate of $d_i$. 
    The set of discriminated change-points is $\mathcal{S}_{\mathcal{C}} = \{t_i: I(|\zeta_i| > 3\hat{\sigma}), i<n\}$. 
\end{enumerate}

\subsection{Application scenarios}\label{subsec:app}
We illustrate some application scenarios of the proposed method here. 
NOSE works in the detection of mean shifts and scale shifts such as,
\begin{enumerate}
    \item[] Scenario 1: shifts in means of Gaussian variables (Gaussian mean-shifted model). We have a series of real observations $y_i \sim N\{\theta(t_i), \sigma^2\}$, for $i=1, \ldots, n$. 
    The global parameter $\theta(t)$ represents the location parameter. 
    
    \item[] Scenario 2: shifts in the parameter of Poisson variables. 
    We have a series of integer observations $y_i \sim \text{Poisson}\{\theta(t_i)\}$, for $i=1, \ldots, n$. 
    The global parameter $\theta(t)$ characterizes the changes in mean and variance simultaneously.

    \item[] Scenario 3: shifts in the scale parameters  of Gaussian variables (Gaussian scale-shifted model).
     We have a series of real observations $y_i \sim N\{\mu, \exp[\theta(t_i)]\}$, for $i=1, \ldots, n$. 
     The global parameter $\theta(t)$ represents the scale parameter through an exponential transformation to guarantee the non-negativity.
  \end{enumerate}

 Meanwhile, NOSE is also applicable to detect structural changes in regression/autoregression models. 
  \begin{enumerate}
     \item[] Scenario 4: structural changes of an AR(1) model. 
     Data are generated from the model 
     $$
     y_t = \phi_0 + \theta(t)y_{t-1} + \epsilon_t, 
     $$ 
     where $\phi_0$ is the fixed intercept,  $E(\epsilon_t) = 0$ and $E(\epsilon_t\epsilon_s) = \sigma^2 I(t=s)$.
     The global parameter $\theta(t)$ represents the autocorrelation coefficient. 
    
    \item[]  Scenario 5: structural changes of a linear regression model. 
    Data are recorded as independent pairs of $(y_{tj}, X_{tj})$, for $j=1, \ldots, n_t, t = 1, \ldots, T$. 
    The association between $y$ and $X$ is characterized by 
    $$
    y_{tj} = \beta_0 + \theta(t)X_{tj} + \epsilon_{tj}, 
    $$
    where $\beta_0$ is a fixed intercept, $E(\epsilon_{tj}) = 0$ and $E(\epsilon_{tj}\epsilon_{sj'}) = \sigma^2 I(t=s)$.
     The global parameter $\theta(t)$ represents the regression coefficient at time $t$. 
     Note that by taking $n_t = 1$ for all $t$ and $X_t = y_{t-1}$, this scenario reduces to Scenario 4. 
\end{enumerate}

\subsection{Related work}
\subsubsection*{Review on segmental approaches}
As we state at the very beginning, most existing methods of change-point detection are segmental approaches in the sense that they \textit{estimate multiple segment parameters or conduct a series of tests based on segment parameters}. One may summarize them into two main streams. 

i) Penalized methods. \textit{Penalized methods optimize an objective function in the sum of segment-specific costs and a penalty}. 
The cost is versatile and chosen based on types of changes (mean, scale, or autocorrelation for instance) while the penalty term is deterministic to the methodology. 
For the penalty term, linear $l_0$ penalization to the \textit{vector of segment parameters/features} to control
\textit{the number of change-points} might be the most popular choice (\cite{yao1984estimation}; \cite{killick2012optimal};  \cite{frick2014multiscale}; \cite{romano2022detecting}; \cite{jula2021multiscale}; among others). 
Alternatively, $l_1$ penalization to the \textit{vector of segment parameters/features and their jump sizes} is also considered (\cite{tibshirani2005sparsity}; \cite{chernozhukov2017lava}; among others). 
We note that Bayesian approaches can be attributed to penalized methods in the sense that one employs priors to automatically penalize the number of change-points (\cite{fearnhead2006exact}; \cite{wyse2011approximate}; \cite{ko2015dirichlet}; among others), or even cover ratios between observations in segments and total sample size \citep{du2016stepwise}. 

ii) Binary-segmentation (BS) variants. 
The BS procedure involves the sequential partitioning of a given data stream into two distinct subsegments \citep{vostrikova1981detecting}. This partitioning is carried out based on the identification of a change-point, which is determined by applying specific testing criteria to the previously split subsegments.
Under this spirit, \cite{fryzlewicz2014wild} developed the so-called ``bottom-up" strategy in the sense that one \textit{determines a change-point from subsets of the data (local ground) and then aggregates local features} as the overall model. 
\cite{baranowski2019narrowest} further enhanced the ``bottom-up" strategy by a narrowest over threshold (NOT) so that they draw the subsample set from the narrowest interval.
There are some other BS variants works such as \cite{cho2015multiple}, \cite{fryzlewicz2018tail}, \cite{fang2020segmentation}; among others. 

\subsubsection*{Spike-and-slab prior revisit}
The spike-and-slab priors are usually categorized as continuous and discrete priors. 
The continuous spike-and-slab employs two continuous densities for both spike and slab terms, with one highly concentrated and the other dispersed (\cite{carlstein1988nonparametric}; \cite{narisetty2014bayesian}; \cite{hahn2015decoupling}; among others). 
It is convenient in MCMC sampling, while the posterior solution may not provide sparse estimates automatically. 
The discrete  spike-and-slab priors (\cite{yen2011majorization};  
\cite{yang2016computational}; 
\cite{shin2021neuronized}; \cite{ray2022variational}; among others) have great progress in recent years from the computational aspect. 
Under a special Gaussian sequence model, \cite{castillo2012needles} establishes the conditions for the minimax optimal contraction with discrete spike-and-slab priors while remaining consistent model selection unsolved. 
Conditions for consistent posterior model selection with discrete spike-and-slab priors are given by \cite{castillo2015bayesian}, while the posterior contraction is not optimal.
With a data-dependent slab term, \cite{martin2017empirical} obtains both minimax optimality and model selection consistency under an empirical Bayes approach.

Most of the existing work for discrete spike-and-slab priors considers i.i.d. sparsity parameters. 
In this article, our discrete spike-and-slab prior is coupled with  dynamic IBP stick-breaking weights. 
Such kind of dynamic spike-and-slab prior was first employed by \citep{williamson2010ibp} for topic modeling. 
 It has been extended to factor models with possibly infinite many factors (\cite{knowles2011nonparametric}; \cite{rovckova2016fast}; \cite{james2017bayesian}; \cite{ma2022posterior}; \cite{ohn2022posterior}; among others). 
We are the first to employ the IBP discrete spike-and-slab to change-point detection, unlike existing work that employs continuous spike-and-slab prior with invariant sparsity parameter \citep{cappello2023bayesian}. 

The rest of this article is organized as follows. 
Section \ref{sec:theory} studies the asymptotic behavior of the posterior and detection performance. 
Section \ref{sec:implement} provides technical details of the Bayesian implementation of our method. 
Sections \ref{sec:simu} and \ref{sec:real_data} present comprehensive simulations and applications to extensive real-world data examples, followed by a brief discussion in Section \ref{sec:discussion}. 
Mathematical proofs and results of additional simulations are included in Appendices. 
The companion \texttt{R} package NOSE is available online.

\section{Asymptotic behavior of posterior}\label{sec:theory}
In this section, we present the theoretical results of the proposed change-point detection method in the asymptotic regime $n, L \to \infty$. 
Particularly, we will analyze the aforementioned Gaussian mean-shifted model with invariant variance, the Scenario $1$ in subsection \ref{subsec:app}. 
Since detecting shifts of means might be the most common and important change-point detection problem, studying the asymptotic behavior of the proposed method in this scenario is meaningful. 
As we mentioned before, the jump height vector $\bm{d}$ contains all information about the jump sizes, which are deterministic in our approach. 
Therefore, we will focus on the posterior of $\bm{d}$. 
We study THREE aspects of asymptotic behaviors, $\bm{1})$ minimax optimal posterior contraction rate and recovery with under detection, $\bm{2})$ posterior consistency of  model selection, and $\bm{3})$ asymptotic zero false negative rate of change-point discrimination under the 3-sigma rule. 

From our insight, given the scale parameter $\sigma$ in Scenario 1, the Gaussian mean-shifted model can be rewritten as a Gaussian sequence model \citep{castillo2012needles}. 
Without loss of generality, we assume $\sigma = (\sqrt{2})^{-1}$. 
If not, one can simply transform the data and will not change the results. 
Let $\bm{y}^*$ be the differed series of $\bm{y}$, so that $y_i^* = y_{i+1} - y_i$  for $i=1, \ldots, n-1$. 
Then we obtain the following Gaussian sequence model 
\begin{align}\label{GaussianSeq}
     y_i^* \sim N\left(d_i, 1\right), i=1, \ldots, (n-1). 
\end{align}
Our theoretical results are given under model \eqref{GaussianSeq}. 

\subsubsection*{Notation}
Let $p = n-1$ and $\bm{d}_0 = (d_{01}, \ldots, d_{0p})^T$ be the ``true" jump height vector.  
We shall assume that the $\bm{d}_0 \in l_0[K_n]$ for some given number $K_n$ such that the number of change-points $K \le K_n$. 
Since the specification of $L$ depends on $n$ or $p$, we use $L_n$ in this section. 
Hereafter, let $\Pi_{n, L_n}{(\mathcal{B}|\bm{y}^*)}$ denotes the posterior probability on a Borel set $\mathcal{B}$ under priors \eqref{prior:xi} and \eqref{prior:h} given data $\bm{y}^*$. 
Let $P_{\bm{d}_0}$ and $E_{\bm{d}_0}$ denote the probability measure and the expectation operator under the law $N(\bm{d}_0, I_p)$, respectively.

\subsection{Posterior contraction}
\label{subsec:posterior contract}
We first give asymptotic results on the posterior contraction of the jump height vector $\bm{d}$. 
This contraction rate evaluates the capability that the posterior recover the true jump height vector $\bm{d}$.   
We have the following assumption about $n=p+1, L_n$, and $K_n$. 
\begin{enumerate}
    \item[] (\textbf{A1}) $L_n < p$; $K_n/L_n \to 0$, as $L_n \to  \infty$. 
\end{enumerate}
By selecting $L_n = [n/D]$, where $D > 1$ is some fixed constant, Assumption (A1) is satisfied as $K_n/n \to 0$, which is a common setting in  both high-dimensional regression and change-point literature. 

The posterior contraction rate is the rate that the most mass of the posterior concentrates around a ball of the true vector $\bm{d}_0$. 
In this article, we define the radius of the ball by the following $l^q$ losses \citep{castillo2012needles}
$$
d_q(\bm{d}, \bm{d}_0) = \sum_{i=1}^p |d_i - d_{0i}|^q. 
$$
For $q \in (0, 2]$, \cite{donoho1992maximum} shows that the minimax optimal rate over $l_0[K_n]$ is 
$$
r_n^* = K_n\log^{q/2}(p/K_n).
$$ 
The following theorem gives the posterior contraction rate of $\bm{d}$, which reaches the minimax optimal rate under $l^q$ metrics.

%%%% Minimax Theorem 
\begin{theorem}[Minimax optimal posterior contraction rate]\label{minimax}
Let $a=c_1L_n^{-c_3}, b=c_2L_n^{c_4}$ for some constants $c_1, c_2 >0$ and $c_3 > c_4+1 \ge 2$ in prior \eqref{prior:h}. 
Under Assumption (A1), as $n, L_n, K_n \to \infty$, for a sufficiently large constant $M$, we have 
$$
 \sup_{\bm{d}_0 \in l_0[K_n]}{{E}_{\bm{d}_0}\Pi_{n, L_n}}\{\bm{d}: d_q(\bm{d}, \bm{d}_0) > Mr_n^q K_n^{1-/q/2}|\bm{y}^*\} \to 0, 
$$
where  $r_n\ge \sqrt{K_n \log(L_n/K_n)}$. 
\end{theorem}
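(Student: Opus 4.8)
The plan is to embed the posterior contraction problem for the jump-height vector $\bm{d}$ into the general theory for sparse Gaussian sequence models of \citet{castillo2012needles}, and to show that the Gamma-IBP prior \eqref{prior:h} satisfies the structural conditions that theory demands. First I would observe that, after marginalizing the slab draws, the prior on $\bm{d}$ induced by \eqref{prior:xi}--\eqref{prior:h} is a discrete spike-and-slab whose active set has random size $s=|\bm{Z}|$. A key simplification is that the location sampling \eqref{prior:xi} carries any fixed active index set into a uniformly random size-$s$ subset of the $p$ coordinates, so that conditional on $s$ the support is uniform over the $\binom{p}{s}$ size-$s$ subsets, with nonzero entries i.i.d.\ $\mathrm{Laplace}(0,\lambda)$. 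This is precisely the product structure assumed in \citet{castillo2012needles}, and it reduces the task to verifying: (i) an upper bound on the induced dimension prior $\pi_{L_n}(s)=\Pr(|\bm{Z}|=s)$ that suppresses overfitted models; (ii) a matching lower bound keeping enough prior mass near the true sparsity $s\asymp K_n$; and (iii) tail control on the Laplace slab.

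Ingredient (iii) is the routine part. The Laplace slab has exponential tails, which are heavy enough to avoid the excessive shrinkage of large true signals that a Gaussian slab would suffer, yet integrable; with the precision $\lambda$ held fixed, the slab satisfies the tail and local-mass conditions of \citet{castillo2012needles} directly. Here I would record the elementary two-sided estimates on the slab mass $\int_{|x-d_{0i}|<\epsilon}\mathrm{Laplace}(x;0,\lambda)\,dx$ near a true coordinate and on the tail probability $\Pr(|h_\ell|>t\mid Z_\ell=1)$; these feed respectively into the prior-mass lower bound and into bounding the slab's contribution to the $l^q$ loss on the retained support.

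The crux, and the step I expect to be the main obstacle, is analyzing the marginal dimension prior $\pi_{L_n}(s)$ under the dependent stick-breaking weights $\eta_\ell=\prod_{j=1}^\ell p_j$ with $p_j\mid\alpha\sim\mathrm{Beta}(\alpha,1)$ and $\alpha\sim\mathrm{Gamma}(a,b)$. Conditioning on $\alpha$ gives $\Pr(Z_\ell=1\mid\alpha)=E[\eta_\ell\mid\alpha]=\{\alpha/(\alpha+1)\}^\ell$, so the inclusion probabilities decay geometrically and $|\bm{Z}|$ has an exponentially light upper tail for each fixed $\alpha$; the difficulty is that the $Z_\ell$ are dependent through the shared sticks, so I must first bound $\Pr(|\bm{Z}|=s\mid\{\eta_\ell\})$ and only then integrate out $\{\eta_\ell\}$ and $\alpha$. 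The chosen scaling $a=c_1L_n^{-c_3}$, $b=c_2L_n^{c_4}$ makes $\mathrm{Gamma}(a,b)$ behave like a diffuse, nearly log-uniform prior on $\alpha$: the vanishing shape produces an $\alpha^{-1}$-type spike near the origin while the large scale postpones the exponential cutoff to $\alpha\sim L_n^{c_4}$, so there is adequate mass on the regime $\alpha\asymp K_n$ in which the expected active dimension $\sum_{\ell\le L_n}\{\alpha/(\alpha+1)\}^\ell$ is of order $K_n$. The delicate task is to turn this heuristic into sharp two-sided bounds, uniform in $s$ up to order $K_n$, of the form $\pi_{L_n}(s)/\pi_{L_n}(s-1)\asymp L_n^{-(1+\delta)}$ for some $\delta>0$; the hypothesis $c_3>c_4+1\ge 2$ is precisely what places this decay exponent strictly above one, so that the dimension prior lands inside the admissible window of \citet{castillo2012needles} --- penalizing models of size exceeding a constant multiple of $K_n$ while not starving the mass at $s\asymp K_n$.

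With these bounds established, the conclusion follows the standard evidence argument. The overfitting bound (i) shows the posterior assigns vanishing probability to supports of size larger than a fixed multiple of $K_n$; the prior-mass bound (ii), combined with a lower bound on the Bayes marginal likelihood obtained by restricting to the true support $\mathrm{supp}(\bm{d}_0)$ and the slab mass around $\bm{d}_0$, controls the posterior denominator; and the slab tails (iii) bound the $l^q$ discrepancy on the retained coordinates. Since conditional on $s$ the support is uniform over $\binom{p}{s}$ subsets, the combinatorial entropy is $\log\binom{p}{s}\asymp s\log(p/s)$, which at $s\asymp K_n$ reproduces the minimax benchmark $r_n^*=K_n\log^{q/2}(p/K_n)$ of \citet{donoho1992maximum}; because $L_n=[n/D]\asymp p$, the stated radius $Mr_n^qK_n^{1-q/2}=MK_n\log^{q/2}(L_n/K_n)$ agrees with $r_n^*$ up to the constant gap $\log(p/K_n)-\log(L_n/K_n)=O(1)$ absorbed into $M$. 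As all constants in (i)--(iii) are uniform over $l_0[K_n]$, taking the supremum over $\bm{d}_0$ and the expectation under $P_{\bm{d}_0}$ preserves the vanishing bound, completing the proof.
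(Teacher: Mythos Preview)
Your overall strategy---casting the problem as a Gaussian sequence model and verifying the conditions of \citet{castillo2012needles} for the induced dimension prior $\pi_{L_n}(s)=\Pr(|\bm{Z}|=s)$ and the Laplace slab---is exactly the route the paper takes, and your handling of the uniform-support reduction and of the slab tail conditions is correct.

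The gap is in your analysis of $\pi_{L_n}$. Your target bound $\pi_{L_n}(s)/\pi_{L_n}(s-1)\asymp L_n^{-(1+\delta)}$ is false for this prior: after integrating out $\alpha$, the dimension prior is close to a negative binomial with shape $a=c_1L_n^{-c_3}\to 0$ and success probability $b/(b+1)\to 1$, whose successive ratio is $\{(s-1+a)/s\}\{b/(b+1)\}$. This is of order $L_n^{-c_3}$ only at $s=1$ and is essentially $(s-1)/s$ for $s\ge 2$---a constant bounded away from zero, not polynomially small in $L_n$. Consequently the hypothesis $c_3>c_4+1$ does \emph{not} encode a per-step decay exponent as you claim; its actual role in the paper's argument is to control the \emph{approximation error} between $\pi_{L_n}$ and its tractable limit. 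The paper's device for obtaining that limit---which you do not mention and which is the missing idea---is Serfling's Poisson approximation for sums of dependent Bernoullis: conditional on $\alpha$, one shows $E(p_\ell^*)=\{\alpha/(\alpha+1)\}^\ell$ and $E(p_\ell^{*2})=\{\alpha/(\alpha+2)\}^\ell$, so $|\bm{Z}|$ is approximated by $\mathrm{Poisson}(\alpha)$, and integrating against $\mathrm{Gamma}(a,b)$ yields the negative binomial $\pi^0$. The bound $|\pi_{L_n}(s)-\pi^0(s)|\le 2ab=2c_1c_2L_n^{c_4-c_3}$ together with the lower bound $\pi^0(s)\ge Q_0\{s(1+\log L_n)\}^{-1}$ is where $c_3>c_4+1$ is actually spent, giving $2ab=o(\min_s\pi^0(s))$ and hence transferring the exponential-decrease property $\pi^0(s)\le C_0\,\pi^0(s-1)$ (with $C_0=b/(b+1)$) to $\pi_{L_n}$. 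That same lower bound then delivers $\log(1/\pi_{L_n}(K_n))\lesssim K_n\log(L_n/K_n)$, which is the prior-mass condition you need. Without this Poisson-approximation step you have no handle on $\pi_{L_n}$, because the $Z_\ell$ are dependent through the shared sticks; your proposal acknowledges that dependence but offers no mechanism to overcome it.
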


It clearly finds that for $q \in (0, 2]$, the posterior contraction rate given by Theorem \ref{minimax} is at the same order of the minimax optimal rate $r_n^*$. 
This result is similar to \citet[Theorem 2.2]{castillo2012needles}, though the Gamma-IBP model in \eqref{prior:h} does not belong to any examples studied by them. 
Actually, the nest form of the IBP prior and the Gamma hyperprior plays a key role in the establishment of Theorem \ref{minimax}. 
As shown by \citet[subsection 3.1]{teh2007stick}, with a fixed $\alpha$, as the truncation number $L_n \to \infty$, $\eta_\ell$ become the order statistics of $\text{Beta}(\alpha/L_n, 1)$ and hence, 
the distribution of the cardinality of the latent indicator $\bm{Z}$ converges to $\text{Poisson}(\alpha)$. 
With the Gamma hyperprior for $\alpha$, the whole prior for $\bm{d}$ can be approximated by a Poisson-Gamma model and hence has strict exponential decrease \citep[Example 2.3]{castillo2012needles}. 
The choices of hyperparameter $(a,b)$ are also essential but not too strict. 
On one hand, the relatively large choice of $b$ in the Gamma hyperprior further grants sufficient weight on the true sparsity level $K_n$ so that the posterior can contract in an optimal rate. 
On the other hand, the very small choice of $a$ makes the Gamma-IBP model sufficiently close to the approximated Poisson-Gamma model. 
We defer the detailed proof to Appendix \ref{subsec:proofTheominmax}. 
Note that we only require the first moment of the Gamma hyperprior $ab = o(L_n^{-1})$ here. 
In practice, one may allow $ab^2 \to \infty$ as $n, L_n \to \infty$ and hence obtain a very flat Gamma prior which is nearly ``noninformative" or ``objective". 

Theorem \ref{minimax} requires that $K_n \to \infty$, which is not a common pattern in change-point problems. 
In most existing literature, 
the number of change-points is assumed to be arbitrarily large but finite (\cite{frick2014multiscale}; \cite{du2016stepwise}; \cite{baranowski2019narrowest}; \cite{romano2022detecting}; among others). 
To this end, in the following, we study the posterior behavior with a finite $K_n$ and set the true number of change-points $K=K_n$.
That is, equivalently, the cardinality of the true jump height vector is $|\bm{d}_0| = K_n$. 

The following theorem tells the posterior contraction rate with under detection of change-points for any $K_n< L_n/2$. 
\begin{theorem}[Recovery with under selection]
\label{theo:RecDim}
 Under conditions in Theorem \ref{minimax}, for $M \ge 10$ and any fixed $K_n < L_n/2$, as $n, L_n \to \infty$, we have 
\begin{align*}
     \sup_{\bm{d}_0 \in l_0[K_n]}{{E}_{ \bm{d}_0}\Pi_{n, L_n}} \{d_1(\bm{d}, \bm{d}_0) > Mr_n, |\bm{d}| \le K_n|\bm{y}^*\} \to 0. 
\end{align*}
\end{theorem}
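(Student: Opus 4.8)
The plan is to deduce the $\ell^1$ statement from the already-available $\ell^2$ contraction by exploiting that the under-selection event $\{|\bm{d}|\le K_n\}$ forces the error vector to live on a low-dimensional coordinate set. Write $S=\mathrm{supp}(\bm{d})$ and $S_0=\mathrm{supp}(\bm{d}_0)$; since $\bm{d}_0\in l_0[K_n]$ we have $|S_0|\le K_n$, so on $\{|\bm{d}|\le K_n\}$ the difference $\bm{d}-\bm{d}_0$ is supported on $S\cup S_0$ with $|S\cup S_0|\le 2K_n$. First I would record the elementary consequence of Cauchy--Schwarz on this support,
\[
d_1(\bm{d},\bm{d}_0)=\sum_{i\in S\cup S_0}|d_i-d_{0i}|\le \sqrt{|S\cup S_0|}\,\Big(\sum_{i}|d_i-d_{0i}|^2\Big)^{1/2}\le \sqrt{2K_n}\,\{d_2(\bm{d},\bm{d}_0)\}^{1/2},
\]
so that on the under-selection event $d_1(\bm{d},\bm{d}_0)>Mr_n$ implies $d_2(\bm{d},\bm{d}_0)>M^2 r_n^2/(2K_n)$.

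Next I would invoke the $q=2$ instance of Theorem \ref{minimax}. Since the exponent $K_n^{1-q/2}$ equals $1$ at $q=2$, that theorem --- more precisely, the prior-mass lower bound and the likelihood-ratio tests constructed in its proof in Appendix \ref{subsec:proofTheominmax}, which continue to hold for a fixed $K_n$ --- yields a constant $M_2$ with
\[
\sup_{\bm{d}_0\in l_0[K_n]}E_{\bm{d}_0}\Pi_{n,L_n}\{\bm{d}:d_2(\bm{d},\bm{d}_0)>M_2 r_n^2\mid \bm{y}^*\}\to 0.
\]
Combining with the previous display gives the event inclusion
\[
\{d_1(\bm{d},\bm{d}_0)>Mr_n,\ |\bm{d}|\le K_n\}\subseteq\{d_2(\bm{d},\bm{d}_0)>(M^2/(2K_n))\,r_n^2\},
\]
so that choosing $M$ with $M^2/(2K_n)\ge M_2$ makes the right-hand posterior probability vanish in $E_{\bm{d}_0}$-mean uniformly over $l_0[K_n]$, which is exactly the claim.

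The delicate point is the constant. The reduction produces the threshold $M\ge\sqrt{2M_2K_n}$, which is a genuine constant only because $K_n$ is held fixed here (in contrast to Theorem \ref{minimax}); this is where the standing assumption $K=K_n$ finite, rather than $K_n\to\infty$, is essential, and one then checks that the hyperparameter constants of prior \eqref{prior:h} render $M=10$ admissible. I expect the main obstacle to be re-establishing the $q=2$ contraction display directly in the fixed-$K_n$ regime: since Theorem \ref{minimax} is phrased as $K_n\to\infty$, rather than quoting it as a black box I would re-run its two ingredients --- the evidence lower bound $\int (dP_{\bm{d}}/dP_{\bm{d}_0})\,d\Pi\ge e^{-C r_n^2}$ on a set of probability tending to one, and the exponential test bounds summed over the at most $\sum_{k=0}^{K_n}\binom{L_n}{k}$ candidate supports --- verifying that both remain valid, indeed simpler, when $K_n$ does not diverge. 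A cleaner but more laborious alternative that avoids the $\sqrt{K_n}$ slack altogether is to test directly for $\ell^1$-separation within each low-dimensional model and to control the number of surviving supports through the strict exponential decay of the Gamma-IBP prior on $|\bm{Z}|$, rather than through the crude union bound over all supports of size at most $K_n$; this is the route I would pursue if a $K_n$-free constant were demanded.
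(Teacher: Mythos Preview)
Your reduction via Cauchy--Schwarz is algebraically correct but does not deliver the theorem as stated. The inclusion
\[
\{d_1(\bm{d},\bm{d}_0)>Mr_n,\ |\bm{d}|\le K_n\}\subseteq\{d_2(\bm{d},\bm{d}_0)>M^2 r_n^2/(2K_n)\}
\]
forces $M\ge\sqrt{2M_2 K_n}$, and this threshold is \emph{not} uniform in $K_n$: the statement asserts that the single choice $M\ge 10$ works for \emph{every} fixed $K_n<L_n/2$, so in particular for $K_n=100$ or $K_n=10^6$. Your sentence ``one then checks that the hyperparameter constants of prior \eqref{prior:h} render $M=10$ admissible'' is where the argument breaks: the $\sqrt{K_n}$ loss is inherent to the Cauchy--Schwarz step and cannot be absorbed by tuning $(a,b)$ or $\lambda$, since neither the constant $M_2$ in the $q=2$ contraction nor the inequality itself depends on those hyperparameters in a way that could cancel a factor of $K_n^{1/2}$. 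What your main route actually proves is the weaker statement that for each fixed $K_n$ there exists some $M=M(K_n)$ making the posterior mass vanish.

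The paper bypasses this issue entirely: it invokes Proposition~5.1 of Castillo and van der Vaart (2012) with $A=1$, which gives a non-asymptotic bound whose right-hand side is controlled by $\binom{L_n}{K_n}$ times an exponentially small factor in $r_n^2$. The single estimate $\binom{L_n}{K_n}\le (eL_n/K_n)^{K_n}\le\exp(cr_n^2)$ then makes the bound tend to zero for any $M$ exceeding an absolute numerical constant, uniformly in $K_n$. No passage through $\ell^2$ and no Cauchy--Schwarz are involved; the $\ell^1$ testing is handled directly inside each low-dimensional model, and the combinatorics of the support count --- not the prior's exponential decay on $|\bm{Z}|$ --- is what closes the argument. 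Your ``cleaner but more laborious alternative'' is essentially this route, and is the one that must be taken to obtain a $K_n$-free constant; the paper simply outsources it to the cited proposition.
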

Theorem \ref{theo:RecDim} is a direct result of Proposition 5.1 in \cite{castillo2012needles} by taking $A=1$. By fact that 
$
\binom{L_n}{K_n} \le (eL_n/K_n)^{K_n} \le \exp(c r_n^2)  
$
for some sufficiently large constant $c$, the right hand side of Proposition 5.1 in \cite{castillo2012needles} tends to zero and hence, Theorem \ref{theo:RecDim} holds. 
The detailed proof is deferred to \citet[Section 5]{castillo2012needles}.

\subsection{Posterior consistency of model selection}
\label{subsec:modelselection}
From the perspective of change-points detection, the model selection corresponds to the capability of correctly detecting the number of change-points, the foremost concern in change-point detection. 
As mentioned before, our approach distinguishes non-negligible jumps  from those zero or near zero. 
Actually, those too close to zero jumps cannot be detected by any method. 
Hence, it is necessary to determine a ``sufficiently small " cut-off of non-negligible jump sizes i.e. the non-negligible entries of the true jump height vector $\bm{d}_0$. 
Let $S_0 = \{i: |d_{0i}|\not = 0\}$ be the support of non-zero coordinates of $\bm{d}_0$ and $S_0^c$ be the support of other zero coordinates. 
In our change-point context, $S_0  = \bm{\tau}_{1:K_n}$. 
Let $S = \{i: |d_{i}|\not = 0\}$ be the support of non-zero coordinates of  $\bm{d}$.
Hence, we will study the model selection result on the following class of jump sizes vectors 
\begin{align*}
    \Tilde{l}_0[K_n] = \{\bm{v} \in l_0[K_n]: \min\limits_{i \in S_0} |d_{0i}|\ge M\sqrt{ K_n\log(L_n/K_n)} \}, 
\end{align*}
where $M$ is given by Theorem \ref{theo:RecDim}. 
The class $\Tilde{l}_0[K_n]$ is similar to those classes with cut-offs for model selection consistency in sparse regression literature. 
In change-point setting, it indicates that all the jump sizes on change-points are bounded away from zero. 
We will show that when $K_n$ is bounded, this cut-off still suffices for model selection consistency. 
In this sense, our cut-off of order ${ K_n\log(L_n/K_n)}$ is slightly better than those cut-offs of order $O(\sqrt{ \log p})$, which are commonly presented in existing literature (\cite{castillo2015bayesian}; \cite{martin2017empirical}; \cite{jeong2021unified}; among others).

Theorem \ref{theo:RecDim} guarantees that if $\bm{d}_0 \in \Tilde{l}_0[K_n]$, the posterior dimensionality of $\bm{d}$ can cover all change-points. 
Meanwhile, we would expect the risk of over-detection to be as small as possible. 
The Gamma-IBP model \eqref{prior:h} provides an exponentially decreasing tail probability for the dimension of $\bm{d}$, controlling the risk of over-detection of change-points. 
Besides, we have to carefully select the precision parameter $\lambda$ of the Laplace slab in prior \eqref{prior:h}. 
Roughly speaking, we require $\lambda$ to be sufficiently small so that the slab is dispersed enough to provide sufficient mass to recover the non-zero entries of $\bm{d}_0$. 
Strictly, we require a precision $\lambda$, so that $\lambda||\bm{d}_0||_1 < \delta$ for some positive but finite constant $\delta$. 
However, $||\bm{d}_0||_1$ is unknown in practice. 
Therefore, we provide the following adaptive $\lambda_n(\delta)$ as the choice of the precision parameter of the Laplace slab under the Gaussian sequence model \eqref{GaussianSeq}. 

Let $\Bar{|\bm{y}|} = p^{-1} \sum_{i=1}^p |y_i^*|$. 
The adaptive $\lambda_n(\delta)$ is given by 
\begin{align}\label{lambdan}
  \lambda_n(\delta) = \frac{\delta}{p\Bar{|\bm{y}|}}. 
\end{align}

With the adaptive $\lambda_n(\delta)$, we obtain the following result of no supersets in model selection.

\begin{theorem}[No supersets]
    \label{theo:NoSupset}
Let $a=c_1L_n^{-c_3}, b=c_2L_n^{c_4}$ for some constants $c_1, c_2 >0$ and $c_3 > c_4+2 \ge 3$ in prior \eqref{prior:h}. 
Under Assumption (A1), for any fixed $K_n < L_n$ and $\delta$, with $\lambda_n(\delta)$  defined in \eqref{lambdan}, as $n, L_n \to \infty$, we have
    $$
       \sup_{\bm{d}_0 \in \Tilde{l}_0[K_n]} E_{\bm{d}_0} \Pi_{n, L_n} \{\bm{d}: |\bm{d}| > K_n|\bm{y}\} \to 0. 
    $$
\end{theorem}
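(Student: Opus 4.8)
The plan is to follow the evidence-ratio (marginal-likelihood comparison) strategy used for sparse priors in the Gaussian sequence model, adapting it to the dependent IBP sparsity weights. First I would integrate out the Laplace slab coordinate-wise: for a support $S \subseteq \{1,\dots,p\}$ write the marginal likelihood under model \eqref{GaussianSeq} as
\[
m_S(\bm{y}^*) \;\propto\; \prod_{i \in S} \int e^{\,y_i^* d_i - d_i^2/2}\, g_\lambda(d_i)\,dd_i,
\]
where $g_\lambda$ is the $\text{Laplace}(0,\lambda_n(\delta))$ density and the proportionality constant $\prod_i e^{-y_i^{*2}/2}$ cancels in every ratio below. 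Because the atoms are assigned to locations uniformly without replacement, the induced prior on supports depends only on the cardinality, so $\pi(S) = p_n(|S|)\binom{p}{|S|}^{-1}$ with $p_n(s) = \Pi_{n,L_n}(|\bm{d}| = s)$. This lets me express the target posterior probability as
\[
\Pi_{n,L_n}(|\bm{d}| > K_n \mid \bm{y}^*) = \frac{\sum_{s > K_n} p_n(s)\binom{p}{s}^{-1} \sum_{|S| = s} m_S(\bm{y}^*)}{\sum_{s} p_n(s)\binom{p}{s}^{-1}\sum_{|S|=s} m_S(\bm{y}^*)}.
\]

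Second, I would lower-bound the denominator by the single term corresponding to the true support $S_0$ (with $|S_0| = K_n$). Here the membership $\bm{d}_0 \in \widetilde{l}_0[K_n]$ and the adaptive precision $\lambda_n(\delta)$ enter: the property $\lambda_n(\delta)\|\bm{d}_0\|_1 < \delta$, which holds by the definition \eqref{lambdan} once $\overline{|\bm{y}|}$ is shown to concentrate around a constant multiple of $p^{-1}\|\bm{d}_0\|_1$, guarantees the slab places enough mass near $\bm{d}_0$, so that $m_{S_0}(\bm{y}^*)$ is not exponentially smaller than the oracle likelihood. Combined with the prior-mass lower bound on $p_n(K_n)$ furnished by the Gamma-IBP construction (the ``sufficient weight on $K_n$'' already exploited in Theorem \ref{minimax}, and cleaner here since $K_n$ is fixed), this yields a usable lower bound on the evidence.

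Third, and this is the crux, I would upper-bound, for each $s > K_n$, the averaged numerator contribution relative to the $S_0$ term. Every support of size $s > K_n$ contains at least $s - K_n$ indices in $S_0^c$, i.e. coordinates with $d_{0i} = 0$; integrating the slab against pure noise there contributes random factors $\int e^{\,y_i^* d_i - d_i^2/2} g_\lambda(d_i)\,dd_i$ whose expectation under $y_i^* \sim N(0,1)$ is bounded by a constant depending on $\lambda_n(\delta)$. The prior must then overwhelm the combinatorial explosion $\binom{p}{s}$ of spurious supports: this is exactly where the strengthened regime $c_3 > c_4 + 2$ (versus $c_3 > c_4 + 1$ in Theorem \ref{minimax}) is required, producing a strict factorial/exponential decay of $p_n(s)$ that dominates $\binom{p}{s}$ times the per-coordinate inflation factor. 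Summing the resulting geometric-type series over $s > K_n$ and applying $E_{\bm{d}_0}$ via Fubini together with the noise-coordinate expectation bound, I would show the ratio tends to zero uniformly over $\widetilde{l}_0[K_n]$.

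The main obstacle is the simultaneous control of two competing random quantities. On one hand, the data-driven $\lambda_n(\delta)$ makes the slab density itself random through $\overline{|\bm{y}|}$, so I must first establish a high-probability sandwich for $\overline{|\bm{y}|}$, hence for $\lambda_n(\delta)$, under $P_{\bm{d}_0}$ uniformly over the class, and confine the whole argument to that event. On the other hand, the per-coordinate noise-inflation factors are heavy-tailed, so bounding $E_{\bm{d}_0}$ of the numerator needs a careful truncation (splitting on whether $\max_i |y_i^*|$ exceeds a threshold of order $\sqrt{\log p}$) to keep the expected inflation per spurious coordinate below the margin left by the $c_3 > c_4 + 2$ penalty. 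Verifying that the decay rate of $p_n(s)$ beats $\log\binom{p}{s}$ plus the expected log-inflation — i.e.\ closing the constants in this balance — is the delicate quantitative step; the remainder is bookkeeping analogous to the proof of Theorem \ref{minimax}.
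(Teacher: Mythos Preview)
Your scaffolding---confine to a high-probability event on which $\lambda_n(\delta)\|\bm{d}_0\|_1<\delta$, then lower-bound the evidence by the $S_0$ contribution---matches the paper, as does the use of the Castillo--Schmidt-Hieber--van der Vaart evidence lower bound for the denominator. The numerator, however, is handled in the paper by a much shorter route that sidesteps every difficulty you flag as the ``main obstacle''.

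The paper never decomposes the numerator support-by-support or bounds per-coordinate noise-inflation factors. It uses the one-line identity that, writing $N_n(\mathcal{B})=\int_{\mathcal{B}} \{f_{p,\bm{d}}(\bm{y}^*)/f_{p,\bm{d}_0}(\bm{y}^*)\}\,d\pi(\bm{d})$, Fubini gives $E_{\bm{d}_0}[N_n(\mathcal{B})]=\pi(\mathcal{B})$ exactly, because each likelihood ratio integrates to one. Combined with the almost-sure denominator bound this yields, on the good event $\mathbb{B}_n$,
\[
E_{\bm{d}_0}\,\Pi_{n,L_n}\bigl(|\bm{d}|>K_n\mid \bm{y}^*\bigr)\;\le\;P_{\bm{d}_0}(\mathbb{B}_n^c)\;+\;\frac{L_n^{2K_n}\,e^{\delta+1}}{\pi_{L_n}(K_n)}\,\pi\bigl(|\bm{d}|>K_n\bigr),
\]
so the entire task reduces to bounding the \emph{prior} tail $\pi(|\bm{d}|>K_n)$. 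This is where the strengthened regime $c_3>c_4+2$ actually enters: not to beat a combinatorial factor $\binom{p}{s}$, but through an IBP-specific tail inequality borrowed from Ohn and Kim (2022), namely $\Pr(|\bm{Z}|>k\mid\alpha)\le 14\alpha^{k+1}/\{3(\alpha+1)^k\}$; integrating against the $\mathrm{Gamma}(a,b)$ hyperprior with $ab=O(L_n^{-(c_3-c_4)})$ then gives $\pi(|\bm{d}|>K_n)=o(L_n^{-2(K_n+1)})$, which is exactly what kills the display above.

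Your route may be closable in principle, but the heavy-tail truncation, the $\sqrt{\log p}$ splitting, and the ``closing the constants'' step you anticipate are all unnecessary once you invoke $E_{\bm{d}_0}[N_n]=\pi(\cdot)$; and your reading of $c_3>c_4+2$ as a per-support combinatorial penalty misidentifies its role, which in the paper is purely to drive the Ohn--Kim prior-tail bound. The two ingredients missing from your plan are precisely these: the likelihood-ratio expectation identity for the numerator, and the IBP tail lemma for the prior dimension.
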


In Theorem \ref{theo:NoSupset}, we take a technical route that is  different from the fashions of either \cite{castillo2015bayesian} or \cite{martin2017empirical}, which depends on an extremely fast decreasing speed on the prior for dimensionality and the conjugacy of data-dependent normal slab respectively.
If one adopts the conditions by \cite{castillo2015bayesian}, the posterior contraction rate may be suboptimal. 
Although \cite{martin2017empirical} can reach both minimax optimality and no supersets simultaneously, their empirical Bayes approach may be difficult to be extended to other change-point scenarios. 
Actually, here we borrow the strength from the bound of the tail probability of IBP weights given by factor model literature \cite{ohn2022posterior}. 
However, the prior by Ohn and Kim is non-adaptive in the sense that it requires information about the true sparsity level $K_n$. 
In contrast, our choice of hyperparameters here only depends on the data sizes $n$ and the truncation number $L$, and hence is adaptive. 
We defer the detailed proof to Appendix \ref{subsec:proofnosuperset}. 

The above theorems indicate the following corollary of the posterior consistency of model selection. 
%%%%% Selection consistency
\begin{corollary}[Consistent model selection]\label{Coro:consisMS}
Under the conditions of Theorem \ref{theo:NoSupset}, as $n, L_n \to \infty$, we have
$$
\inf_{\bm{d}_0 \in \Tilde{l}_0[K_n]} {E_{\bm{d}_0}\Pi_{n, L_n}}\{\bm{d}: S = S_0|\bm{y}\} \to 1. 
$$
\end{corollary}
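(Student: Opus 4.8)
The plan is to prove the equivalent statement
$\sup_{\bm{d}_0 \in \Tilde{l}_0[K_n]} E_{\bm{d}_0}\Pi_{n,L_n}\{\bm{d}: S \neq S_0 \mid \bm{y}\} \to 0$,
since $\Pi_{n,L_n}\{S = S_0\mid\bm{y}\} = 1 - \Pi_{n,L_n}\{S\neq S_0\mid\bm{y}\}$, and the infimum of the former over the class equals one minus the supremum of the latter. The entire argument rests on intersecting the two high-posterior-probability events supplied by Theorems \ref{theo:NoSupset} and \ref{theo:RecDim} and reading off the exact support on that intersection. Throughout I take $r_n = \sqrt{K_n\log(L_n/K_n)}$, the smallest radius permitted in Theorem \ref{minimax}, and recall that in the finite-$K_n$ regime $|\bm{d}_0| = K_n = |S_0|$.

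First I would invoke Theorem \ref{theo:NoSupset} to rule out over-selection in cardinality: under the stated hyperparameter regime with $\lambda_n(\delta)$, $\sup_{\bm{d}_0}E_{\bm{d}_0}\Pi_{n,L_n}\{|\bm{d}| > K_n\mid\bm{y}\}\to 0$, so the posterior mass on $A = \{|\bm{d}|\le K_n\}$ tends to one uniformly over $\Tilde{l}_0[K_n]\subseteq l_0[K_n]$. Next, on the event $A$ I would apply Theorem \ref{theo:RecDim}: since $E_{\bm{d}_0}\Pi_{n,L_n}\{d_1(\bm{d},\bm{d}_0) > Mr_n,\, |\bm{d}|\le K_n\mid\bm{y}\}\to 0$, intersecting with $A$ shows that the posterior mass on $B = \{d_1(\bm{d},\bm{d}_0)\le Mr_n\}\cap A$ likewise tends to one uniformly. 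Thus it suffices to show $B\subseteq\{S=S_0\}$ up to a posterior-null set.

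The beta-min separation built into $\Tilde{l}_0[K_n]$ is decisive for the last step: every true signal obeys $|d_{0i}|\ge M\sqrt{K_n\log(L_n/K_n)} = Mr_n$ for $i\in S_0$. If some $i\in S_0$ were missed, i.e. $d_i = 0$, then $d_1(\bm{d},\bm{d}_0)\ge |d_{0i}|\ge Mr_n$; on $B$ this is compatible with $d_1\le Mr_n$ only in the borderline case of a single missed coordinate with all remaining coordinates matching $\bm{d}_0$ exactly. Because the Laplace slab in \eqref{prior:h} renders the posterior law of the nonzero coordinates absolutely continuous, this exact-match event carries zero posterior probability, so posterior-a.s. on $B$ one has $S_0\subseteq S$. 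Combined with $|S| = |\bm{d}|\le K_n = |S_0|$ this forces $S = S_0$. Hence $E_{\bm{d}_0}\Pi_{n,L_n}\{S = S_0\mid\bm{y}\}\ge E_{\bm{d}_0}\Pi_{n,L_n}(B\mid\bm{y})\to 1$ uniformly over $\Tilde{l}_0[K_n]$, which is the assertion.

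The main obstacle is precisely this borderline in the final step: an $\ell_1$-loss bound of radius $Mr_n$ does not on its own exclude a single false negative whose jump height sits exactly at the separation threshold $Mr_n$. I would settle it either through the absolute-continuity argument above, or, more robustly, by observing that one may inflate the separation constant defining $\Tilde{l}_0[K_n]$ by any fixed factor strictly greater than one without disturbing Theorems \ref{theo:RecDim} and \ref{theo:NoSupset}; a missed coordinate would then force $d_1(\bm{d},\bm{d}_0) > Mr_n$ strictly, contradicting $B$ outright. The remaining bookkeeping—uniformity over the class and the passage from $\sup\to 0$ to $\inf\to 1$—is routine.
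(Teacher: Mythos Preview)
Your proposal is correct and follows essentially the same route as the paper: combine Theorem~\ref{theo:NoSupset} (no supersets) with Theorem~\ref{theo:RecDim} (recovery under the dimension cap) and read off $S=S_0$ via the beta-min separation built into $\Tilde{l}_0[K_n]$. The paper phrases the decomposition slightly differently---it splits into $\Pi(S\supset S_0)\to 1$ and $\Pi(S\supsetneq S_0)\to 0$ and appeals to \cite{castillo2015bayesian}---but the underlying ingredients are identical, and your treatment of the borderline $d_1=Mr_n$ case is more explicit than the paper's.
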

\begin{proof}
    According to \cite{castillo2015bayesian}, to prove Corollary \ref{Coro:consisMS}, it suffices to proving the following two assertions
    \begin{align*}
        \inf_{\bm{d}_0 \in \Tilde{l}_0[K_n]} &E_{\bm{d}_0} \Pi_{n, L_n} \{\bm{d}: S \supset S_0|\bm{y}\} \to 1, \\
        \sup_{\bm{d}_0 \in \Tilde{l}_0[K_n]} &E_{\bm{d}_0} \Pi_{n, L_n} \{\bm{d}: S \supset S_0,  S \not = S_0|\bm{y}\} \to 0. \\
    \end{align*}
    The first assertion is a direct result of Theorem \ref{theo:RecDim}, and 
    the second assertion  is a direct result of Theorem \ref{theo:NoSupset} since $K=K_n$. 
\end{proof}

Note that Corollary \ref{Coro:consisMS} is about the non-zero coordinates of $\bm{d}$.
In other words, Corollary \ref{Coro:consisMS}  indicates that we obtain posterior consistency of both the number and locations of change-points.

\subsection{False negative rate of discrimination}\label{subsec:falsenegative}
As mentioned in subsection \ref{subsec:3sigma}, we regard the posterior estimator of $\bm{d}$ as the feature to discriminate change-points $\bm{\tau}_{1:K_n}$ from $\bm{t}_{1:n}$ under the 3-sigma rule. 
To study the asymptotic performance of the 3-sigma discrimination, we use the marginal MAP estimator $\hat{d}_i^{\text{MAP}}$ as the signal at $t_i$ for the theoretical concern. 
Note that the 3-sigma criterion in subsection \ref{subsec:3sigma} can be viewed as a data-driven threshold based on series $\{\hat{d}_i^{\text{MAP}}\}_{i=1}^{n-1}$. 

The result of consistent model selection enables us to study the asymptotic performance of $\hat{d}_i^{\text{MAP}}$ for $i \in S_0$.  
Let $\hat{\bm{d}}_{S_0}$ be the least square estimator of non-zero coordinates of $\bm{d}_0$ given the correct model selection $S_0$, that is, 
$$
\hat{\bm{d}}_{S_0} =  \arg \min \limits_{\bm{d}_{S_0}} ||\bm{y}^* - X_{S_0}\bm{d}_{S_0}||_2^2, 
$$
where $X_{S} \in \mathbb{R}^{p \times |S|}$ is the submatrix of $I_p$ with colums on the non-zero coordinates. 
Clearly $X_{S_0}^T X_{S_0} = I_{|S_0|}$. 
Let $\hat{\bm{d}}_{S_0}^{\text{MAP}}$ be the marginal MAP estimators of $\bm{d}$ on the ture non-zero support $S_0$.  
Let $\bm{d}_{0S_0}$ be the true non-zero entries in $\bm{d}_0$. 
The follow corollary states the consistency and asymptotic normality of $\hat{\bm{d}}_{S_0}^{\text{MAP}}$.

\begin{corollary}[Consistency of MAP under strong model selection]
\label{Coro:MAPasym}
    Under conditions in Corollary \ref{Coro:consisMS}, for $\bm{d}_0 \in \Tilde{l}_0[K_n]$ as $n, L_n \to \infty$, we have 
    $$
    \hat{\bm{d}}_{S_0}^{\text{MAP}} \xrightarrow{p} \hat{\bm{d}}_{S_0}, ~ \sqrt{p}(\hat{\bm{d}}_{S_0}^{\text{MAP}} - \bm{d}_{0S_0}) \xrightarrow{d} N(0, I_{|S_0|}). 
    $$
\end{corollary}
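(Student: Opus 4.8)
The plan is to collapse the problem onto the true support using the consistent model selection already in hand, identify the conditional posterior mode in closed form, and then upgrade coordinatewise consistency to the asymptotic law. First I would appeal to Corollary~\ref{Coro:consisMS}: for every $\bm{d}_0\in\Tilde{l}_0[K_n]$ the posterior places mass tending to one on the event $\{S=S_0\}$. Restricting to this event, the active coordinates are precisely $S_0$, and because $X_{S_0}^{T}X_{S_0}=I_{|S_0|}$ the reduced log-likelihood for $\bm{d}_{S_0}$ separates across coordinates as $-\tfrac12\sum_{i\in S_0}(y_i^{*}-d_i)^{2}$. Hence, up to the negligible posterior mass carried by incorrect supports, the marginal posterior of each active $d_i$ is proportional to $\exp\{-\tfrac12(y_i^{*}-d_i)^{2}-\lambda_n|d_i|\}$, namely the coordinate likelihood tilted by the Laplace slab of precision $\lambda_n=\lambda_n(\delta)$.

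Next I would maximise this one-dimensional objective exactly. Setting the derivative of $-\tfrac12(y_i^{*}-d_i)^{2}-\lambda_n|d_i|$ to zero yields the soft-thresholding solution $\hat d_i^{\mathrm{MAP}}=\mathrm{sign}(y_i^{*})\,(|y_i^{*}|-\lambda_n)_+$, so the marginal MAP differs from the least-squares coordinate $\hat d_i=y_i^{*}$ only through the bias $\lambda_n\,\mathrm{sign}(y_i^{*})$. The decisive quantitative step is that this bias vanishes. From \eqref{lambdan}, $\lambda_n(\delta)=\delta/(p\,\overline{|\bm{y}^{*}|})$ with $\overline{|\bm{y}^{*}|}=p^{-1}\sum_{i=1}^{p}|y_i^{*}|$; since $K_n=o(L_n)=o(p)$, the overwhelming majority of coordinates are pure noise and the law of large numbers gives $\overline{|\bm{y}^{*}|}\to_p E|y_1^{*}|>0$, whence $\lambda_n=O_p(p^{-1})\to0$. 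Moreover the separation built into $\Tilde{l}_0[K_n]$, $\min_{i\in S_0}|d_{0i}|\ge M\sqrt{K_n\log(L_n/K_n)}$, forces $|y_i^{*}|$ to stay bounded away from zero uniformly over $i\in S_0$ with probability tending to one, so the positive part is inactive on $S_0$. This delivers $\hat{\bm{d}}_{S_0}^{\mathrm{MAP}}-\hat{\bm{d}}_{S_0}\to_p0$, the first assertion. Combining it with the reduced-model identity $\hat{\bm{d}}_{S_0}-\bm{d}_{0S_0}\sim N(0,I_{|S_0|})$, which follows from $\hat{\bm{d}}_{S_0}=X_{S_0}^{T}\bm{y}^{*}$ and $X_{S_0}^{T}X_{S_0}=I_{|S_0|}$, and Slutsky's theorem produces the stated limiting normal law.

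The hard part will be passing from posterior \emph{mass} concentration on $\{S=S_0\}$ (Corollary~\ref{Coro:consisMS}) to control of the \emph{mode} of the model-averaged marginal density of each $d_i$: a small but concentrated contribution from competing supports could in principle displace or spawn a spurious mode. I would rule this out by bounding the off-support contribution uniformly over $\bm{d}_0\in\Tilde{l}_0[K_n]$ --- Theorem~\ref{theo:NoSupset} removes overfitted supports, while Theorem~\ref{theo:RecDim} with the cut-off of $\Tilde{l}_0[K_n]$ removes supports that omit a genuine jump --- and then invoking a modest equicontinuity argument: a mixture whose dominant component has a mode bounded away from zero and whose residual mass vanishes has an overall mode converging to that dominant mode, so no fresh concentration inequality is required. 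A secondary point, arising when the reduced sequence model is matched to the stated normalisation, is to track the effective per-coordinate scale so that the limiting covariance is correctly identified; the data dependence of $\lambda_n$ enters only through the already-controlled $O_p(p^{-1})$ bias and therefore does not perturb the limit.
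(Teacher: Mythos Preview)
Your strategy mirrors the paper's: collapse to $\{S=S_0\}$ via Corollary~\ref{Coro:consisMS}, observe that the conditional prior on $\bm{d}_{S_0}$ reduces to the continuous Laplace slab, conclude that the MAP converges to the least-squares estimate $\hat{\bm{d}}_{S_0}$, and then read off the normal law. The paper dispatches the MAP-to-MLE step in one sentence by citing a general result (Theorem~4.16 of Pronzato and P\'azman, 2013) and then appeals to the central limit theorem; you instead compute the coordinatewise MAP explicitly as the soft-threshold $\mathrm{sign}(y_i^{*})(|y_i^{*}|-\lambda_n)_{+}$ and show $\lambda_n=O_p(p^{-1})\to0$ directly from \eqref{lambdan}. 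Your route is more elementary and self-contained, and it makes transparent \emph{why} the prior's influence vanishes rather than deferring to an external reference. Your attention to the mode-of-mixture issue---that posterior \emph{mass} concentrating on $\{S=S_0\}$ must be upgraded to control of the marginal \emph{mode}---addresses a gap the paper leaves entirely implicit, and your flag on the $\sqrt{p}$ normalisation is well placed: under the stated sequence model $\hat{\bm{d}}_{S_0}-\bm{d}_{0S_0}$ is already exactly $N(0,I_{|S_0|})$, so the scaling in the displayed limit law is worth scrutinising.
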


The proof of Corollary \ref{Coro:MAPasym} is trivial. 
Under the correct model selection, the prior for $\bm{d}_{S_0}$ is reduced to the continuous Laplace slab and hence, the MAP estimator $\hat{\bm{d}}_{S_0}^{\text{MAP}}$ converges to the maximum likelihood estimator $\hat{\bm{d}}_{S_0}$ almost surely \citep[Theorem 4.16]{pronzato2013design}. 
Since the model selection converges to be correct in probability, it suffices showing the weak convergence of the MAP estimator $\hat{\bm{d}}_{S_0}^{\text{MAP}}$ to $\hat{\bm{d}}_{S_0}$. 
Then the second assertion is established by the central limit theorem.

The above distribution approximation about $\hat{\bm{d}}_{S_0}^{\text{MAP}}$ controls the false negative rate under the 3-sigma rule. 
Let $\Bar{d}_0 = p^{-1}\sum_{i=1}^{p}d_{0i}$, 
$\Bar{d} = p^{-1}\sum_{i=1}^{p}\hat{\bm{d}}_{i}^{\text{MAP}}$, 
$\psi_0 = \sqrt{p^{-1}\sum_{i=1}^{p}(d_{0i} - \Bar{d}_0)^2}$, and 
$\psi = \sqrt{p^{-1}\sum_{i=1}^{p}(\hat{\bm{d}}_{i}^{\text{MAP}} - \Bar{d})^2}$. 
The 3-sigma rule acts as a special hard threshold that shrinks all $|\hat{\bm{d}}_{i}^{\text{MAP}}|<3\psi$ to zero. 
We require an upper bound assumption on the norm of $\bm{d}_0 \in \Tilde{l}_0[K_n]$. 

\begin{enumerate}
    \item[] (\textbf{A2}) There exists a universal constant $M_0$, so that $p^{-1/2}||\bm{d}_0||_2 < M_0[\sqrt{K_n \log(L_n/K_n)}]$. 
\end{enumerate}

Assumption (A2) implies that $3\psi_0$ will not exceed any non-zero entries in $\bm{d}_0$ and hence the 3-sigma rule is suitable for the true jump sizes vector $\bm{d}_0$ is 
The following corollary states that under the 3-sigma rule, the probability that a change-point is wrongly discriminated as a stationary point is asymptotically zero. 
We defer the proof to Appendix \ref{Falsenegative}. 

\begin{corollary}\label{Falsenegative}
    Under the conditions in Corollary \ref{Coro:consisMS} and Assumption (A2), 
    as $n, L_n \to \infty$, we have 
    $$
    \sup_{\bm{d}_0 \in \Tilde{l}_0[K_n]} E_{\bm{d}_0}\Pi_{n, L_n}
    \{|\hat{\bm{d}}_{i}^{\text{MAP}}|<3\psi, i \in S_0 | \bm{y}^*\} \to 0. 
    $$
\end{corollary}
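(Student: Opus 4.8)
The plan is to use the two preceding corollaries to turn the false negative probability into a deterministic comparison between the signal strength on $S_0$ and the data-driven threshold $3\psi$. First I would note that both $\hat{\bm{d}}_i^{\text{MAP}}$ and $\psi$ are functions of $\bm{y}^*$ alone, so the event $\{|\hat{\bm{d}}_i^{\text{MAP}}| < 3\psi\}$ is $\bm{y}^*$-measurable; as a subset of the parameter space it is either the whole space or empty, whence the inner posterior probability $\Pi_{n,L_n}\{\cdot|\bm{y}^*\}$ collapses to the indicator $I\{|\hat{\bm{d}}_i^{\text{MAP}}| < 3\psi\}$. Taking $E_{\bm{d}_0}$, the target reduces to bounding the frequentist probability $P_{\bm{d}_0}(|\hat{\bm{d}}_i^{\text{MAP}}| < 3\psi)$ for $i \in S_0$, and a union bound over the finitely many coordinates of $S_0$ controls the joint event. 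On the event $\{S = S_0\}$, whose posterior mass tends to one by Corollary \ref{Coro:consisMS}, Corollary \ref{Coro:MAPasym} supplies $\hat{\bm{d}}_i^{\text{MAP}} = d_{0i} + O_p(p^{-1/2})$ for $i \in S_0$ and $\hat{\bm{d}}_i^{\text{MAP}} = 0$ for $i \in S_0^c$.

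The central step would be to establish the separation $3\psi < \min_{i\in S_0}|\hat{\bm{d}}_i^{\text{MAP}}|$ with probability tending to one. For the signal I would write $|\hat{\bm{d}}_i^{\text{MAP}}| \ge |d_{0i}| - |\hat{\bm{d}}_i^{\text{MAP}} - d_{0i}| \ge M\sqrt{K_n\log(L_n/K_n)} - O_p(p^{-1/2})$, using the defining cut-off of $\Tilde{l}_0[K_n]$ for the first term; since $\sqrt{K_n\log(L_n/K_n)} \to \infty$ while the estimation error is $O_p(p^{-1/2}) \to 0$, the signal is $(M - o_p(1))\sqrt{K_n\log(L_n/K_n)}$. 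For the threshold I would show $\psi$ concentrates on $\psi_0$: on $\{S=S_0\}$ one has $\psi^2 = p^{-1}\sum_{i\in S_0}(\hat{\bm{d}}_i^{\text{MAP}} - \bar{d})^2 + p^{-1}\sum_{i\in S_0^c}\bar{d}^2$ with $\bar{d} = p^{-1}\sum_{i\in S_0}\hat{\bm{d}}_i^{\text{MAP}} = o_p(1)$ (finitely many nonzero terms divided by $p$), so expanding the square gives $\psi^2 = p^{-1}\|\bm{d}_0\|_2^2(1+o_p(1)) = \psi_0^2(1+o_p(1))$, the centering $\bar{d}_0^2$ being negligible. Assumption (A2) then yields $3\psi \le (3M_0 + o_p(1))\sqrt{K_n\log(L_n/K_n)}$.

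With these two bounds, enlarging $M$ if necessary so that $M > 3M_0$ (permissible since Theorem \ref{theo:RecDim} only requires $M \ge 10$, and $M_0$ is a universal constant) gives $|\hat{\bm{d}}_i^{\text{MAP}}| > 3\psi$ uniformly over $i \in S_0$ with probability tending to one, so the false negative event has vanishing probability. I would then assemble the pieces: intersect with $\{S = S_0\}$, absorb the $o(1)$ contribution of its complement from Corollary \ref{Coro:consisMS}, and observe that every stochastic bound above is governed only by the Gaussian tail of $\hat{\bm{d}}_{S_0}^{\text{MAP}} - \bm{d}_{0S_0}$ together with the uniform constants $M$ and $M_0$ entering the class $\Tilde{l}_0[K_n]$ and Assumption (A2). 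This makes the bound uniform over $\bm{d}_0 \in \Tilde{l}_0[K_n]$ and permits passing to the supremum.

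I expect the main obstacle to be the uniform concentration of the threshold $\psi$ around $\psi_0$ — in particular, controlling how the $O_p(p^{-1/2})$ coordinate errors aggregate across the active set and verifying that the separating inequality $3M_0 < M$ holds uniformly over the class rather than merely pointwise in $\bm{d}_0$. By contrast, the signal lower bound and the reduction to a frequentist probability are comparatively routine once Corollaries \ref{Coro:consisMS} and \ref{Coro:MAPasym} are in hand.
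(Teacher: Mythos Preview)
Your proposal is correct and shares the paper's architecture: bound the threshold $3\psi$ above by $M\sqrt{K_n\log(L_n/K_n)}$ via Assumption (A2) and a perturbation argument, bound the signal $|\hat{\bm d}_i^{\text{MAP}}|$ below by the same quantity via the cut-off defining $\Tilde l_0[K_n]$, and conclude. The one substantive difference is in how $\psi$ is shown to track $\psi_0$. The paper writes $\psi \le \psi_0 + p^{-1/2}\|\bm d - \bm d_0\|_2$ by the triangle inequality and then invokes the posterior contraction rate of Theorem~\ref{minimax} to control the second term; you instead stay with the MAP estimator throughout and use the coordinate-wise consistency from Corollary~\ref{Coro:MAPasym} to get $\psi^2 = \psi_0^2(1+o_p(1))$. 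Since $\psi$ is by definition a functional of $\hat{\bm d}^{\text{MAP}}$ rather than of a posterior draw, your route is the more internally consistent one, and your explicit observation that the event is $\bm y^*$-measurable (so the inner posterior probability collapses to an indicator) makes precise a step the paper leaves tacit. Both arguments implicitly need the constant relation $3M_0 \le M$ linking Assumption~(A2) to the cut-off in $\Tilde l_0[K_n]$; you state this requirement openly, while the paper absorbs it into the sentence ``$3\psi_0 < M\sqrt{K_n\log(L_n/K_n)}$ by Assumption (A2)''.
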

Corollary \ref{Falsenegative} theoretically justifies the 3-sigma criterion for change-point discrimination. 
In general, the 3-sigma rule is employed for outlier detection, especially for the Gaussian population. 
In general, the performance of discriminating the outliers depends on two properties, the variation of the population and the distance between the outliers and the center. 
The cut-off of the $\Tilde{l}_0[K_n]$ class guarantees that those outliers (change-points) differ significantly from the zero-center population (stationary points), while the additional Assumption (A2) avoids those outliers from affecting the variation of all the samples too much. 
Corollary \ref{Falsenegative} implies that even under a very high precision level (3-sigma criterion usually yields a high precision), the recall of the discrimination is sufficiently large and asymptotically converges to one. 
This is supported by our finite sample simulations under the Gaussian mean-shifted model of Scenario $(i)$, where NOSE enjoys higher recall than other competing approaches.

\section{Bayesian implementation}\label{sec:implement}
In this section, we introduce technical details for the Bayesian implementation of the proposed method. 

\subsubsection*{Uniform convergence of $\theta (t)$}
Recall that our methodology stands on $\theta(t)$, the truncated form of $\theta(t)$. 
Hence it is necessary to check the convergence of the truncated form as $L \to \infty$. 
We present the uniform convergence of $\theta (t)$ by the following theorem. 
We defer the proof to Appendix \ref{subsec:proofTheo4}. 

\begin{theorem}[Uniform convergence]\label{theorem:converge}
For any continuous density $F_0$ with support $\mathbb{R}$ in \eqref{prior:h}, given $\bm{\xi}$ and fixed $a, b$ in the Gamma prior for $\alpha$, 
the truncated $\bm{Q}^L$ in \eqref{prior:truncation}  converges to $\bm{Q}$ in \eqref{prior:theta} uniformly for all $t\in \mathcal{T}$ in probability. 
\end{theorem}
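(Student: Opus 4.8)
The plan is to reduce the uniform truncation error to the tail of the series of jump heights and then show that this tail is, almost surely, eventually identically zero; almost sure convergence then gives the asserted convergence in probability. Writing the slab draw as $h_\ell = Z_\ell s_\ell$ with $s_\ell \sim F_0$ and latent indicator $Z_\ell \in \{0,1\}$, I first note that conditioning on $\bm{\xi}$ is immaterial, since in \eqref{prior:h} the heights $\bm{h}$ (equivalently the pairs $\{(Z_\ell, s_\ell)\}$) are generated independently of the atom locations $\bm{\xi}$. Because every indicator $I(\xi_\ell \le t)$ lies in $\{0,1\}$, the difference between \eqref{prior:theta} and \eqref{prior:truncation} obeys the uniform bound
\[
\sup_{t \in \mathcal{T}} |\bm{Q}(t) - \bm{Q}^L(t)| = \sup_{t \in \mathcal{T}} \left| \sum_{\ell=L+1}^\infty h_\ell I(\xi_\ell \le t) \right| \le \sum_{\ell=L+1}^\infty |h_\ell| = \sum_{\ell=L+1}^\infty Z_\ell |s_\ell|,
\]
which removes $t$ from the problem entirely and leaves only the behaviour of the random series $\sum_\ell Z_\ell |s_\ell|$.

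The crux is to prove that the number of active atoms $N := \sum_{\ell=1}^\infty Z_\ell$ is finite almost surely, as this forces the tail on the right-hand side above to vanish exactly once $L$ is large. I would establish this through its first moment. By linearity (valid irrespective of any dependence among the $Z_\ell$) and the tower property,
\[
E[N] = \sum_{\ell=1}^\infty E[Z_\ell] = \sum_{\ell=1}^\infty E[\eta_\ell] = E_\alpha\left[ \sum_{\ell=1}^\infty E[\eta_\ell \mid \alpha] \right].
\]
Since the $p_j$ are conditionally i.i.d. $\text{Beta}(\alpha,1)$ given $\alpha$, with $E[p_j \mid \alpha] = \alpha/(\alpha+1)$, the product structure $\eta_\ell = \prod_{j=1}^\ell p_j$ gives $E[\eta_\ell \mid \alpha] = \{\alpha/(\alpha+1)\}^\ell$, whose geometric sum over $\ell$ equals $\alpha$. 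Integrating against the $\text{Gamma}(a,b)$ hyperprior then yields $E[N] = E_\alpha[\alpha] = ab < \infty$, so that $N < \infty$ almost surely and only finitely many indicators $Z_\ell$ equal one.

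Finally I would let $\ell^\star := \max\{\ell : Z_\ell = 1\}$ (finite a.s., with $\ell^\star = 0$ if $N = 0$); for every $L \ge \ell^\star$ the tail sum $\sum_{\ell>L} Z_\ell|s_\ell|$ is identically zero, hence $\sup_{t\in\mathcal{T}}|\bm{Q}(t) - \bm{Q}^L(t)| = 0$. Thus the truncation error converges to $0$ almost surely, and a fortiori in probability, uniformly over $\mathcal{T}$. The main obstacle is the almost-sure finiteness of $N$, which is exactly where the IBP stick-breaking structure and the finite mean of the Gamma hyperprior are essential: the exponential decay of $E[\eta_\ell]$ makes $\sum_\ell E[\eta_\ell]$ summable. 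I emphasize that this argument needs no moment assumption on the slab $F_0$ — only that $|s_\ell| < \infty$ a.s., which holds for any continuous density on $\mathbb{R}$ — so the claimed generality of $F_0$ comes for free; had one instead tried to bound $E\sum_\ell |h_\ell| = ab\,E|s_1|$ directly (equal to $ab/\lambda$ for the Laplace slab), the argument would have been restricted to slabs with a finite first moment.
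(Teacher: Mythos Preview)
Your proof is correct. Both you and the paper reduce the uniform error to the tail $\sum_{\ell>L}|h_\ell|$ and both ultimately rely on the single computation $\sum_{\ell\ge 1}E[\eta_\ell]=E_\alpha[\alpha]=ab<\infty$. Where the paper finishes with a Cauchy-in-probability argument for the partial sums $\sum_{m_1<\ell\le m_2}|h_\ell|$ (bounding $P(|h_\ell|>c)$ by $E[\eta_\ell]$ via the spike--slab decomposition and then using summability of $E[\eta_\ell]$), you instead argue that $E\big[\sum_\ell Z_\ell\big]=ab<\infty$ forces the number of active atoms to be almost surely finite, so that the tail is eventually \emph{exactly} zero. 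Your route is a bit sharper: it yields almost sure (not just in-probability) convergence, and it makes transparent why no integrability condition on the slab $F_0$ is needed, since the $|s_\ell|$ never have to be summed once only finitely many $Z_\ell$ are one. The paper's argument, on the other hand, keeps everything at the level of tail probabilities and so would extend more readily to variants in which $\sum_\ell Z_\ell$ could be infinite but $\sum_\ell |h_\ell|$ still converges.
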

In practice, the choice of the truncation number $L$ depends on one's prior belief on the minimum distance between change-points. 
In the case where the number of change-points $K$ is not large, a relatively small $L$ is suggested to simplify MCMC sampling. 
In our experience, when the truncation number exceeds a sufficiently large $L$, the detection result is stable with $L$ increasing, numerically demonstrating Theorem \ref{theorem:converge}. 

\subsubsection*{Cauchy slab}
Note that Theorem \ref{theorem:converge} holds for any continuous density for the slab term. 
This implies that the choice of slab density for $h_\ell$ is not limited to Laplace, but also includes some polynomial-tailed densities such as Student-t or Cauchy which prevent over-shrinkage of the non-negligible entries \citep{bai2020spike}. 
In practice, we recommend a standard Cauchy slab in finite sample cases since we find it improves the accuracy of the estimated number of change-points compared with the Laplace slab. 
Therefore, we use the Cauchy slab throughout all numerical studies in this article. 
An intuitive reason for the use of Cauchy slab is that the  adaptive precision parameter for Laplace slab in subsection \ref{subsec:modelselection} is only suitable for the Gaussian mean-shifted model of Scenario $(i)$, and hence, is not a unified choice. 
In contrast, the Cauchy distribution has infinite first and second moments, acting as a very special precision parameter $\lambda=0$. 
Therefore, the Cauchy slab is unified for all application scenarios and free of parameters to be prespecified. 

Although we have no theoretical evidence for the superiority of the Cauchy slab, it might be explained from the perspective of optimizing the minus log posterior. 
In a discrete spike-and-slab model, the Laplace slab can be viewed as a mixture of $l_0$ and $l_1$ penalties, while the Cauchy slab an be viewed as a mixture of $l_0$ norm and a penalty term increasing in a $\log(1+x^2)$ rate. 
By the fact that $\log(1+x^2) < |x|$ for all $x \not = 0$, the Cauchy slab seems to be a better approximation of $l_0$ penalization, compared with the Laplace slab. 
As discussed by \cite{frick2014multiscale}, $l_0$ penalization might be more suitable for change-point problems than the $l_1$ penalization when the number of change-point may be much smaller than the data size. 

Another numerical evidence for the superiority of the Cauchy slab may be given by \cite{shin2021neuronized}. 
For discrete spike-and-slab priors with i.i.d. sparsity parameters, the Cauchy slab appears to enjoy a lower false positive rate and higher cosine similarity to the true parameter compared with the Laplace slab under linear regression model settings.

\subsubsection*{MCMC sampling}
We approximate the posterior distribution through MCMC sampling. 
Our computation is facilitated by the \texttt{nimble} \citep{de2017programming} package in \texttt{R}, which uses \texttt{BUGS} type syntax \citep{lunn2000winbugs} and compiles the code into \textbf{C++} to facilitate automatic posterior sampling. 
Samplers for different parameters are automatically assigned by \texttt{nimble}. 
For conjugate parameters, say, $p_\ell$, \texttt{nimble} assigns Gibbs samplers; 
for parameters $\xi_\ell$ and $\alpha$, \texttt{nimble} assigns the default Metropolis-Hasting sampler; 
for $h_\ell$ and the corresponding binary indicator $Z_\ell$, we configure a reversible jump MCMC sampler to speed up the sampling. 
The \texttt{R} package NOSE based on \texttt{nimble} includes several \texttt{R} functions applied to 
application scenarios mentioned in subsection \ref{subsec:app}. 

\subsubsection*{Continuous $\xi_\ell$}
To determine a discrete draw from states $\bm{t}_{1:n}$ without replacement is difficult in \texttt{nimble}. 
Hence, we have to make a continuous adjustment to adopt the programming framework of \texttt{nimble}. 
Note that for any $t_i$ and $t_{i+1}$ with an increment $d_i = \theta(t_{i+1}) - \theta(t_i) > 0$, it is equivalent to either draw an atom $\xi_\ell$ at $t_{i+1}$ exactly, or to draw an atom $\xi_\ell \in (t_i, t_{i+1})$. 
This motivates us to consider a continuous prior for $\xi_\ell$ as an approximation. 
Without loss of generality, we assume $t_i = i$ for $i=1, \ldots, n$. 
Then we sample $\xi_\ell$ from a continuous uniform distribution $U(0, n)$ in \texttt{nimble} as the continuous prior for $\xi_\ell$.

A risk of the continuous prior $\xi_\ell$ is that more than one atoms fall into the same interval $(t_i, t_{i+1})$, which may lead to an ill posterior of increment $d_i$. 
Note that the probability that the minimum distance between $L$ uniform $U(0, n)$ variables exceeds 1 is $(1-n^{-1})^L$. 
As $n$ increases to $L/n \to 0$, the probability converges to 1, that is, the probability that an interval $(t_i, t_{i+1})$ contains more than one atom converges to zero. 
Therefore, the continuous scheme of $\xi_\ell$ suffices to approximate prior \eqref{prior:xi} when $n >> L$. 

In the finite sample case, too closely located atoms may cause over-detection of change-points by wrongly putting increments to data points that are close to the true change-points. 
To avoid over-detection, we conduct post-processing of change-point. 
We use the prior belief in the minimum distance $D$ between change-points as the lower bound of the distance between change-points. 
For each two consecutive estimated change-points $\hat{\tau}_k, \hat{\tau}_{k+1}$, if $|\hat{\tau}_k-\hat{\tau}_{k+1}| <D$, we only retain the left end-point $\hat{\tau}_k$ as a change-point but remove the rest.
Such a kind of post-processing based on the prior belief in the minimum distance between change-points is common in most literature (\cite{matteson2014nonparametric}; 
\cite{baranowski2019narrowest}; \cite{cappello2023bayesian}; among others). 
This post-processing is applied throughout all numerical studies in this article. 

\subsubsection*{Adjustment of $\phi$}
In a finite sample experiment, Assumption (A3) may no longer hold, especially if $L$ is chosen as a relatively small number. 
For a sequence $\{\zeta_i\}_{i=1}^{n-1}$, those $\zeta_i$ whose absolute values exceed three times the sample standard deviation may cause a much larger variation than the variation of the zero-center population. 
To avoid a too large sample deviation, we adopt an empirically adjusted value of $\Tilde{\phi}$ rather than using the  sample standard deviation. 
Note that in a standard normal case, the 3-sigma rule indicates a tail probability of $0.001$. 
Therefore, we first obtain a trimmed sample of $\zeta_i$ by cutting off the two tails of $0.0005$ probability. 
Then we use the trimmed sample standard deviation as an empirical adjustment of $\Tilde{\phi}$. 
The adjustment of $\phi$ is used throughout the numerical studies in this article. 

\section{Simulations}
\label{sec:simu}	
Comprehensive simulations are conducted to evaluate the performance of NOSE by comparing it with other state-of-the-art methods available in R Archive Network. 
We consider examples in Scenarios 1-5 introduced in subsection \ref{subsec:app}.  
For Scenario 5, since most existing approaches are not available for this scenario when there are multiple responses observed at the same time, we report the results given by NOSE only. 
Results of additional simulations under model misspecification settings of changes in means with autocorrelated noises, changes in means with heavy-tailed noises, and changes in autocorrelation coefficient with model misspecification are deferred to Appendix \ref{subsec:modelmis}. 

\subsubsection*{Settings}
We consider the following settings. 
Under each simulation setting, $300$ Monte Carlo replicate datasets are generated. 

\begin{enumerate}
    \item[](\textbf{S.1}) 
Changes in normal means on equal segments (in Scenario 1). 
We have  $n=400$ independent Gaussian observations with $K=7$ change-points at $(50, 100, 150, 200, 250, \\ 300, 350)$, leading to $8$ segments with segment mean $\mu = (0, 1.5, 3, 1.5, 3, 0.5, 2, 0)$. 
The common scale parameter is set to be $\sigma = \sqrt{2}$. 

\item[](\textbf{S.2}) Changes of normal mean on unequal-length segments with large variations (in Scenario 1). 
We have  $n=916$ independent Gaussian observations with $K=11$ change-points at $(81, 134, 178, 267, 346, 413, 528, 577, 636, 741, 822)$, leading to $12$ segments with segment mean $\mu = (0, 1.23, -0.248, 
0.861, -0.534, 1.057, 0.369, 1.331, 0.483,\\ 1.105, \\ -1.101, 0)$. The common scale parameter is set to be $\sigma = 1$. 
Some jump sizes are smaller than the within-segment variation, leading to many difficulties in correctly identifying change-points. 

\item[](\textbf{S.3}) Changes of Poisson parameter (in Scenario 2). 
We have $n=400$ independent Poisson variables with $K=7$ change-points at $(50, 100, 150, 200, 250, 300, 350)$, leading to $8$ segments with segment parameter $\lambda = (1, 0.25, 2, 1, 3, 1.5, 2.5, 1)$.

\item[](\textbf{S.4}) Changes of normal scale with small variations on the mean (in Scenario 3). 
The data are generated to simulate the DRAIP data. 
We have $n = 756$ independent Gaussian observations with $K=7$ change-points at $(150, 250, 300, 450, 550, 650, 700)$, leading to $8$ segments with segment scales $\sigma =(1, 1.68, 0.57, 0.20, 2.18, 3.09, 1.83, 1)$. 
Meanwhile, we allow small variations on the mean such that the segment mean is $\mu = (0.056, 0.047, -0.034, -0.017, 0.032,\\ 0.068,
-0.042, 0.017)$. 

\item[](\textbf{S.5}) Changes of autocorrelation coefficient in an AR(1) model (in Scenario 4). 
The data generating process is $Y_t = \phi Y_{t-1} + \phi_0 + \epsilon_t$. 
We have $N=450$ observations with $5$ change-points at $t=(50, 100, 200, 300, 400)$, leading to $6$ segments with segment autocorrelation coefficient $\phi = (0.5, -0.5, 0.65, -0.25, -0.85, 0.45)$. 
The model error $\epsilon_t \sim N(0, 1)$. 

\item[](\textbf{S.6}) Changes of regression coefficient in a linear regression model (in Scenario 5). 
Data are generated by $y_{tj} = \beta_0 + \theta(t)X_{tj} + \epsilon_{tj}, j=1, 2, t=1, \ldots, 240$, where $\beta_0 = 0.5, X_{tj} \sim U(-2, 2)$, and $\epsilon_{tj} \sim N(0, 1)$. 
We set $K=5$ change-points at $t=(40, 80, 120, 160, 200)$, with the segment-wise values $\theta(t) = (1, -1, 0.5, -0.5, 1, -1)$. 

\end{enumerate}

Examples of simulated data are presented in Figure \ref{fig:simexample}. 
Figures \ref{GaussMean7CP} to \ref{Poison7CP} find that some jump sizes are relatively small and the corresponding change-points are imperceptible in the data stream. 
Figure \ref{AR5CP} finds that the data with identical signs are clustered in those segments with positive auto-correlation, and opposite signs of data appear alternately in those segments with negative auto-correlation. 
Figure \ref{Scale7CP} presents the centered absolute data $|Y-EY|$ and the true $\theta(t)$ together, where the heights of the centered absolute data reflect the changes in the scale parameters. 
Figure \ref{Linear5CP} presents the covariates and the responses grouped by the state $t$ and labels the curves by the segments at which they are located. 
\begin{figure}[!htb]
    \centering
    \subfigure[]{
    \begin{minipage}[t]{0.45\linewidth}
      \centering
\includegraphics[height = .65\textwidth]{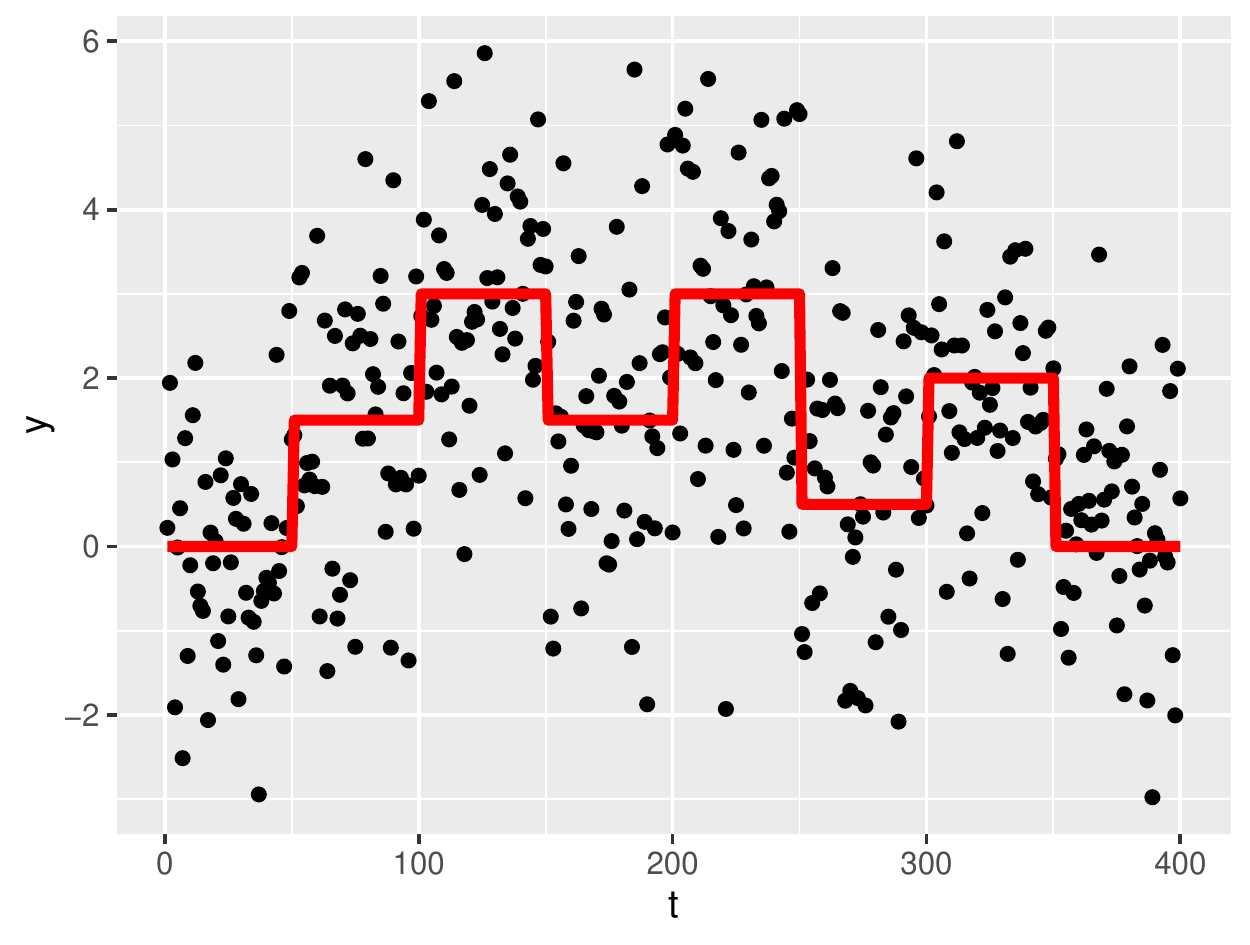}
\label{GaussMean7CP}
    \end{minipage}
    }
\subfigure[]{
    \begin{minipage}[t]{0.45\linewidth}
      \centering
\includegraphics[height = .65\textwidth]{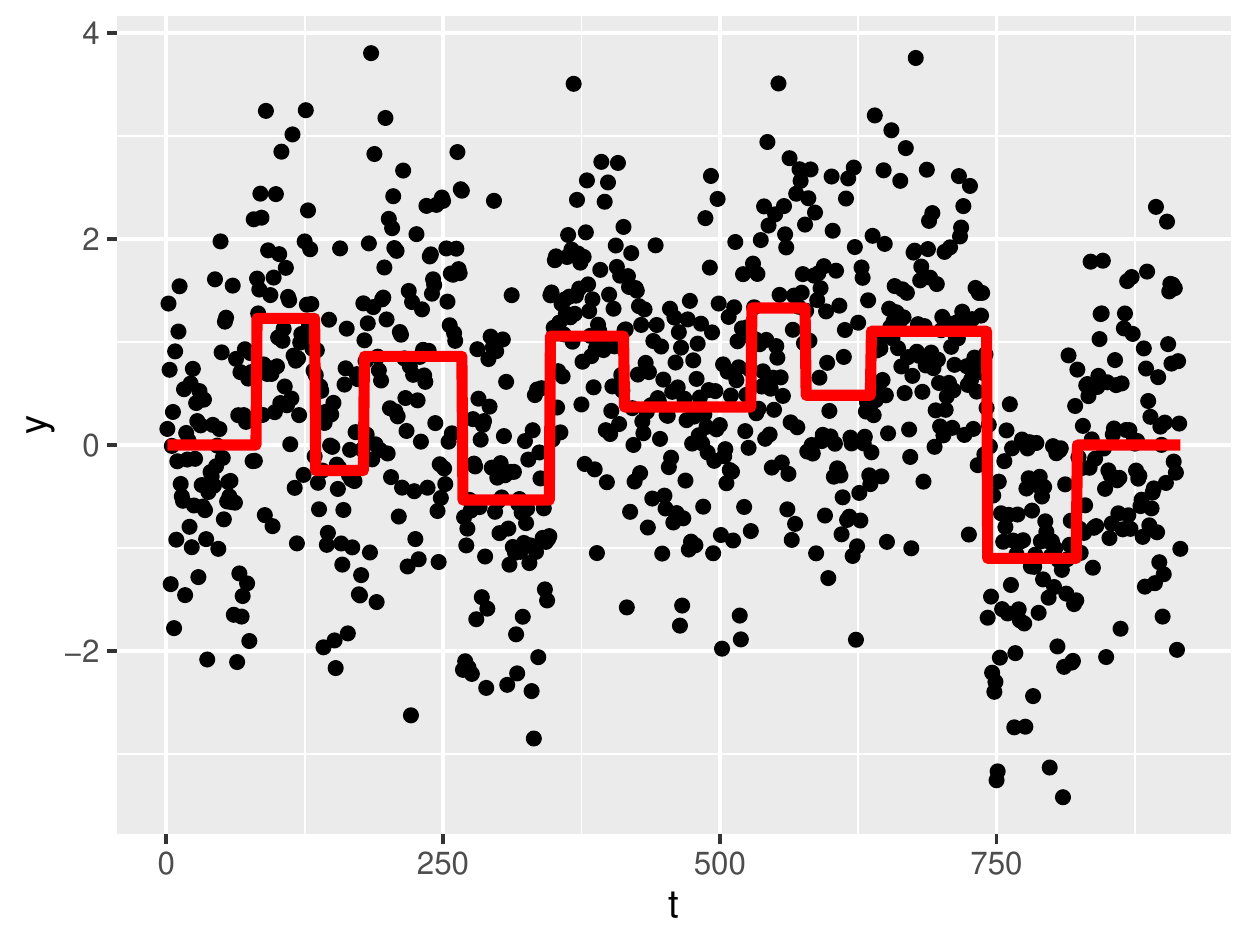}
\label{GaussMean11CP}
    \end{minipage}
    }
    \subfigure[]{
    \begin{minipage}[t]{0.45\linewidth}
      \centering
\includegraphics[height = .65\textwidth]{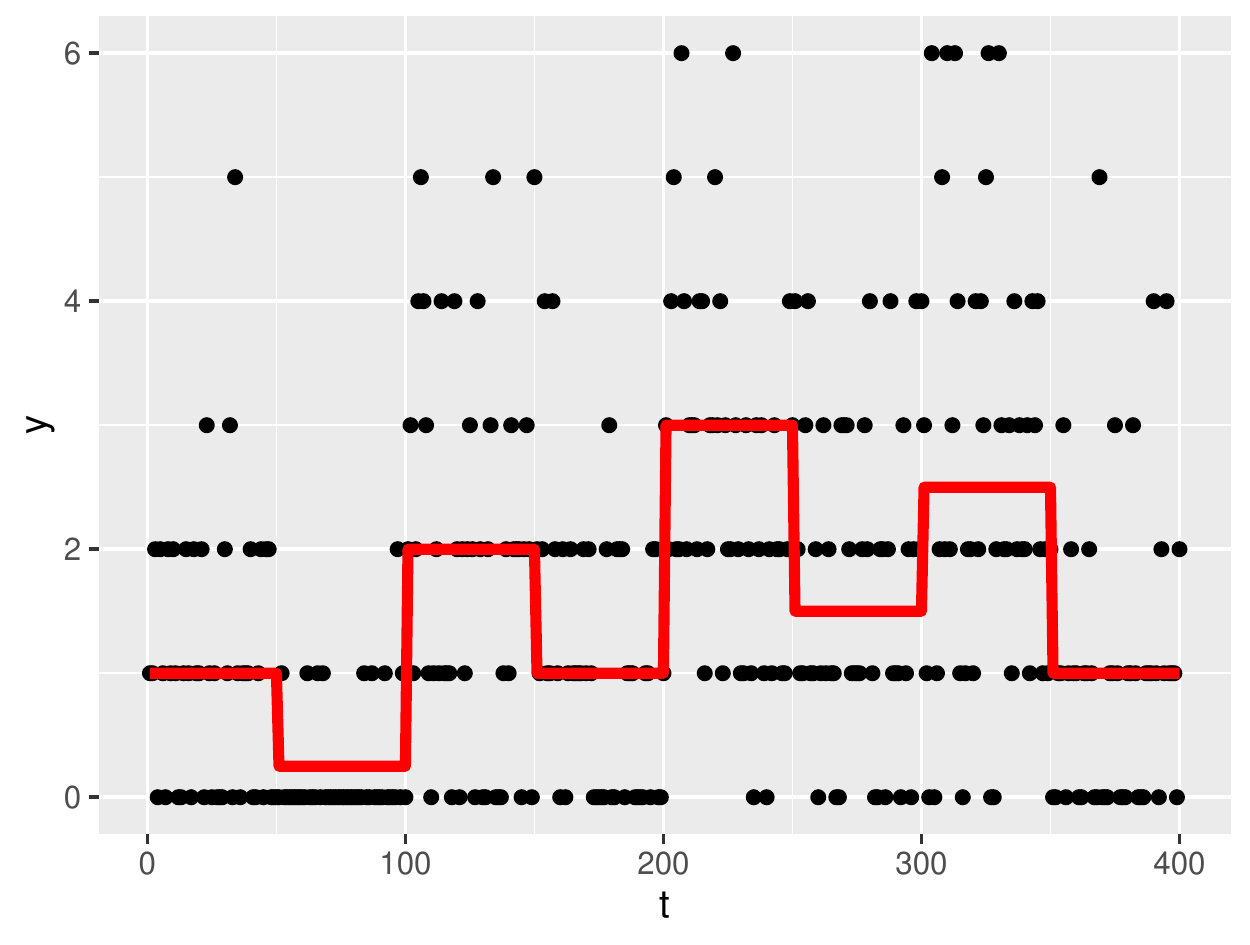}
\label{Poison7CP}
    \end{minipage}
    }
    \subfigure[]{
    \begin{minipage}[t]{0.45\linewidth}
      \centering
\includegraphics[height = .65\textwidth]{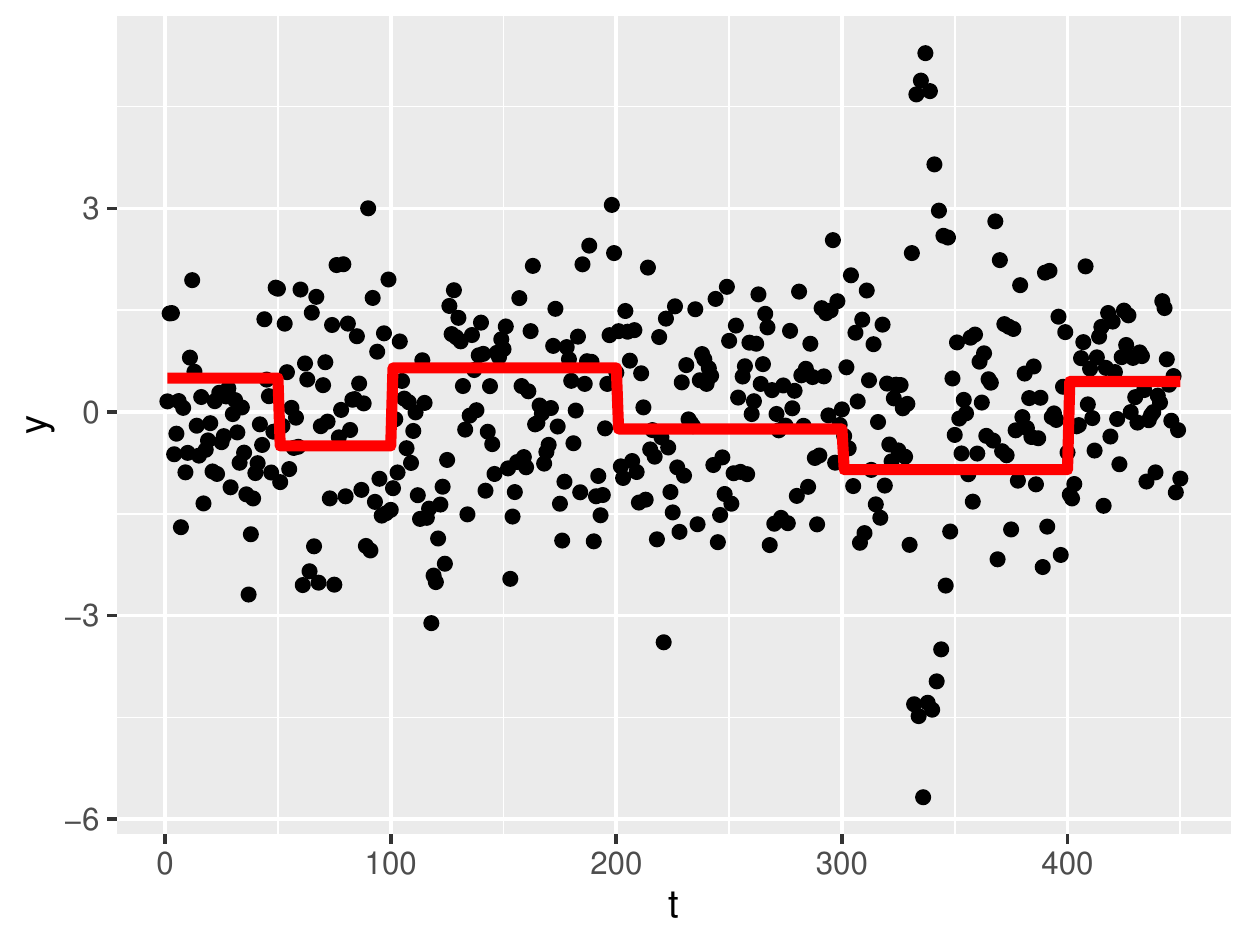}
\label{AR5CP}
    \end{minipage}
    }
\subfigure[]{
    \begin{minipage}[t]{0.45\linewidth}
      \centering
\includegraphics[height = .65\textwidth]{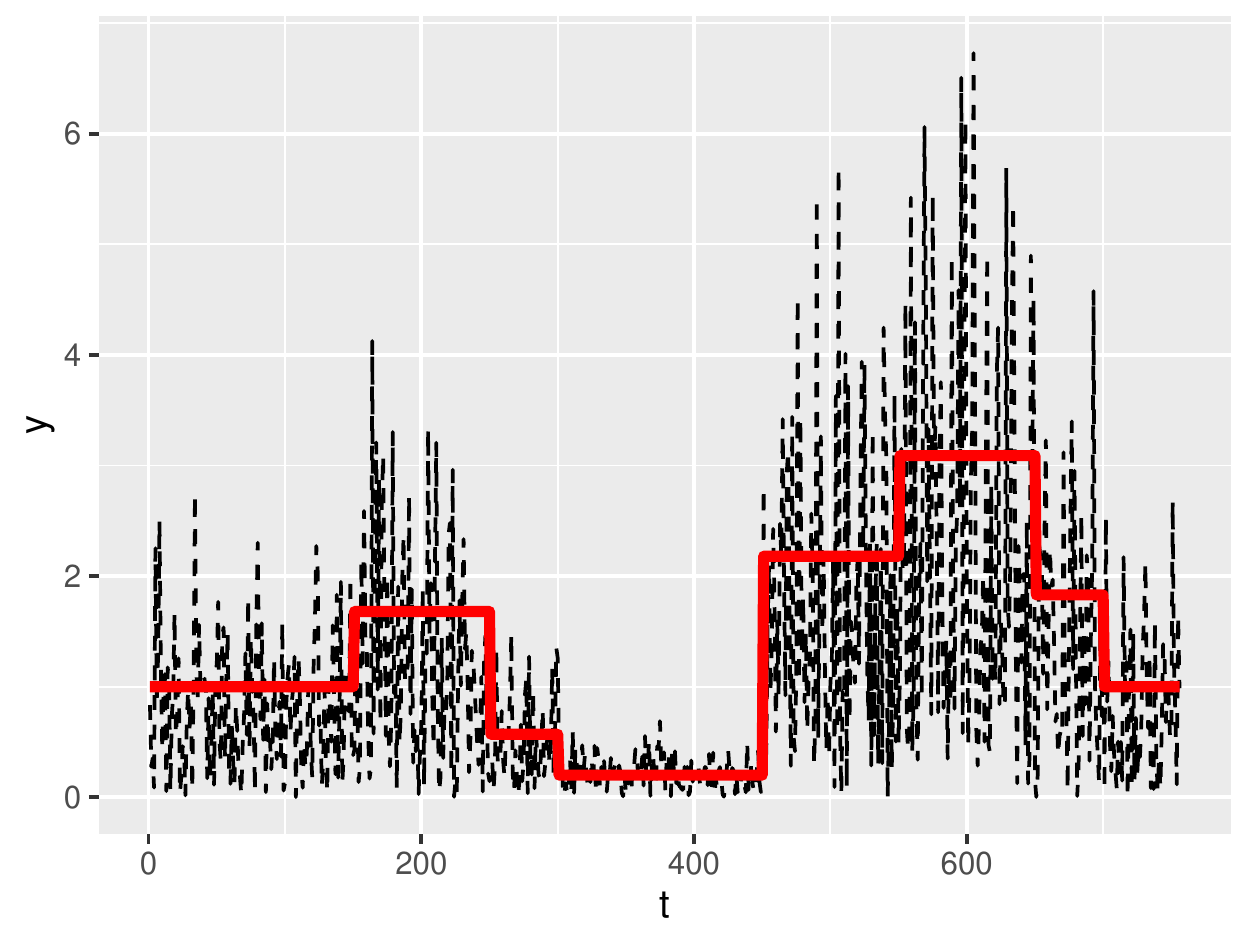}
\label{Scale7CP}
    \end{minipage}
    }
    \subfigure[]{
    \begin{minipage}[t]{0.45\linewidth}
      \centering
\includegraphics[height = .65\textwidth]{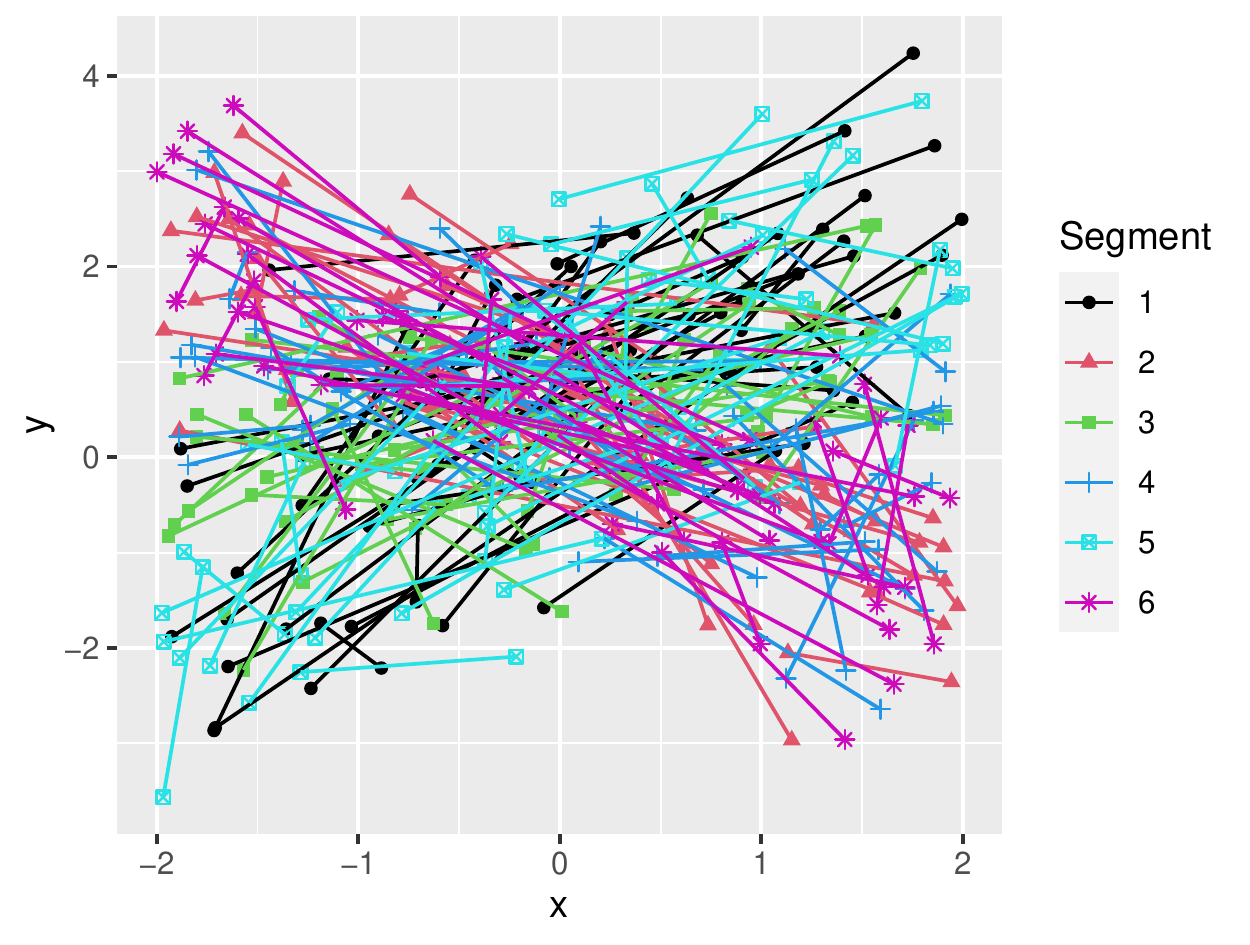}
\label{Linear5CP}
    \end{minipage}
    }
    \caption{\footnotesize {Examples of generated data in simulations. (a) to (d), data stream (in points) and $\theta(t)$ (in red lines). (e), centered absolute data stream $|Y_i - E(Y_i)|$ (in dashed line) and $\exp\{\theta(t)\}$ (in red line). (f), data grouped by $t$ (in polylines labeled by segments). (a), \textbf{S.1} (Scenario 1); (b), \textbf{S.2} (Scenario 1); (c), \textbf{S.3} (Scenario 2); (d), \textbf{S.5} (Scenario 4); (e), \textbf{S.4} (Scenario 3); (f), \textbf{S.7} (Scenario 5). }}
    \label{fig:simexample}
\end{figure}

\subsubsection*{Estimators}
In all simulations, we adopt a unified setting of truncation number $L = 25$ and the prior belief on the minimum distance between change-points $D = 15$ for NOSE. 
We run $4$ independent MCMC chains and obtain $1000$ scans in each chain thinned from a total $28000$ after a burn-in period of $8000$ iterations. 
Finally, we get $4000$ posterior samples for change-point discrimination.

Competitors vary among different settings since none of them can be applied to all the above simulation settings. 
For settings \textbf{S.1}, \textbf{S.2} and \textbf{S.3}, where the mean parameter changes, we compare with
the NOT method by \cite{baranowski2019narrowest} in package \texttt{not}, 
the TUGH method by \cite{fryzlewicz2018tail} in package \texttt{breakfast} \citep{breakfast2022}, 
the MOSUM method by \cite{eichinger2018mosum} in package \texttt{mosum} \citep{meier2021mosum}, 
the FDRSeg method by \cite{li2016fdr} in package \texttt{FDRSeg}, 
the SMUCE method by \cite{frick2014multiscale} in package \texttt{StepR}, 
the WBS method by \cite{fryzlewicz2014wild} in package \texttt{wbs},  
and the PELT method by \cite{killick2012optimal} in 
package \texttt{changepoint} \citep{killick2014changepoint}, ; 
for setting \textbf{S.4}, where the scale parameter changes, we compare with NOT, SMUCE, and PELT methods;
for setting \textbf{S.5}, where data are autocorrelated, we compare with
the WBSTS method by \cite{korkas2017multiple} in pacakge \texttt{wbsts} and 
the B-P method by \cite{bai2003computation} in package \texttt{struchchange} \citep{zeileis2002strucchange}. 
The tuning parameters for the competing methods are set as the default values in the corresponding \texttt{R} packages. 
We do not present results by Bayesian approaches such as \texttt{StepSignalMargiLike} \citep{du2016stepwise} and \texttt{solo.cp} \citep{cappello2023bayesian} here. 
We find the results of \texttt{StepSignalMargiLike} are sensitive to the choices of a maximum number of segments and cannot find a stable estimation of the number; 
\texttt{solo.cp} cannot detect most of change-points in the mean under our simulation settings. 
We conjecture the reason is that \texttt{solo.cp} identifies change-points based on the jump probability, which may fall around $1/2$ when the jump sizes are relatively small, say, our simulation settings.

\subsubsection*{Assessments and results}
Several assessments are employed to measure the accuracy of the detected number of  change-points and the accuracy of locations of estimated change-points. 
We report the frequency table for $\hat{K} - K$, the difference between the number of detected change-points and the true number of change-points to evaluate the accuracy of the detected number of change-points. 
To measure the accuracy in locations, three assessments are considered, precision, recall, and the scaled Hausdorff distance (Hausdorff). 
For all true change-points, we count one true positive (TP) if there is at least one change-point identified within a window of $10$ data points and compute the number of false positive (FP) as the number of predicted changes minus TP. 
Let $K$ be the true number of change-points. 
Then precision is computed as $\text{TP}/(\text{TP}+\text{FP})$, and recall is computed as $\text{TP}/K$. 
The scaled Hausdorff distance is computed as 
\begin{align*}
    d_H = n^{-1} E[&\max \{ \max\limits_{j=0,\cdots,K+1} \min\limits_{k=0,\cdots,\hat{K}+1} |\tau_j -\hat{\tau}_k|, \\
   &\min\limits_{k=0,\cdots,\hat{K}+1} \min\limits_{j=0,\cdots, K+1} |\hat{\tau}_k - \tau_j|  \}], 
\end{align*}
where $t_0 = \tau_0<\cdots<\tau_K<\tau_{K+1}=t_N$ and $t_0 = \hat{\tau}_0 < \hat{\tau}_1<\ldots<\hat{\tau}_{\hat{K}} < \hat{\tau}_{\hat{K}+1} = t_N$ denotes true and estimated change-points, respectively. 
The scaled Hausdorff distance takes values in $[0, 1]$ and is the smaller the better. 

From Table \ref{tab:simres} we find that NOSE outperforms in the frequency of correctly specifying the number of change-points in all settings. 
In contrast, other competitors tend to under detect the number of change-points except for the setting \textbf{S.3}, where changes take place on both the mean and variance of data. 
Although the jump sizes under these simulation settings (especially setting \textbf{S.2}) are not significant enough to make the changes be identified by eyes, 
NOSE still enjoys the highest recall in all settings, demonstrating its capability to correctly identify change-points.
These results may be evidence that the performances of segmental approaches seem to be less sensitive to small jump sizes than our non-segmental approach, particularly when the nuisance parameter (say, the scale parameter $\sigma$ in the mean-shifted model) has substantial impacts on the variation of the whole data stream. 
The precision and Hausdorff distance given by NOSE outperforms  under setting \textbf{S.3}, and are competitive under other settings. 
Note that other winners on precision and scaled Hausdorff distance actually underestimate the number of change-points, while a most parsimonious estimator usually brings higher precision and lower Hausdorff distance. 
Under setting \textbf{S.6}, NOSE correctly specifies all change-points in almost all replications, with pretty high precision and recall. 
In summary, NOSE performs to be the most competitive and robust to correctly specify the number of change-points and estimate their locations accurately.

\begin{table*}[!htb]
	\footnotesize
	\scriptsize
	\caption{\footnotesize{Results of change-points detection under settings \textbf{S.1} to \textbf{S.5} among 300 Monte Carlo replicates. The best results are bold. }}\label{tab:simres}
	\begin{tabular*}
 {\textwidth}{@{\extracolsep{\fill}}cccccccccccc @{\extracolsep{\fill}}}
		\hline
	Setting &	Method & \multicolumn{7}{c}{Frequency of $\hat{K}-K$} & Precision & Recall & $d_H \times 10^2$\\
		\hline
  & & $\le-3$ &-2 & -1 & 0 & +1 & +2 & $\ge +3$  &  &  & \\
  \hline
	\textbf{S.1}&	NOSE & 1 & 1 & 33 & $\bm{252}$ & 13 & 0 & 0 & $0.95$ & $\bm{0.94}$ & $\bm{2.1}$\\
        &    NOT & 9 & 12 & 31 & 227 & 19 & 2 & 0 & 0.93 & 0.91 & 2.4 \\
         &   SMUCE & 47 & 68 & 130 & 55 & 0 & 0 & 0 & 0.85 & 0.7 & 3.1\\
          &  WBS & 16 & 35 & 95 & 138 & 14 & 0 & 2 & 0.93 & 0.84 & 2.5\\
          &  FDRSeg & 6 & 16 &63 & 171 &  29 & 10 & 5 & 0.90 & 0.88 & 3.0\\
        &    PELT & 1 &6 & 12 & 210 & 52 & 16 & 3 & 0.91 & 0.93 & 2.8\\
        & TUGH & 0 & 0 & 1 & 217 & 51 & 14 & 5 & 0.96 & 0.93 & 2.9 \\
        & MOSUM & 3 & 3 & 72 & 181 & 41 & 0 & 0 & \textbf{0.98} & 0.93 & 2.6\\
        \hline
\textbf{S.2} & NOSE & 15 & 48 & 77 & $\bm{144}$ & 15 & 1 & 0 & 0.93 & $\bm{0.87}$ & 1.5\\
& NOT & 52 & 91 & 49 & 101 & 7 & 0 & 0 & 0.94 & 0.82 & 1.4\\
 & SMUCE & 136 & 113 & 50 & 1 & 0 & 0 & 0 & 0.86 & 0.67 & 2.1 \\
 & WBS & 68 & 120 & 74 & 38 & 0 & 0 & 0 & $0.95$ & 0.79 & $\bm{1.2}$\\
 & FDRSeg & 28 & 71 & 74 & 100 & 23 & 2 & 2 & 0.88 & 0.81 & 2.2 \\
 & PELT & 38 & 101 & 42 & 107 & 12 & 0 & 0 & 0.83 & 0.83 & 1.4 \\
 & TUGH & 12 & 37 & 53 & 129 & 48 & 17 & 4 & 0.97 & 0.84 & 2.4 \\
& MOSUM & 71 & 97 & 98 & 30 & 4 & 0 & 0 & \textbf{1} & 0.80 & \textbf{1.2}\\
 \hline
 \textbf{S.3} & NOSE & 4 & 28 & 113 & \textbf{148} & 6 & 1 & 0 & \textbf{0.90} & \textbf{0.82} & \textbf{2.9}\\
& NOT & 37 & 71 & 77 & 90 & 23 & 1 & 1 & 0.87 &0.74 & 3.2\\
&  SMUCE & 10 & 68 & 151 & 69 & 2 & 0 & 0 &  0.89 & 0.76 & 3.0 \\
& WBS & 1 & 5 & 34 & 41 & 65 & 63 & 85 & 0.64 & 0.76 & 4.8 \\
& FDRSeg & 0 & 3 & 6 & 8 & 20 & 22 & 241 & 0.47 & 0.83 & 5.7 \\
& PELT & 25 & 50 & 102 & 61 & 38 & 15 & 9 & 0.77 & 0.69 & 3.5 \\
\hline
\textbf{S.4} & NOSE & 0 & 75 & 71 & $\bm{150}$ & 4 & 0 & 0 & 0.84 & $\bm{0.75}$ & 2.3 \\
& NOT & 25 & 221 & 39 & 14 & 0 & 0 & 1 & $\bm{0.91}$ & 0.67 & 1.5 \\
& SMUCE &  40 & 211 & 49 & 0 & 0 & 0 & 0 & 0.64 & 0.64 & $\bm{1.2}$\\
& PELT & 1 & 153 & 58 & 83 & 5 & 0 & 0 & 0.88 & 0.72 & 2.0\\
\hline
\textbf{S.5} & NOSE & 0 & 0 & 98 & $\bm{154}$ & 46 & 2 & 0 & 0.85 & $\bm{0.82}$ & $2.6$\\
& WBSTS & 4 & 36 & 74 & 122 & 48 & 14 & 2 & 0.61 & 0.47 & 2.8\\ 
& B-P & 102 & 68 & 128 & 2 & 0 & 0 & 0 & $\bm{0.89}$ & 0.38 & $\bm{1.8}$\\
\hline
\textbf{S.6} & NOSE & 0 & 0 & 1 & 293 & 6 & 0 & 0 & 0.99 & 1 & 0.75\\	
\hline
	\end{tabular*}
\end{table*}

\section{Applications}\label{sec:real_data}

\subsection{DRAIP data: shifts in scale}
\label{subsec:AppDRAIP}
We report detection results on DRAIP data given by NOSE here. 
\begin{figure} [!htb]
	\centering
	\includegraphics[width = .45\textwidth]{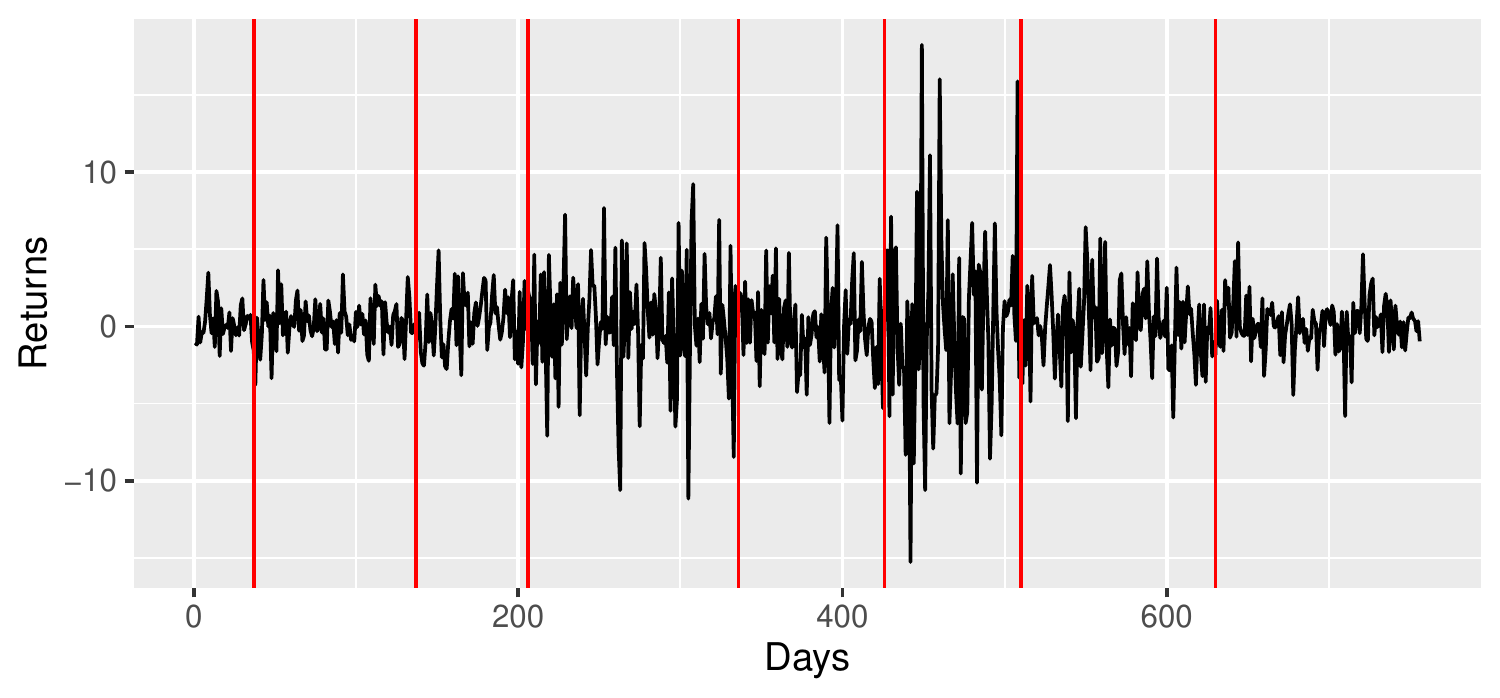}
 	\includegraphics[width = .45\textwidth]{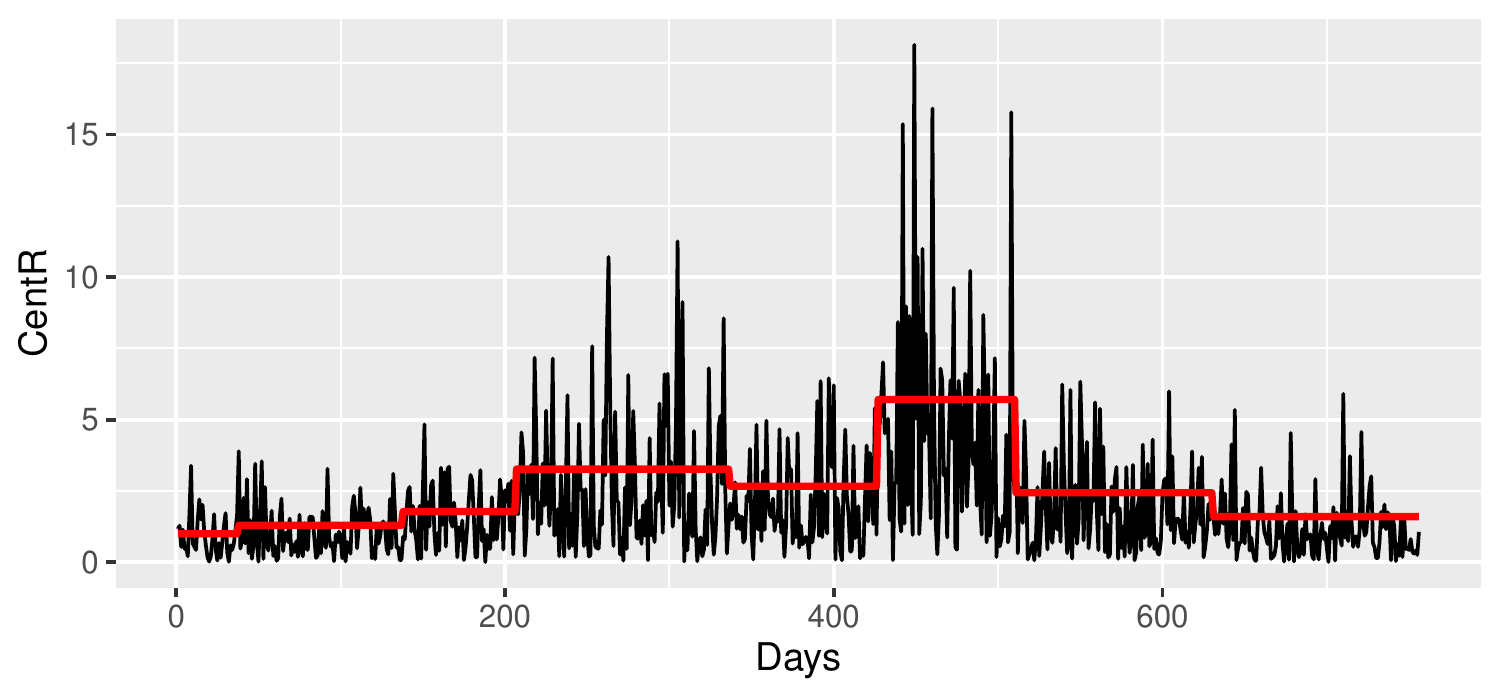}
	\caption{\footnotesize{DRAIP data and change-point detection results by NOSE. Top, original data and locations of estimated change-points (in vertical lines); bottom, centered absolute data and estimated segment-wise scale parameters (in the horizontal polyline). } }\label{fig:return-data-detection}
\end{figure}
We set $L=25$ and $D = 15$ in this case. 
As shown by Figure \ref{fig:return-data-detection}, NOSE detects $7$ change-points. 
\begin{table}[!htb]
    \centering
    \begin{tabular}{l|ccc}
    \hline
     Intervals & Estimated SD & Sample SD & Scale jump sizes\\
     \hline
   $[1, 37]$  & 1.000  & 1.173 & - \\
   $[38, 137]$ & 1.296 & 1.369 & \textbf{0.196} \\
   $[138, 206]$ & 1.778 & 1.873 & 0.504 \\
   $[207, 336]$ & 3.266 & 3.500 & 1.627\\
   $[337, 426]$ & 2.666 & 2.570 & \textbf{-0.930} \\
   $[427, 510]$ & 5.708 & 5.863 & 3.293 \\
   $[511, 630]$ & 2.437 & 2.426 & -3.437 \\
   $[631, 756]$ & 1.599 & 1.599 & -0.827 \\
   \hline
    \end{tabular}
    \caption{\footnotesize{Intervals, intervals partitioned by estimated change-points; Estimated: standard deviation estimated by NOSE; Sample SD: sample SDs on partitioned intervals; Jump sizes, jump sizes calculated from true SDs. }}
    \label{tab:ChangeScale}
\end{table}
We summarize the piecewise standard deviations and estimated standard deviations given by NOSE on the intervals partitioned by the estimated change-points as well as all jump sizes in Table \ref{tab:ChangeScale}.
The estimated scale parameters and sample standard deviations are quite close, and both suggest a shift in the estimated change-points, supporting the detection result by NOSE. 
According to Table \ref{tab:ChangeScale}, the first jump size is pretty small, and no wonder why other segmental approaches miss the point. 
Although the 4th jump size on $t=336$ is absolute enough to be observed by eyes, it is also missed by other segmental approaches. 
We conjecture the reason is that the dispersion of the data on the interval $[207, 427]$ is relatively large. 
As evidence, Figure \ref{fig:qqdensity} shows the Q-Q plot and the density curve of the data on the interval, where we find the samples on the interval are too dispersed to be Gaussian. 
It indicates that may hinder the traditional segmental approaches detecting the change-point on the interval. 
The results of simulations based on the DRAIP data are displayed in Appendix \ref{subsec:SimDRAIP}
The simulation results demonstrate the difficulty of correctly specifying all the change-points in DRAIP data. 
Even so, NOSE still outperforms other approaches.

\begin{figure}[!htb]
    \centering
    \includegraphics[width = 0.35\textwidth]{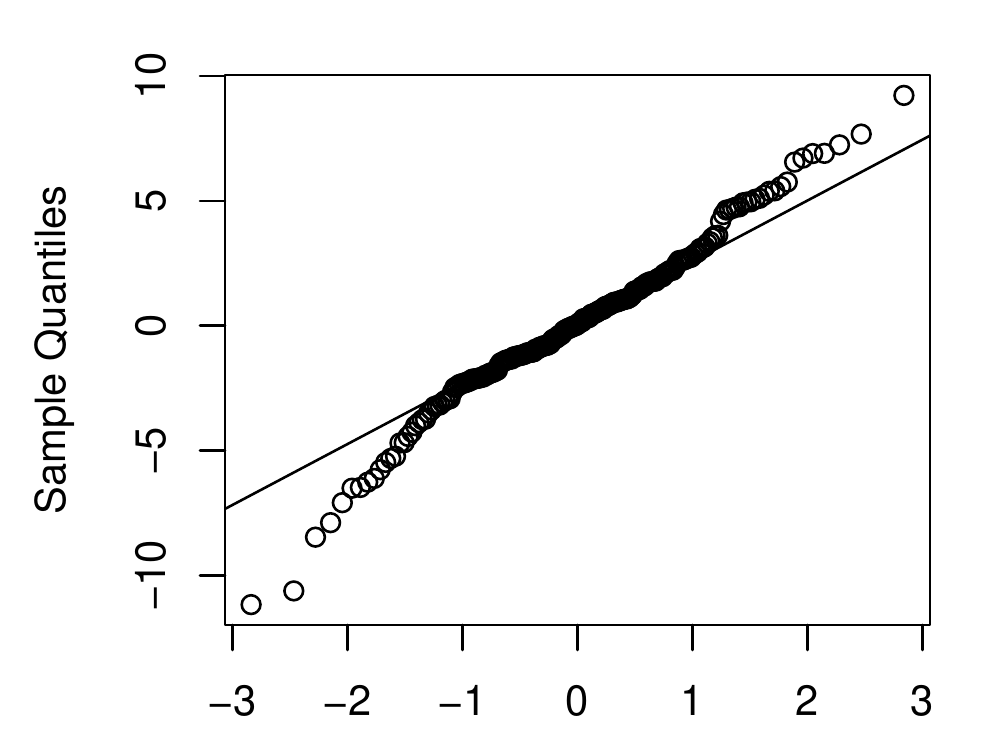}
        \includegraphics[width = 0.35\textwidth]{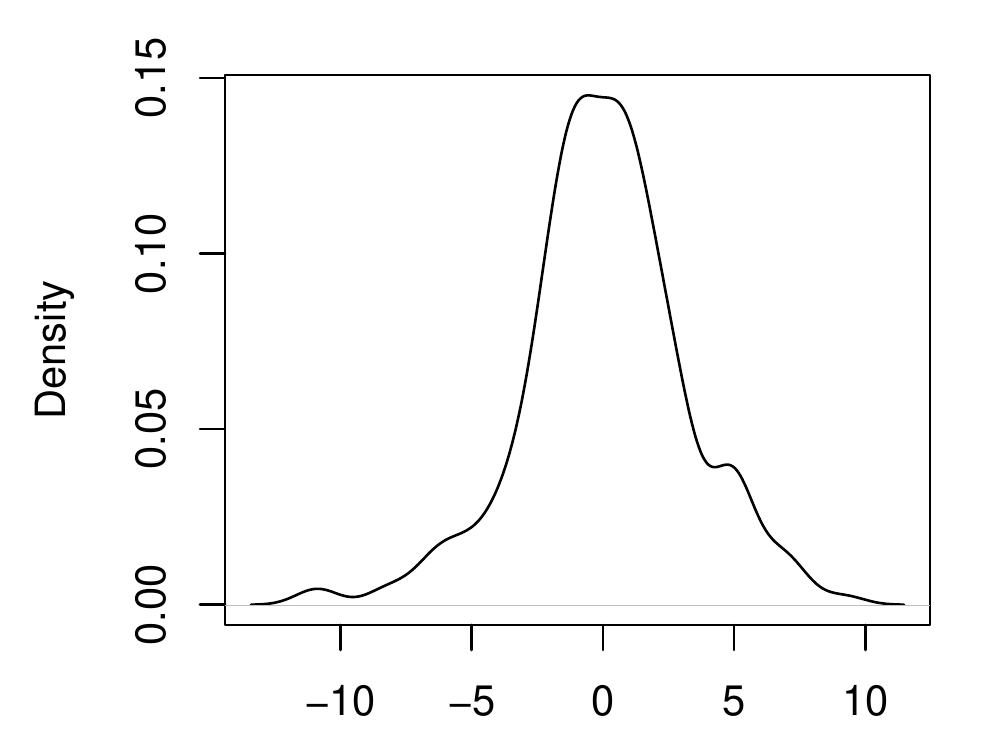}
    \caption{\footnotesize{Q-Q plot and density plot of DRAIP data on interval $[207, 427]$. Left, Q-Q plot; right, density plot. }}
    \label{fig:qqdensity}
\end{figure}

\subsection{ACGH data: shifts in mean}
\label{subsec:APPACGH}
In the second example, we analyze the public dataset of DNA copy numbers using ACGH
for 43 different individuals with a bladder tumor \citep{stransky2006regional}, 
which is available in R package \texttt{ecp} \citep{james2015ecp}. 
For each individual, the copy number is recorded on  2215 locations. 
We aim to detect the changes in the mean of the copy number. 
Hence we employ NOSE for Gaussian mean changes under scenario $(i)$. 
As the number of change-points is usually considered to be quite large, we set $L = 55$ to incorporate sufficiently many change-points. 
The prior belief on the minimum distance between change-points is set as $D = 15$. 
We display the analysis result of the 37th individual in this article. 

\begin{figure*}[!htb]
    \centering
    \subfigure[]{
    \begin{minipage}[t]{0.4\linewidth}
      \centering
\includegraphics[height = .65\textwidth]{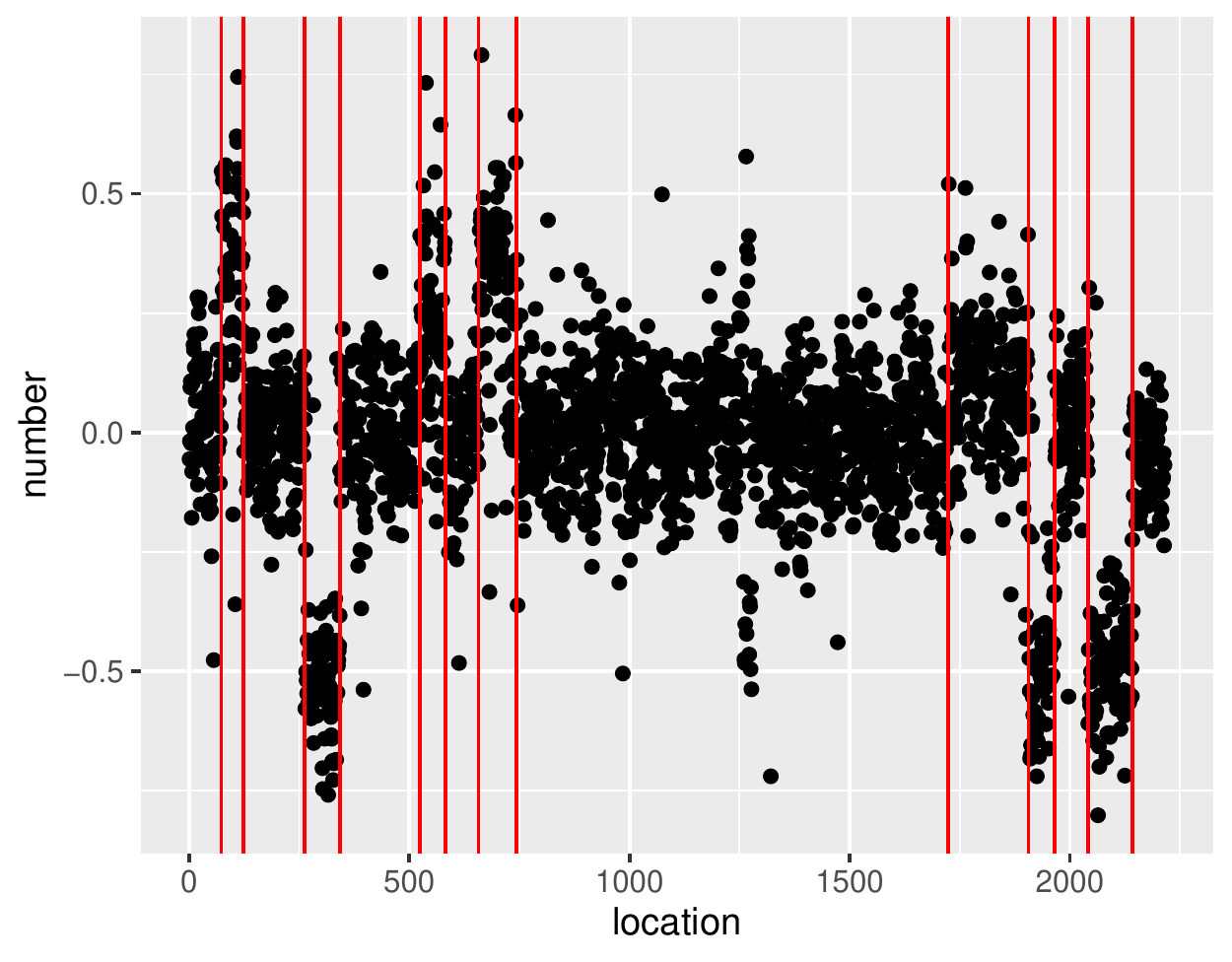}
\label{ACGH_NOSE}
    \end{minipage}
    }
\subfigure[]{
    \begin{minipage}[t]{0.4\linewidth}
      \centering
\includegraphics[height = .65\textwidth]{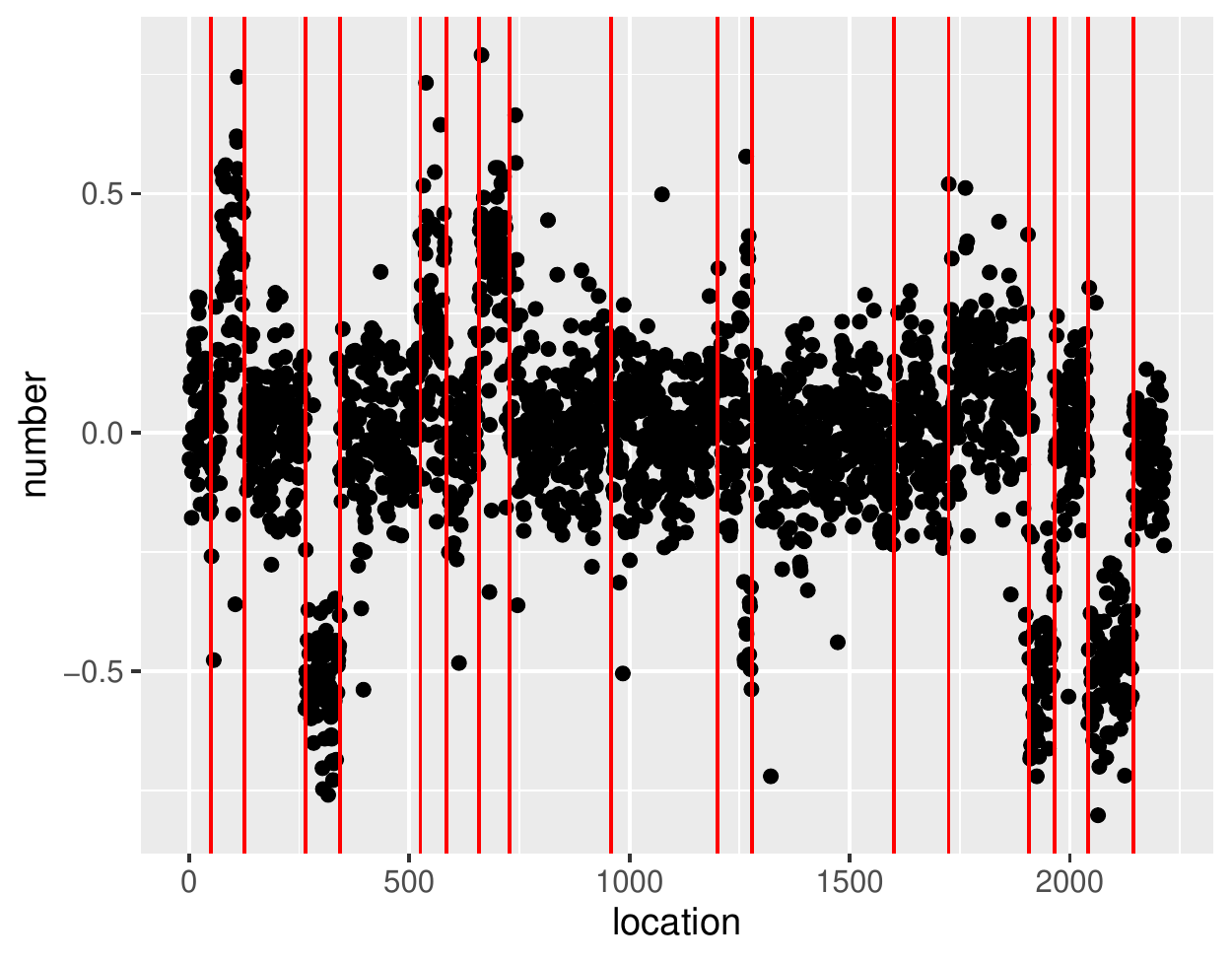}
\label{ACGH_HSMUCE}
    \end{minipage}
    }
    \subfigure[]{
    \begin{minipage}[t]{0.4\linewidth}
      \centering
\includegraphics[height = .65\textwidth]{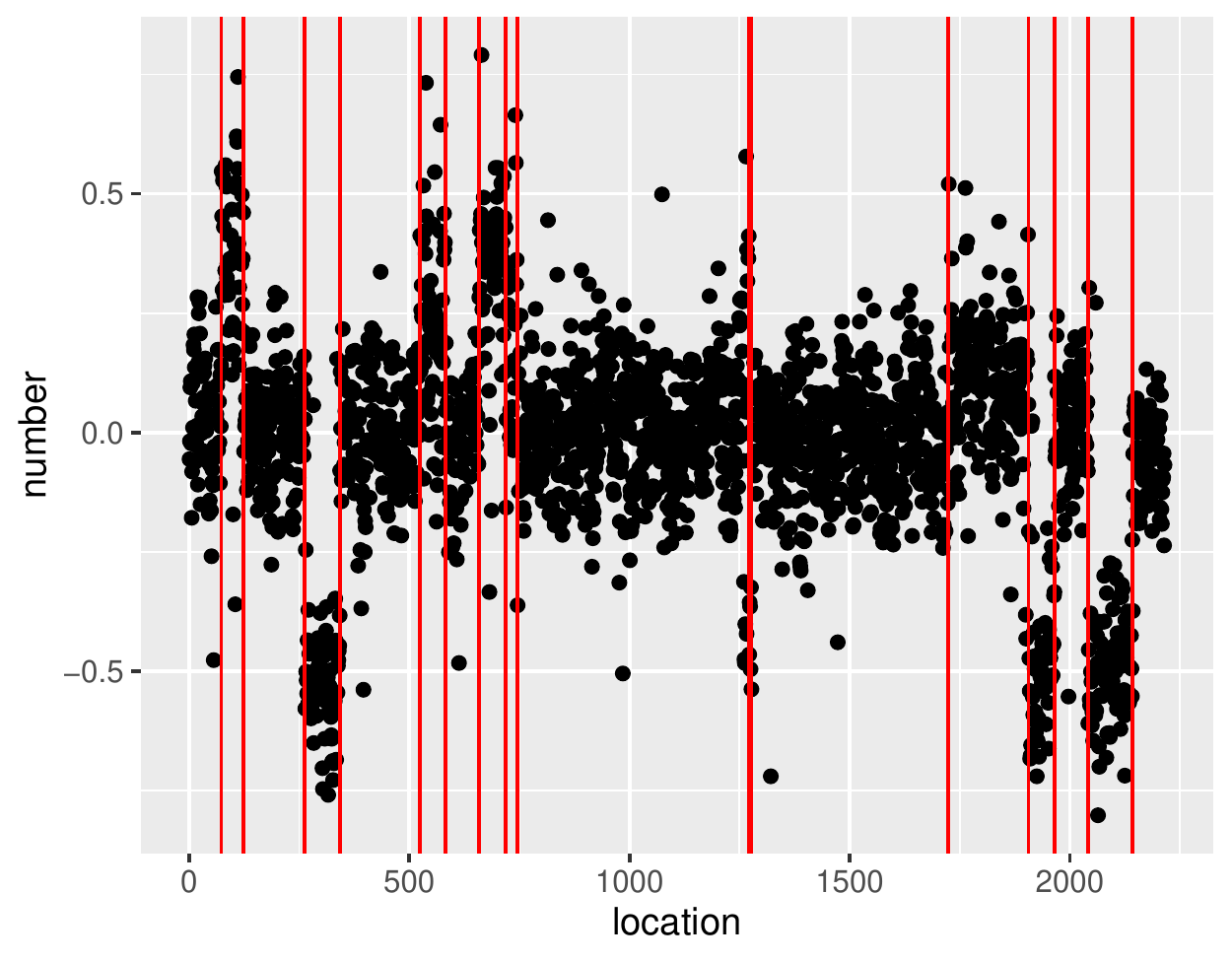}
\label{ACGH_NOT}
    \end{minipage}
    }
\subfigure[]{
    \begin{minipage}[t]{0.4\linewidth}
      \centering
\includegraphics[height = .65\textwidth]{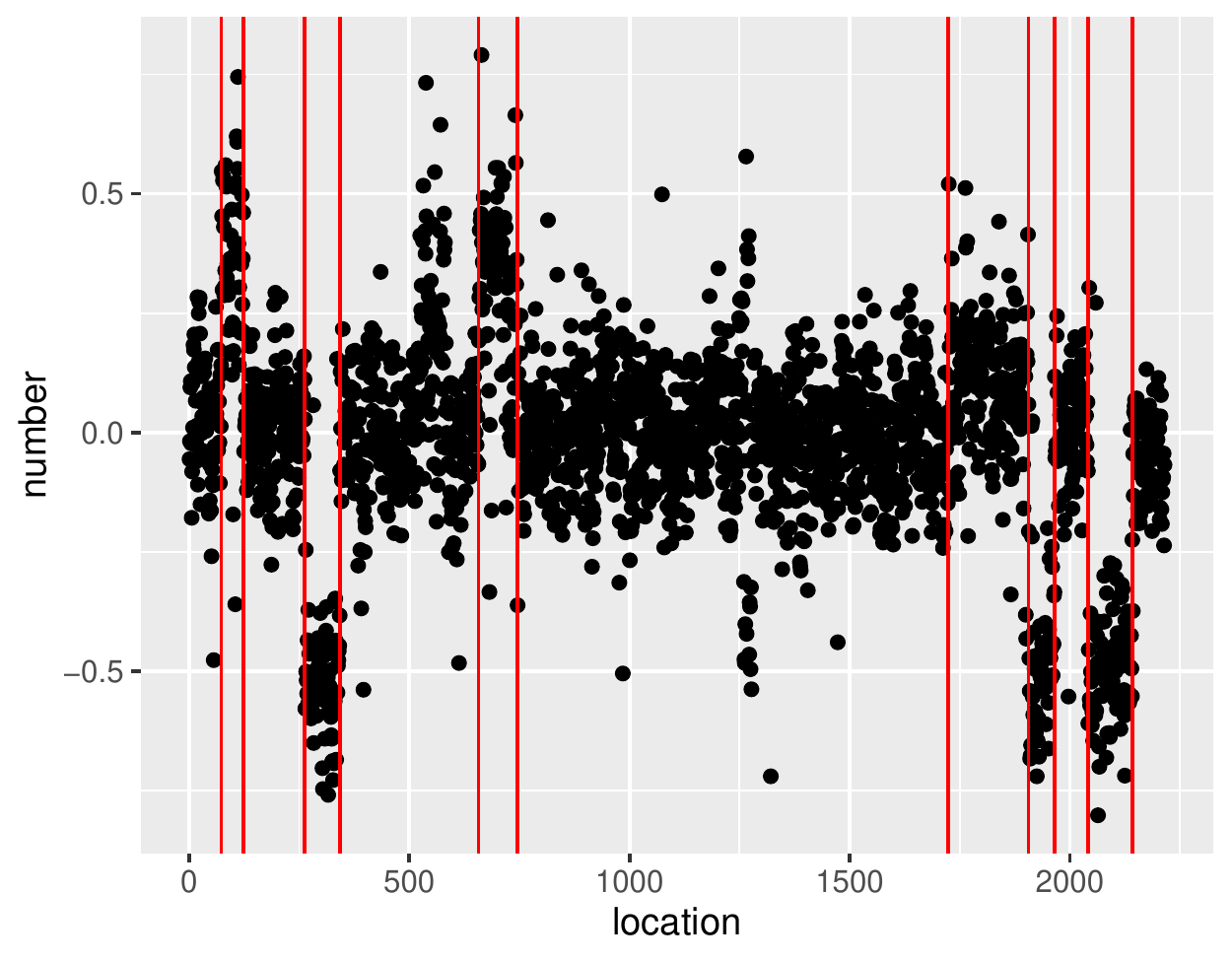}
\label{ACGH_FPOP}
    \end{minipage}
    }
    \caption{\footnotesize{Plot of ACGH data (in black points) and estimated locations of change-points (in red vertical lines). (a), NOSE; (b), HSMUCE; (c), NOT; (d), R-FPOP.  }}
    \label{fig:ACGH}
\end{figure*}

We display detection results of NOSE, HSMUCE \citep{pein2017heterogeneous} and \texttt{NOT} in Figures \ref{ACGH_NOSE}, \ref{ACGH_HSMUCE} and \ref{ACGH_NOT}, where they detect 13, 16, and 15 change-points, respectively. 
Despite some similarities among them, HSMUCE and \texttt{NOT} are more likely to create short 
segments gathering several data points that are far away from the means of adjacent segments. 
We conjecture the points in these short segments are outliers. 
To eliminate the influence of outliers,  we employ the outlier-robust R-FPOP method \citep{fearnhead2019changepoint} equipped with the Huber loss and penalized value $1.345$ as default; see Figure \ref{ACGH_FPOP}. 
We find the data points in those short segments divided by  HSMUCE and \texttt{NOT} are treated as outliers by R-FPOP. 
By comparison, NOSE and R-FPOP produce almost the same segmentation, with the only difference being the segment $(524, 583)$, where NOSE creates a new segment while R-FPOP does not. 
Since this segment contains $60$ data points, we feel that it is more appropriate to partition these points into a new segment rather than identifying them as outliers. 

We generate simulated data from the estimation results by NOSE in Figure \ref{ACGH_NOSE}. 
Since the simulated data are exactly Gaussian without outliers, the results of NOSE, HSMUCE, and R-FPOP are stable and similar to each other, while \texttt{NOT} slightly over-detects the change-points. 
Details are deferred to Appendix \ref{subsec:SimACGH}. 

\subsection{US age-specific fertility rate (ASFR) data: structural changes in linear models}
\label{subsec:APPASFR}
The declining birth rates in many developed countries arouses much interest to the analysis of the annual Age-Specific Fertility Rate (ASFR).
Given the year $t$, let $B_{tj}$ be the number of births during the year to females of a specified age $j$, and $N_{tj}$ be the number of females of the age $j$ in that reference year. 
In year $t$, the ASFR $y_{tj}$ is defined as the ratio between $B_{tj}$ and $N_{tj}$. 
We collect ASFR data in the US from 1940 to 2021 at ages 22 to 35, the age period which covers the age with the highest ASFR. 
Then totally we obtain 1134 responses $y_{tj}$. 

The relationship between the ASFR and specific ages from 22 to 35 seems to be linear. 
Hence, we consider a linear model with changes in the regression coefficient to characterize their association. 
We consider following linear models 
$$
y_{tj} = \beta_0 + \theta(t)X_{tj} + \epsilon_{tj},~ t=1, \ldots, 81, ~j=1, \ldots, 14,  
$$
where the regressor $X_{\cdot j} = 21+j$, the regression coefficient $\theta(t)$ may change along with time $t$, $\beta_0$ is a fixed intercept and $\epsilon_{ts} \sim N(0, \sigma^2)$ are i.i.d. model errors. 
We apply NOSE to detect  changes of $\theta(t)$, where the state of data is set to be the year $t$. 
We set $L=25$ and the minimum distance threshold $D = 15$.

\begin{figure} [!htb]
	\centering
	\subfigure[]{
	\begin{minipage}[t]{\linewidth}
	\centering
	\includegraphics[width = 0.7\textwidth]{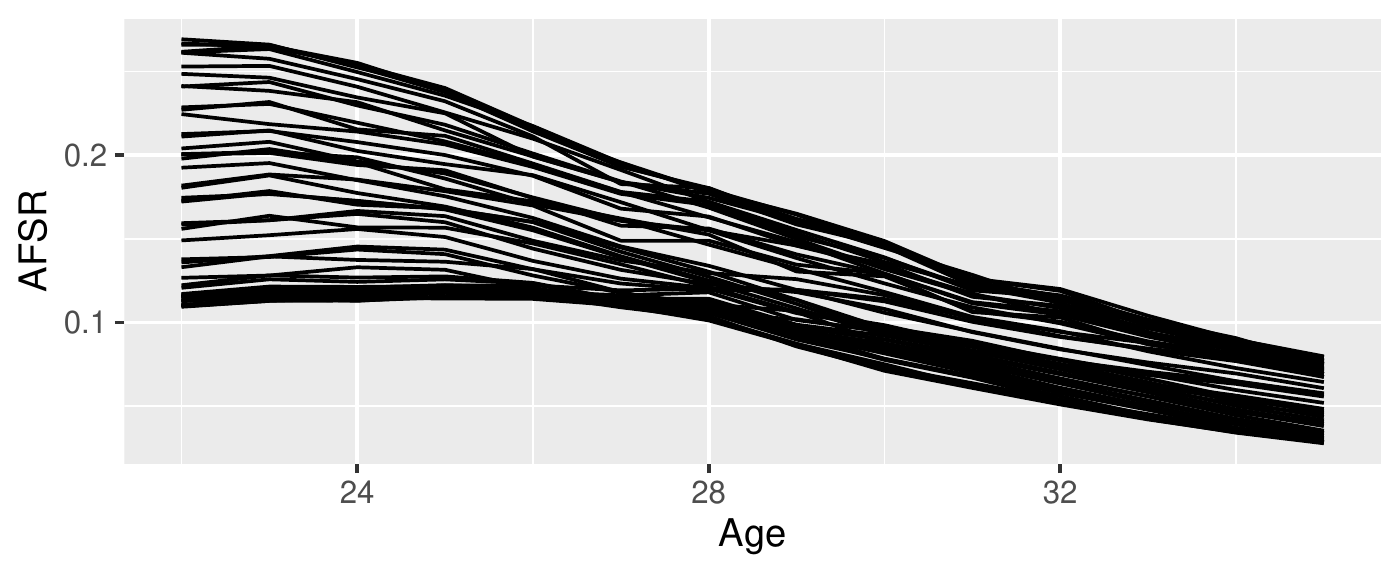}
	\end{minipage}
	}	
	\subfigure[]{
	\begin{minipage}[t]{\linewidth}
	\centering
	\includegraphics[width = 0.7\textwidth]{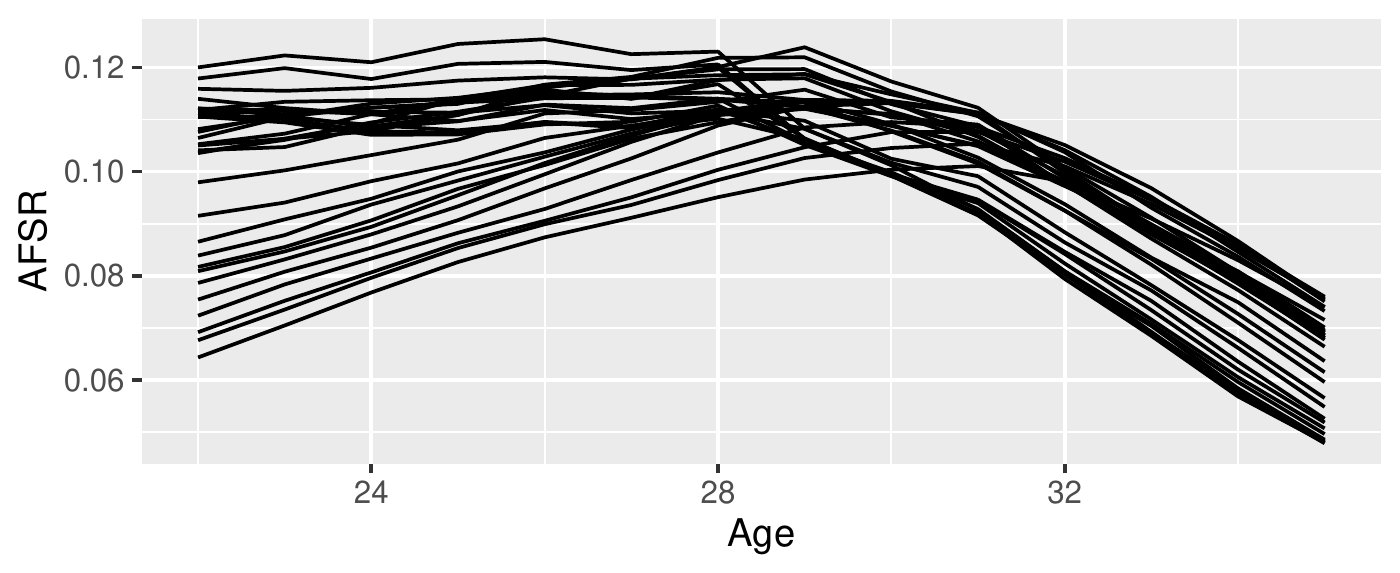}

	\end{minipage}
	}
	 \caption{\footnotesize{Visualization of the pre- and post-change-points ASFR data in US. (a), relationship between age and ASFR before  year 1992; (b), the relationship between age and ASFR after year 1992.}}
	 \label{fig:change_ASFR}
\end{figure}

Only one change-point is detected by NOSE at $t=1992$. 
To understand the effect of the change-point, we plot the curves of ASFR versus age before and after 1992 in Figure \ref{fig:change_ASFR}. 
From the figure, we can clearly see that before the change point, the ASFR decreases almost linearly with age, so that the ASFR is highest at age 22.
However, after the change point, the association between ASFR and age is non-linear and even non-monotonic, with ASFR first increasing and peaking at age 29 and then decreasing. 

\subsection{House prices in London Borough of Newham: structural changes in AR(1) models}
\label{subsec:APPHouseInd}
We further explore a real dataset, the average monthly property price $P_t$ in the London Borough of Newham. 
We take the average of all properties and select the data recorded from January 2010 to November 2020 and we totally have 131 observations. 
This dataset was once analyzed by \cite{fryzlewicz2021narrowest} to identify the shortest interval of change-points under an AR(1) model. 
We adopt the AR(1) model $P_t = \theta(t) P_{t-1} + \theta_0 + \epsilon_t$, where the autocorrelation coefficient $\theta(t)$ is treated as the global parameter that may change, the intercept $\theta_0$ is fixed, and $\epsilon_t \sim N(0, \sigma^2)$ are independent model errors. 
We set $L=25$ and $D = 15$. 

As shown in Figure \ref{fig:logpriceCP}, NOSE detects 1 change-point locating in Oct 2016 (location 82). 
The date of change-point is close to the beginning of the vote of Britain’s EU membership referendum, indicating that the structural change may be caused by the event. 
The WBSTS method cannot detect change-point after processing; 
the B-P method provides a similar result of change-point detection, where the estimated location is 79. 
Meanwhile, the estimated confidence interval given by R package \texttt{nsp} \citep{fryzlewicz2021narrowest} is (24, 97), which covers the change-point estimated by NOSE.

\begin{figure} [!ht]
	\centering
	\includegraphics[width = .6\textwidth]{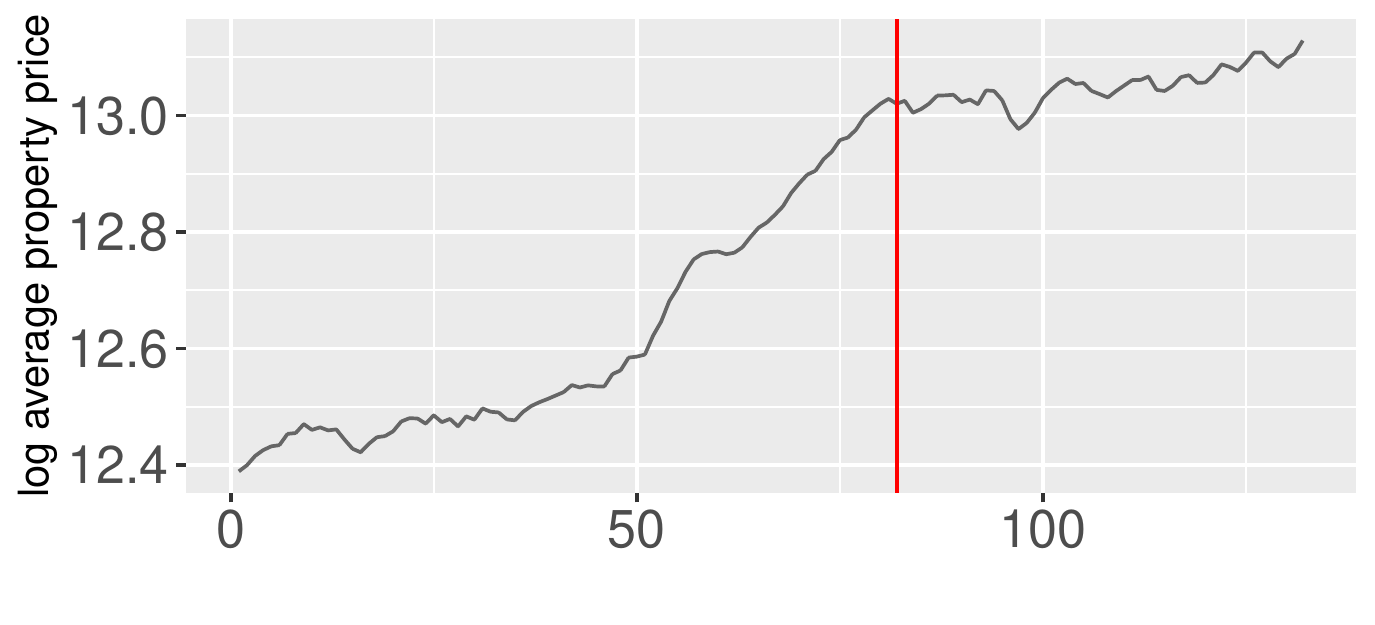}
	\caption{\footnotesize{House prices in London Borough of Newham and locations of estimated change-points given by NOSE (the red line). } }\label{fig:logpriceCP}
\end{figure}

\section{Discussion}\label{sec:discussion}
The proposed NOSE methodology in this article has two pieces of uniqueness. 

i.) NOSE models the entire abrupt change process directly through $\theta(t) $ ($\equiv \theta$) rather than the aggregating all sets of segment parameters in prevailing methods. 
In this sense, NOSE can be viewed as an infinite-dimensional extension of \texttt{StepSignalMargiLike} \citep{du2016stepwise}, which represents the abrupt change scheme through a finite-dimensional vector $\bm{\theta}_{1:m}$ with each entry being the latent feature of a segment. 
Their $m$ is the maximum number of segments and needs to be prespecified. 
Thus, any misspecification of $m$ is risky to their results of change-point detection. 
In contrast, the atomic expression of $\theta(t)$ in NOSE looks as if a much ``denser" segmentation than \texttt{StepSignalMargiLike} so that $m$ can go to infinity. 
Hence, NOSE is exempted from the sensitivity of the upper bound of the number of segments. 

ii.) NOSE may be the first approach that deals with the sparsity of the vector of \textit{jump heights} (vertical), unlike existing penalized approaches that focus on the  sparsity of the vector of \textit{jump locations} (horizontal). 
In detail, NOSE identifies change-points by the posterior estimates ($\zeta_i$) of jump heights/sizes ($d_i$) on states ($i$), where any  non-negligible jump height/size indicates a change. 
In the broad sense, NOSE may be viewed as a vertical extension of SMUCE \citep{frick2014multiscale} in searching for sparse solutions under a high-dimensional regression setting. 
Different sparsity reviews lead to different theoretical properties:
SMUCE reaches minimaxity in estimation of change locations (up to a logarithm) and consistency of estimation of the number of change-points under the frequentist paradigm; 
NOSE obtains the posterior minimax optimality in recovering the jump height vector and posterior consistency of \textit{both} the number and the locations of change-points under the Bayesian paradigm. 

We may try to explain the success of NOSE from the perspective of cohesion and repulsion in clustering \citep{natarajan2023cohesion}. 
To some extent, \textit{change-point detection may be viewed as an ordered clustering task on sequential data. 
Those data points within the same segment can be viewed as a cluster}. 
Quoting \cite{natarajan2023cohesion}, ``clusters are composed of objects which have small dissimilarities among themselves (cohesion) and similar dissimilarities to observations in other clusters (repulsion)". 
Intuitively, jump size may be viewed as a metric of dissimilarity between data points. 
In our approach, the nearly black jump size vector indicates that there are no dissimilarities with-in a cluster but significant dissimilarities across different clusters, leading to an ideal clustering under the cohesion-repulsion principle.

\newpage

\begin{center}
    {\LARGE \bf Appendix }
\end{center}
\renewcommand*{\thesection}{A.\arabic{section}}
\renewcommand*{\thesubsection}{A.\arabic{section}.\arabic{subsection}}
\setcounter{section}{0}

\spacingset{1.7} % DON'T change the spacing

\section{Proofs}\label{sec:proof}
\subsection{Proof of Theorem 1}
\label{subsec:proofTheominmax}
Before proving Theorem \ref{minimax}, the necessary propositions and a lemma are given as follows.
\begin{proposition}[Gaussian sequence prior]
    Let $S \subset \{1, \ldots, p\}$ be the non-zero coordinates of the jump size vector $\bm{d}$ of  cardinality $|S|$. 
Let $\bm{d}_S$ be the set of non-zero values $\{d_i, i\in S\}$. 
Let $\pi_{L_n}$ be a prior selects a dimension $s$ from $\{0, 1, \ldots, L\}$. 
Under the priors for $\bm{\xi}$ and $\bm{h}$ in \eqref{prior:xi} and \eqref{prior:h}, for a fixed truncation number $L$, the prior for $\bm{d}$ with non-zero coordinates $S$ is in the form of 
\begin{align}\label{prior:d}
\pi(\bm{d}) \propto \frac{1}{\binom{L_n}{|S|}}\pi_{L_n}(|S|)g_S(\bm{d}_S)\delta_{0}(\bm{d}_{S^c}). 
\end{align}
\end{proposition}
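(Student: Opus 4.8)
The plan is to obtain the stated form by writing down the joint prior of the latent quantities $(\bm{\xi}, \bm{Z}, \bm{h})$, expressing $\bm{d}$ as a deterministic transform of them, and then marginalizing the latent structure that does not appear in \eqref{prior:d}. First I would condition on the atom locations $\bm{\xi}$ (a size-$L_n$ subset of $\{1,\ldots,p\}$ by \eqref{prior:xi}) and record the deterministic bookkeeping: since every jump of $\bm{Q}^L$ sits at some $\xi_\ell$, the increment vector satisfies $d_i = h_\ell$ whenever $i=\xi_\ell$ and $d_i=0$ otherwise, so that the non-zero support is $S=\{\xi_\ell: Z_\ell=1\}\subseteq\bm{\xi}$ and $\bm{d}_S$ collects the active heights. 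The remaining randomness, given $\bm{\xi}$, lives in the indicators $\bm{Z}$ and in the slab values, and the whole task reduces to showing that the induced law over $\bm{d}$ factorizes into a dimension part, a combinatorial part, and a value part.

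Next I would peel off the three factors in turn. For the dimension, define $\pi_{L_n}(s) := \Pr\{|\bm{Z}|=s\}$, the marginal law of the cardinality of the IBP row under \eqref{prior:h} after integrating out $\alpha$ and the stick weights $p_j$; this is exactly the object the statement names $\pi_{L_n}$. For the value factor, note that conditional on $Z_\ell=1$ the heights are i.i.d.\ draws from the slab $F_0=\text{Laplace}(0,\lambda)$ with $\lambda$ fixed, and that given the activation pattern $\bm{Z}$ the heights carry no further dependence on $\alpha$ or the $p_j$; hence the non-zero coordinates contribute the product density $g_S(\bm{d}_S)=\prod_{i\in S} f_0(d_i)$ while the zero coordinates contribute the point mass $\delta_0(\bm{d}_{S^c})$, with respect to the mixed (counting $\times$ Lebesgue) dominating measure. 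The only genuinely combinatorial step is the factor $1/\binom{L_n}{|S|}$: I would argue that, conditional on $|\bm{Z}|=s$, the support $S$ is uniformly distributed over the $\binom{L_n}{s}$ size-$s$ subsets of the $L_n$ atom slots, each therefore receiving weight $1/\binom{L_n}{s}$. Multiplying the three factors yields \eqref{prior:d}, and since the expression depends on $\bm{\xi}$ only through $S$ and $|S|$, the conditioning can be suppressed.

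I expect the uniformity claim underlying $1/\binom{L_n}{|S|}$ to be the main obstacle, because it is precisely the place where the ordered nature of the IBP weights could spoil the argument: the stick-breaking construction makes $\eta_1\ge\eta_2\ge\cdots$ decreasing, so the raw activation set $\{\ell:Z_\ell=1\}$ is biased toward small indices and is \emph{not} exchangeable in $\ell$. The resolution is that the atoms are matched to locations through a uniform random assignment (equivalently, $\bm{\xi}$ is drawn uniformly without replacement, so its ordering is a uniform permutation of the chosen location set); conditioning on $|\bm{Z}|=s$ and applying this independent uniform relabelling sends the $s$ active heights to a uniformly random size-$s$ subset of the slots, which restores exchangeability and delivers the flat $1/\binom{L_n}{s}$ weight. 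I would make this rigorous by conditioning on the unordered location set and invoking the exchangeability of the relabelling, keeping careful track that the hyperparameters enter only through $\pi_{L_n}$ and neither through the combinatorial factor nor through $g_S$.
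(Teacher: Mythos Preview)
Your approach is essentially the same as the paper's: both obtain \eqref{prior:d} by decomposing the draw of $\bm{d}$ into (i) placing the $L_n$ atoms via $\bm{\xi}$, (ii) fixing the dimension through $\pi_{L_n}(s)=\Pr\{|\bm{Z}|=s\}$, and (iii) drawing the active heights i.i.d.\ from the slab $F_0$ to form $g_S$. You are in fact more explicit than the paper on the one delicate point---that the stick-breaking weights make the raw activation set $\{\ell:Z_\ell=1\}$ non-exchangeable, and that it is the independent uniform-without-replacement draw of $\bm{\xi}$ that restores uniformity of $S$ over size-$|S|$ subsets and yields the $1/\binom{L_n}{|S|}$ factor; the paper's proof simply asserts the combinatorial weight without isolating this mechanism.
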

\begin{proof}
Drawing a sample of $\bm{d}$, with non-zero coordinates set $S$ from priors \eqref{prior:xi} and \eqref{prior:h} can be divided into the following steps
    \begin{enumerate}
    \item Draw $\bm{\xi}$ so that $S \subset \bm{\xi}_{1:L_n}$. 
    \item Given $\xi_\ell$, draw indicators $Z_\ell$ so that $\sum_{\ell=1}^{L_n} Z_\ell = |S|$ and assign those non-zero indicators to locations $S$. 
    \item Given the non-zero indicators $Z_\ell$, draw $\bm{d}_{S}$ from the slab term of $h_\ell$ and assign zeros to other coordinates. 
\end{enumerate}
In terms of step 1, recall that a draw of $\bm{\xi}_{1:L}$ is a draw of $L$ elements of $\{1, \ldots, p\}$ without replacement. 
Hence we have
\begin{align*}
   Pr\{S \subset \bm{\xi}_{1:L}\} = \left\{\binom{p}{L_n} \binom{L_n}{|S|}\right\}^{-1}.
\end{align*}
In step 2, we immediately have 
$$
\pi_{L_n}(|S|) = Pr\left\{|\bm{Z}| = |S|\right\}.
$$
In step 3, we immediately have that 
$$
g_S(\bm{d}_S) = \prod_{\ell \in S} F_0
$$
becomes the product of Laplace density. 
Then the prior form in \eqref{prior:d} is obtained as the product of the above terms. 
\end{proof}

\begin{remark}
Note that in the limiting case $L_n=p$, the prior in form \eqref{prior:d} takes the same form as the prior (1.2) in \cite{castillo2015bayesian}. 
Similarly, the dimension prior $\pi_{L_n}$ in \eqref{prior:d} plays the same role of $\pi_p$ in their seminal work and replaces $\pi_p$.
Consequently, it suffices to study the properties of $\pi_{L_n}(s)$ with $L_n \to \infty$, and definitely, $p=(n-1) \to \infty$. 
\end{remark}

In terms of the properties of dimension prior $\pi_{L_n}$, we shall show that $\pi_{L_n}$ has an exponential decrease by appropriate selection of the hyperparameters $(a, b)$ in the Gamma prior for $\alpha$, given that $L_n$ is sufficiently large. 
We starts from the following lemma of Poisson approximation.

\begin{lemma}
        [Serfling's Poissson approximation] \label{Serfling}
Let $X_1, \ldots, X_n$ be (possibly dependent) Bernoulli random variables with $p_1 = Pr\{X_1 = 1\}$ and 
$$
p_i = Pr\{X_i=1|\mathcal{F}_{i-1}\}, 
$$
where $\mathcal{F}_i$ denotes the $\sigma$-field generated by $X_1, \ldots, X_i$. 
Let $W_n = \sum_{i=1}^n X_i$  and $Y$ be Poisson with mean $\lambda = \sum_{i=1}^n E(p_i)$. 
Then
$$
\frac{1}{2}\sum_{k=1}^n|Pr\{W_n = k\} - P\{Y= k\}| \le \sum_{i=1}^n E(p_i^2) + \sum_{i=1}^n E|p_i - E(p_i)|. 
$$
\end{lemma}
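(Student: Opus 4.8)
The plan is to recognize the left-hand side as the total variation distance $d_{TV}(W_n, Y)$ between the law of $W_n$ and the Poisson law of $Y$, and to bound it by the Stein--Chen method. For each subset $A \subseteq \{0,1,2,\ldots\}$, let $g_A$ solve the Poisson Stein equation $\lambda g_A(w+1) - w g_A(w) = \mathbf{1}\{w \in A\} - Pr\{Y \in A\}$. Standard Stein--Chen theory (Barbour--Holst--Janson) supplies the uniform bounds $\|g_A\|_\infty \le 1$ and $\|\Delta g_A\|_\infty := \sup_w |g_A(w+1)-g_A(w)| \le 1$. Substituting $w = W_n$ and taking expectations turns each coordinate difference into $Pr\{W_n \in A\} - Pr\{Y \in A\} = E[\lambda g_A(W_n+1) - W_n g_A(W_n)]$, so it suffices to bound $|E[\lambda g(W_n+1) - W_n g(W_n)]|$ uniformly over admissible $g$ with $\|g\|_\infty, \|\Delta g\|_\infty \le 1$, and then take the supremum over $A$.

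Next I would split the Stein expression along the summands. Writing $\lambda = \sum_i E(p_i)$ and $W_n = \sum_i X_i$, decompose $E[\lambda g(W_n+1) - W_n g(W_n)]$ as the sum of a \emph{fluctuation} term $E[(\lambda - \sum_i p_i)g(W_n+1)]$ and a \emph{conditional Poisson} term $\sum_i E[p_i g(W_n+1) - X_i g(W_n)]$. The fluctuation term is immediately controlled: since $\lambda - \sum_i p_i = \sum_i (E(p_i) - p_i)$ and $\|g\|_\infty \le 1$, it is at most $\sum_i E|p_i - E(p_i)|$, which is the second term of the claimed bound. This is exactly the price of replacing the random aggregate intensity $\sum_i p_i$ by its deterministic mean $\lambda$, and it vanishes when the $p_i$ are deterministic.

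For the conditional Poisson term I would work index by index. Using that $X_i$ takes values in $\{0,1\}$, $X_i g(W_n) = X_i g(W_n^{(i)}+1)$ with $W_n^{(i)} = W_n - X_i$; conditioning on $\mathcal{F}_{i-1}$ then makes $E[X_i \mid \mathcal{F}_{i-1}] = p_i$ available, with $p_i$ being $\mathcal{F}_{i-1}$-measurable. In the idealized situation where $X_i$ is independent of $W_n^{(i)}$, a one-line computation gives $E[p_i g(W_n+1) - X_i g(W_n^{(i)}+1)] = E[p_i^2 \, \Delta g(W_n^{(i)}+1)]$, which $\|\Delta g\|_\infty \le 1$ bounds by $E(p_i^2)$; summing over $i$ produces the first term. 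The squared probability is natural here: one factor $p_i$ is the chance the increment $X_i$ occurs, and a second $p_i$ is the conditional mean of the shifted indicator.

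The main obstacle is that the construction is not idealized: because $\mathcal{F}_{i-1}$ generates only the past, the future indicators $X_j$ with $j > i$ depend on $X_i$, so $W_n^{(i)}$ is genuinely correlated with $X_i$ and the clean factorization above does not hold verbatim. Resolving this is the heart of the argument: after conditioning on $\mathcal{F}_{i-1}$, where $X_i \sim \mathrm{Bernoulli}(p_i)$, I would couple $W_n$ with its law under the event $\{X_i = 1\}$ and track the discrepancy unit by unit, using $\|\Delta g\|_\infty \le 1$ so that each single-unit shift of the argument contributes at most one increment of $g$; the forward dependence then still collapses into the per-index bound $E(p_i^2)$. Assembling the fluctuation bound $\sum_i E|p_i - E(p_i)|$ with the conditional-Poisson bound $\sum_i E(p_i^2)$, and taking the supremum over $A$, yields the stated inequality. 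This is in effect Serfling's theorem, so one may alternatively invoke Serfling (1975) directly.
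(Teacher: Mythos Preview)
The paper does not prove this lemma at all: it is stated as a named, cited result (Serfling's Poisson approximation) and then immediately used. So your closing sentence --- ``one may alternatively invoke Serfling (1975) directly'' --- is precisely what the paper does, and is the correct resolution here.

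Your Stein--Chen sketch is a reasonable modern attack and the fluctuation term is handled cleanly: $\bigl|E[(\lambda-\sum_i p_i)\,g(W_n+1)]\bigr|\le \|g\|_\infty\sum_i E|p_i-E(p_i)|$ is fine. The gap is in the conditional Poisson term. Carrying your own algebra one step further, with $W_n^{(i)}=W_n-X_i$ one gets
\[
E\bigl[p_i g(W_n{+}1)-X_i g(W_n)\bigr]
= E\bigl[(p_i-X_i)\,g(W_n^{(i)}{+}1)\bigr] + E\bigl[p_i X_i\,\Delta g(W_n^{(i)}{+}1)\bigr].
\]
The second piece is indeed bounded by $E(p_i^2)$ via $E[p_iX_i]=E[p_i\,E(X_i\mid\mathcal F_{i-1})]=E(p_i^2)$. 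But the first piece does \emph{not} vanish: $p_i-X_i$ is mean-zero given $\mathcal F_{i-1}$, while $g(W_n^{(i)}+1)$ depends on $X_{i+1},\dots,X_n$, which are \emph{not} $\mathcal F_{i-1}$-measurable and are genuinely correlated with $X_i$ through the filtration. Your proposed fix (``couple $W_n$ with its law under $\{X_i=1\}$ and track the discrepancy unit by unit'') is where the entire difficulty lives, and as written it is a hope, not an argument; a crude bound here gives only $2E[p_i(1-p_i)]$, which sums to $O(\lambda)$ and is useless. Serfling's own proof handles exactly this forward dependence by a sequential/telescoping construction rather than the leave-one-out Stein decomposition, and that is the idea your sketch is missing. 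Absent a concrete coupling that collapses this residual to $O(E(p_i^2))$, cite Serfling (1975) as the paper does.
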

The result of Lemma \ref{Serfling} will be used to prove the following proposition. 
Our assertions are given under any fixed $L_n$. 

\begin{proposition}[Exponential decrease] \label{prop: selection}
Let $a=c_1L_n^{-c_3}, b=c_2L_n^{c_4}$ for some constants $c_1, c_2 >0$ and $c_3 > c_4+1 \ge 2$ in prior \eqref{prior:h}. 
The following assertion holds as $n, L_n\to \infty$. 

There exists a constant $C_0 \in (0, 1)$, 
\begin{align}\label{decreaseCond}
   \pi_{L_n}(s) \le C_0 \pi_{L_n}(s-1), for~ s=1, \ldots, L_n. 
\end{align}
\end{proposition}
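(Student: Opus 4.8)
The plan is to identify $\pi_{L_n}(s)=\Pr\{|\bm{Z}|=s\}$ with the mass function of a mixed-Poisson variable and then read off the ratio $\pi_{L_n}(s)/\pi_{L_n}(s-1)$ from the resulting negative-binomial form. First I would condition on $\alpha$ and on the stick proportions $p_1,p_2,\ldots$, equivalently on the inclusion probabilities $\eta_\ell=\prod_{j=1}^\ell p_j$. Given these, the $Z_\ell$ in \eqref{prior:h} are independent $\mathrm{Bernoulli}(\eta_\ell)$, so Lemma~\ref{Serfling} applies with conditional probabilities equal to the (now deterministic) $\eta_\ell$: the term $\sum_\ell E|p_\ell-Ep_\ell|$ vanishes and the Poisson-approximation error is governed solely by $\sum_{\ell=1}^{L_n}\eta_\ell^2$. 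Using the $\mathrm{Beta}(\alpha,1)$ moments $E(p_j)=\alpha/(\alpha+1)$ and $E(p_j^2)=\alpha/(\alpha+2)$ gives $E\bigl(\sum_\ell\eta_\ell^2\mid\alpha\bigr)=\sum_\ell(\alpha/(\alpha+2))^\ell\le\alpha/2$, whence after integrating against the $\mathrm{Gamma}(a,b)$ prior the total-variation error is of order $ab=o(L_n^{-1})$ by the hyperparameter choice $c_3>c_4+1$. Thus $\mathcal{L}(|\bm{Z}|)$ is close, in total variation, to the Poisson mixture with random mean $\Lambda=\sum_\ell\eta_\ell$.

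Next I would replace the random mean $\Lambda$ by its conditional mean $\lambda_\alpha=E(\Lambda\mid\alpha)=\sum_{\ell=1}^{L_n}(\alpha/(\alpha+1))^\ell$, which is legitimate because $\Lambda$ concentrates around $\lambda_\alpha$ and $\lambda_\alpha\to\alpha$ as $L_n\to\infty$ (the Teh et al. limit quoted after Theorem~\ref{minimax}). Marginalizing $\mathrm{Poisson}(\lambda_\alpha)$ over $\alpha\sim\mathrm{Gamma}(a,b)$ yields the Poisson--Gamma (negative-binomial) leading term
\[
\pi_{L_n}(s)\approx\frac{\Gamma(s+a)}{s!\,\Gamma(a)}\,\frac{b^{\,s}}{(b+1)^{\,s+a}},
\]
for which the ratio is explicit, $R(s)=\dfrac{(s-1+a)\,b}{s\,(b+1)}$. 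Since $a<1$, the factor $(s-1+a)/s<1$ for every $s\ge1$, so $R(s)<b/(b+1)<1$; taking $C_0=b/(b+1)$ (or any value strictly between $b/(b+1)$ and $1$) proves the claimed decrease for the approximating law. The hyperparameter conditions enter twice: $ab=o(L_n^{-1})$ forces the very first ratio $R(1)=ab/(b+1)$ to be of vanishing order, which is the \emph{strict} part of the exponential decrease and the feature that lets the prior place negligible mass on each positive dimension while retaining enough weight near $K_n$; and $b=c_2L_n^{c_4}\to\infty$ keeps $C_0$ bounded away from $0$ while staying below $1$.

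The last step, and the main obstacle, is transferring the strict inequality from the negative-binomial surrogate $\pi_{L_n}^{\mathrm{NB}}$ to the true $\pi_{L_n}$. I would fix $C_0=(2b+1)/(2(b+1))\in(0,1)$, slightly above $b/(b+1)$, so that the surplus $(C_0-b/(b+1))\,\pi_{L_n}^{\mathrm{NB}}(s-1)$ can absorb the additive Poisson-approximation error at consecutive dimensions. In the bulk (small and moderate $s$) the surrogate masses dominate the $o(L_n^{-1})$ error and the inequality is preserved after enlarging $C_0$. The delicate regime is $s$ close to $L_n$, where $\pi_{L_n}^{\mathrm{NB}}(s)$ is smaller than the aggregate total-variation error and the finite-truncation saturation $\lambda_\alpha<L_n$ matters; there I would not rely on the Poisson surrogate at all but bound the exact conditional tail $\Pr\{|\bm{Z}|=s\mid\alpha\}$ directly through the product Beta moments $\prod_{m}\alpha/(\alpha+m)$ together with the $\mathrm{Gamma}(a,b)$ tail of $\alpha$, showing that $\pi_{L_n}(s)$ still decays geometrically with ratio below $C_0$. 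Making the error-versus-mass comparison uniform over all $s=1,\dots,L_n$, i.e. ensuring the Serfling error never swamps the gap between successive masses, is where the bulk of the technical work lies.
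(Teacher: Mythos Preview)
Your proposal follows essentially the same route as the paper: approximate $\pi_{L_n}$ by a Poisson--Gamma (negative-binomial) law via Lemma~\ref{Serfling}, read off the geometric ratio $b/(b+1)$ for the surrogate, and transfer the inequality back through an $O(ab)$ total-variation bound. Two tactical differences are worth flagging. First, by conditioning on the full stick-breaking weights $\eta_\ell$ you obtain a \emph{random} Poisson mean $\Lambda=\sum_\ell\eta_\ell$ and are then forced into the extra concentration step $\Lambda\approx\lambda_\alpha\approx\alpha$; the paper instead applies Serfling with the filtration $\mathcal{F}_{\ell-1}$ containing $\eta_{\ell-1}$, so that $p_\ell^\ast=\Pr(Z_\ell=1\mid\mathcal{F}_{\ell-1})$ already has $E(p_\ell^\ast)=(\alpha/(\alpha+1))^\ell$ given $\alpha$, and the Poisson parameter $\sum_\ell E(p_\ell^\ast)\to\alpha$ emerges directly without any concentration argument. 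Second, for the transfer step the paper does not split into a bulk/tail dichotomy: it proves the uniform lower bound $\pi^0(s)\gtrsim\{s(1+\log L_n)\}^{-1}$ for all $1\le s\le L_n$ (using $\prod_{m=2}^M(1-1/m)=1/M$ together with the fact that $(b/(b+1))^{L_n}$ stays bounded away from zero when $c_4\ge1$), so that the additive error $2ab=o(\min_{s}\pi^0(s))$ and the ratios $\pi_{L_n}(s)/\pi_{L_n}(s-1)\to\pi^0(s)/\pi^0(s-1)$ uniformly in $s$. This single lower bound replaces both your inflated constant $C_0=(2b+1)/\{2(b+1)\}$ and your separate tail analysis near $s=L_n$, and is where the condition $c_4\ge1$ actually enters.
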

\begin{proof}

We first determine the prior $\pi_{L_n}$ in Step 2. 
Obviously, we have
$$
\pi_{L_n}(s) = \int_{\mathbb{R}} Pr\{|\bm{d}| = s | \alpha\} \pi(\alpha) d\alpha.  
$$
Hence we study the conditional probability $Pr\{|d| = s | \alpha\}$ first, or equivalently, $Pr\left\{|\bm{Z}| = s | \alpha\right\}$.

Note that $\eta_\ell$ have a Markov structure and for $\ell>1$, 
$$
p_\ell^* = Pr\{Z_\ell=1|\mathcal{F}_{\ell-1}\} = Pr\{Z_\ell=1|\eta_{\ell-1}\} = \eta_\ell|\eta_{\ell-1}. 
$$ 
Following \citet[Eq. 14]{teh2007stick}, given fixed $\alpha$,  for $\ell>1$,
$$
f(\eta_\ell|\eta_{\ell-1}) = \alpha \eta_{\ell-1}^{-\alpha}\eta_\ell^{\alpha-1}I(0<\eta_\ell < \eta_{\ell-1}). 
$$
To avoid confusion, we denote $p_1^* = p_1$. 
Then, one drives 
\begin{align*}
    E(p_1^*) &= \int \alpha \eta_1^{\alpha-1}d\eta_1 = \frac{\alpha}{\alpha+1}, \\
    E(p_2^*) &= \int_0^1 \int_0^{\eta_1} \alpha \eta_1^{-\alpha}\eta_2^{\alpha}d\eta_1 d\eta_2 = \left(\frac{\alpha}{\alpha+1}\right)^2, \\
   &  \vdots\\
   E(p_{L_n}^*) &= \int_{0<\eta_L<\cdots<\eta_1<1} \alpha^{L_n} \eta_{L_n}^{\alpha}\prod_{\ell=1}^{L_n-1} \eta_\ell^{-1} d\eta_1\ldots d\eta_{L_n} \\
   &= \left(\frac{\alpha}{\alpha+1}\right)^{L_n}. 
\end{align*}
Similarly, we have 
\begin{align*}
 E(p_1^{*2}) = \frac{\alpha}{\alpha+2}; ~ E(p_\ell^{*2}) = \left(\frac{\alpha}{\alpha+2}\right)^\ell, ~\ell>1. 
\end{align*}

We hence obtain the Poisson approximation of the probability $Pr\{|\bm{d}|=s|\alpha\}$, denoted as $\pi_{\alpha, L_n}^0$. 
As $n, L_n\to \infty$, $\sum_{\ell\ge 1} E(p_\ell^*) = \alpha$. 
We have $\pi_{\alpha, \infty}^0 = \pi_{\alpha}^0 = \text{Pois}(\alpha)$. 

By integrating out $\alpha$ under the Gamma prior in \eqref{prior:h} we obtain the approximated form for $\pi_{L_n}$, denoted as $\pi^0$. 
With the hyperprior $\text{Gamma}(a, b)$, $\pi^0$ becomes a truncated negative binomial distribution 
$$
\pi^0(s) \propto \frac{\Gamma(s+a)}{s!\Gamma(a)}\left(\frac{b}{b+1}\right)^s\left(\frac{1}{b+1}\right)^a, s=0, 1, 2, \ldots, L_n. 
$$

For some $(a, b)$ fixed with given $L_n$, 
$$
\frac{\pi^0(s+1)}{\pi^0(s)} = \left\{1-\frac{1-a}{s+1}\right\}\left(\frac{b}{b+1}\right), s = 0, \ldots, L_n-1. 
$$
And hence it naturally satisfies assertion \eqref{decreaseCond} with $C_0 = b/(b+1)$. 

By the fact that $\prod_{m=2}^M(1-1/m) = M^{-1}$, with $b = c_2L_n^{c_4}$ with $c_4 \ge 1$
we have 
$$
\pi^0(s) \ge Q_{n, s}^{-1} s^{-1}, ~s\ge 1, 
$$
where $Q_{n, s}$ acting as the denominator related to $L_n$ to guarantee that $\sum_{s=1}^{L_n} \pi^0(s)= 1$. 
Since $\log n \le \sum_{i=1}^n i^{-1} \le 1 + \log n$, we have 
\begin{align}
    \label{pi0s}
    \pi^0(s) \ge \frac{Q_0}{s(1+\log L_n)}
\end{align}
for some finite constant $Q_0$ unrelated to $s$.

We then show that the approximated distribution $\pi^0$ is sufficiently close to the true $\pi_{L_n}$ and hence assertion \eqref{decreaseCond} holds for $\pi_{L_n}$. 
By Jensen's inequality, for $\ell\ge 1$, 
\begin{align*}
E|p_\ell^* - E(p_\ell^*)| &\le \sqrt{\text{Var}(p_\ell^*)} \\
& = \sqrt{\left(\frac{\alpha}{\alpha+2}\right)^\ell - \left(\frac{\alpha}{\alpha+1}\right)^{2\ell}}\\
& < \sqrt{\ell \left(\frac{\alpha}{(\alpha+1)^2 (\alpha+2)}\right) 
\left(\frac{\alpha}{\alpha+2}\right)^{\ell}}\\
&< \ell\sqrt{\left(\frac{\alpha}{(\alpha+1)^2 (\alpha+2)}\right) 
\left(\frac{\alpha}{\alpha+2}\right)^{\ell}}
\end{align*}

Hence we have
\begin{align*}
    \sum_{\ell=1}^{L_n} E|p_\ell^* - E(p_\ell^*)| &< \sum_{\ell=1}^\infty E|p_\ell^* - E(p_\ell^*)| \\
    &<\frac{\alpha}{(\alpha+1)(\sqrt{\alpha+2} -\sqrt{\alpha})^2}\\
    &< \frac{\alpha}{(\alpha+1)^2} 
\end{align*}

Consequently, by Lemma \ref{Serfling}, for any $s=0, 1, \ldots, L$, we have 
$$
|Pr\{|d| = s | \alpha\} - \pi_{\alpha, L}^0(s)| \le \left(1+\frac{1}{(\alpha+1)^2}\right)\alpha < 2\alpha
$$
The RHS of the above inequality is obtained by taking $L \to \infty$ on the RHS of Lemma \eqref{Serfling}. 

Finally, we have 
$$
|\pi_{L_n}(s) - \pi^0(s)| = \int_0^{+\infty} |Pr\{|d| = s | \alpha\} - \pi_{\alpha, L}^0(s)| \pi(\alpha)d\alpha. 
$$
Again by Jensen's inequality and \eqref{pi0s}, for $a = c_1 L_n^{-c_3}$,  $b=c_2L^{c_4}$, and $c_3 > c_4+1$, we obtain 
$$
|\pi_{L_n}(s) - \pi^0(s)| \le 2ab = o[\min\limits_{s\ge 0}\pi^0(s)]. 
$$
Consequently, for all $s$, 
$$
\lim\limits_{L_n \to \infty} \frac{\pi_{L_n}(s+1)}{\pi_{L_n}(s)} = \frac{\pi^0(s+1)}{\pi^0(s)}. 
$$
Since $b/(b+1)$ is bounded away from zero, for sufficiently large $L_n$, assertion \eqref{decreaseCond} always holds. 
\end{proof}

Since Theorem \ref{minimax} gives the same assertion as  \citet[Thereom 2, recovery]{castillo2012needles}, we only need to check their conditions. 
\begin{proof}

    For the support of non-zero coordinates of $\bm{d}$, the density $g_S = \prod_{s=1}^{|S|}F_0$, which is product of $|S|$ univariate densities.
    Meanwhile, the Laplace density naturally satisfies condition (2.3) in \cite{castillo2012needles} with a finite second moment. 
    The assertion \eqref{decreaseCond} implies that the prior $\pi_{L_n}$ on dimension has a strict exponential decrease. 
    Furthermore, assertion \eqref{pi0s} implies that 
    $$
    K_n\log(L_n/K_n) \ge M \log(\frac{1}{\pi_{L_n}(K_n)})
    $$
    for a universal constant $M$. 
    Then all conditions required by \citet[Thereom 2, recovery]{castillo2012needles} are satisfied. 
\end{proof}

\subsection{Proof of Theorem \ref{theo:NoSupset}}
\label{subsec:proofnosuperset}
We introduce some necessary notations and present some auxiliary lemmas before proving Theorem  \ref{theo:NoSupset}. 

Under \eqref{GaussianSeq}, for any given data $\bm{y}$, the difference $\bm{y}^* \sim N(\bm{d}_0, I_p)$. 
Let $f_{p, \bm{d}}$ be the density of $N(\bm{d}, I_p)$.
For a Borel measurable subset $\mathcal{B}$ of the parameter space, the posterior probability of $\mathcal{B}$ is written as
   \begin{align}\label{posterior}
    \Pi_{n, L_n}{(\mathcal{B}|\bm{y}^*)} = \frac{\int_{\mathcal{B}}\frac{f_{p, \bm{d}} (\bm{y}^*)}{f_{p, \bm{d}_0} (\bm{y}^*)} d\pi(\bm{d}) }{\int \frac{f_{p, \bm{d}} (\bm{y}^*)}{f_{p, \bm{d}_0} (\bm{y}^*)} d\pi(\bm{d}) } = \frac{N_n(\mathcal{B})}{R_n}, 
   \end{align}
 where $\pi(\bm{d})$ is the prior distribution of $\bm{d}$ given by \eqref{prior:d}. 

We have the following lemma about the lower bound of the denominator $R_n$. 

\begin{lemma}[Lemma 2 in \cite{castillo2015bayesian}]
    \label{Lemma:castilo2015}
    For sufficiently large $p$ and any $\bm{d}_0 \in \mathbb{R}^p$, with support $S_0$, $K_n = |S_0|$, and $g_S$ being the product of Laplace density with scale parameter $\lambda$, we have, almost surely, 
    $$
    R_n \ge \frac{\pi_{L_n}(K_n)}{L_n^{2K_n}} \exp(-\lambda ||\bm{d}_0||_1-1). 
    $$
\end{lemma}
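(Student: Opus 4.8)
This is the evidence (denominator) lower bound of \citet[Lemma 2]{castillo2015bayesian}, and I would reprove it in the present truncated setting by adapting their argument. The plan is to keep, among all the supports appearing in the mixture prior \eqref{prior:d}, only the single term whose support equals the true support $S_0$. Since the integrand in \eqref{posterior} is nonnegative, restricting the integral that defines $R_n$ to vectors $\bm{d}$ supported on $S_0$ already gives
\begin{align*}
R_n \ge \frac{\pi_{L_n}(K_n)}{\binom{L_n}{K_n}}\int_{\mathbb{R}^{K_n}}\frac{f_{p,\bm{d}}(\bm{y}^*)}{f_{p,\bm{d}_0}(\bm{y}^*)}\prod_{i\in S_0}\frac{\lambda}{2}e^{-\lambda|d_i|}\,d\bm{d}_{S_0},
\end{align*}
the prefactor being the prior mass that \eqref{prior:d} assigns to $S_0$.

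Because $\bm{d}$ and $\bm{d}_0$ both vanish off $S_0$, those coordinates cancel in the Gaussian likelihood ratio and the integrand factorizes over $i\in S_0$. Writing $z_i=y_i^*-d_{0i}$ and $u_i=d_i-d_{0i}$, each factor is $\tfrac{\lambda}{2}\int_{\mathbb{R}}\exp\!\big(z_iu_i-\tfrac12u_i^2-\lambda|d_{0i}+u_i|\big)\,du_i$. I would first peel off the penalty on the true jumps by the triangle inequality $e^{-\lambda|d_{0i}+u_i|}\ge e^{-\lambda|d_{0i}|}e^{-\lambda|u_i|}$, which pulls the factor $e^{-\lambda||\bm{d}_0||_1}$ out of the product and supplies the $\exp(-\lambda||\bm{d}_0||_1)$ in the claim.

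The conceptual crux is to lower-bound the remaining one-dimensional integral by a quantity free of the random, unbounded tilt $z_i$; this is precisely what lets the estimate hold for every realization of $\bm{y}^*$, hence almost surely. I would symmetrize about $u_i=0$: restricting to a window $u_i\in[-\rho,\rho]$ and bounding $e^{-u_i^2/2}\ge e^{-\rho^2/2}$ and $e^{-\lambda|u_i|}\ge e^{-\lambda\rho}$, the surviving data-dependent factor is $\int_{-\rho}^{\rho}e^{z_iu_i}\,du_i=2\sinh(z_i\rho)/z_i\ge 2\rho$, since $\sinh(x)/x\ge1$ for all real $x$. The tilt thus disappears and each coordinate contributes at least $\lambda\rho\,e^{-\lambda|d_{0i}|}e^{-\lambda\rho-\rho^2/2}$. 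I expect this deterministic control of the random tilt to be the main obstacle; the rest is bookkeeping.

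Finally I would collect the constants. Multiplying the $K_n$ per-coordinate bounds and using $\binom{L_n}{K_n}\le L_n^{K_n}$ yields
\begin{align*}
R_n \ge \pi_{L_n}(K_n)\,L_n^{-K_n}\,(\lambda\rho)^{K_n}\,e^{-\lambda||\bm{d}_0||_1}\,e^{-K_n(\lambda\rho+\rho^2/2)}.
\end{align*}
Choosing the radius $\rho$ so that $\lambda\rho$ is of order $1/L_n$ (a regime consistent with the adaptive precision $\lambda_n(\delta)$ in \eqref{lambdan}) converts $(\lambda\rho)^{K_n}$ into the second factor $L_n^{-K_n}$, while $K_n(\lambda\rho+\rho^2/2)$ stays bounded because $K_n=o(L_n)$ by Assumption (A1), so the residual exponential is absorbed into the universal constant $e^{-1}$. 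This reproduces $R_n\ge \pi_{L_n}(K_n)L_n^{-2K_n}\exp(-\lambda||\bm{d}_0||_1-1)$. The only delicate point is matching the exact power $L_n^{-2K_n}$ and the constant $e^{-1}$ through the choice of $\rho$.
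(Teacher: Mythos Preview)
Your proposal follows exactly the route the paper intends (it simply defers to \cite{castillo2015bayesian}): restrict the mixture to the true support $S_0$, peel off $e^{-\lambda\|\bm d_0\|_1}$ by the triangle inequality, and kill the random tilt via the symmetrization $\int_{-\rho}^{\rho}e^{z u}\,du=2\sinh(z\rho)/z\ge 2\rho$. Up to and including the display
\[
R_n \ge \pi_{L_n}(K_n)\,L_n^{-K_n}\,(\lambda\rho)^{K_n}\,e^{-\lambda\|\bm d_0\|_1}\,e^{-K_n(\lambda\rho+\rho^2/2)}
\]
everything is correct and matches the original argument.

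The gap is in the last paragraph. With $\lambda\rho=1/L_n$ you have $\rho=1/(\lambda L_n)$, and the residual exponent is $K_n/L_n + K_n/(2\lambda^2 L_n^2)$. The assertion that this ``stays bounded because $K_n=o(L_n)$'' only covers the first summand; the second one is controlled by $\lambda L_n$, not by $K_n/L_n$. If $\lambda\ll 1/L_n$ then $\rho\to\infty$ and $K_n\rho^2/2\to\infty$, so the claimed absorption into $e^{-1}$ fails. In fact the stated inequality cannot hold for arbitrarily small $\lambda$: the one–dimensional integral behaves like $c\,\lambda$ as $\lambda\to 0$, so $R_n$ is of order $\lambda^{K_n}$, while the right–hand side $\pi_{L_n}(K_n)L_n^{-2K_n}e^{-1}$ does not vanish with $\lambda$. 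Some lower bound of the type $\lambda\gtrsim 1/L_n$ (which the adaptive choice $\lambda_n(\delta)\sim \delta/p\asymp c/L_n$ indeed satisfies) is therefore implicit in the lemma. You should state this restriction explicitly and then your choice $\rho=1/(\lambda L_n)$ (or, cleaner, $\rho=1$ as in \cite{castillo2015bayesian}, which yields the factor $(\lambda)^{K_n}$ directly) closes the argument.
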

Lemma \ref{Lemma:castilo2015} is similar to Lemma 2 in \cite{castillo2015bayesian} by transferring $p$ to $L_n$.
The proof is analogous to theirs. 

We also introduce the following lemma to learn about the tail probability of the dimension prior $\pi_{L_n}(s)$. 

\begin{lemma}[Lemma 2.1 in \cite{ohn2022posterior}]
\label{lemma:ohn2022}
    For any fixed $\alpha$, for $Z_\ell$ following the prior distribution in \eqref{prior:h}, we have for any $s\ge 0$
    $$
    Pr\{|\bm{Z}|> k|\alpha\} \le \frac{14\alpha^{k+1}}{3 (\alpha+1)^k}. 
    $$
\end{lemma}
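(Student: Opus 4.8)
The plan is to reduce the statement to a tail bound for the single random variable $N \equiv |\bm{Z}| = \sum_{\ell\ge 1} Z_\ell$, identify its exact law, and then apply an exponential (Chernoff-type) inequality with a multiplier tuned to manufacture the factor $(\alpha+1)$. The central observation is that, although the $Z_\ell$ are dependent through the nested products $\eta_\ell = \prod_{m=1}^\ell p_m$ in \eqref{prior:h}, the count $N$ is exactly $\mathrm{Poisson}(\alpha)$ in the untruncated construction (and is stochastically dominated by that law under truncation), which is the familiar feature-count distribution of a single row of the IBP.

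First I would establish the Poisson law through factorial moments. Conditioning on the sticks $p_1,p_2,\dots$, the $Z_\ell$ are independent $\mathrm{Bernoulli}(\eta_\ell)$, so for any indices $\ell_1<\cdots<\ell_j$,
\[
\Pr\{Z_{\ell_1}=\cdots=Z_{\ell_j}=1\mid p\}=\prod_{i=1}^j \eta_{\ell_i}=\prod_{m\ge 1} p_m^{\,c_m},\qquad c_m=\#\{i:\ell_i\ge m\}.
\]
Taking expectations and using $E[p^{c}]=\alpha/(\alpha+c)$ for $p\sim\mathrm{Beta}(\alpha,1)$ gives $\prod_m \alpha/(\alpha+c_m)$; summing over all configurations through the gap variables $a_0=\ell_1,\ a_i=\ell_{i+1}-\ell_i$ turns each level into an independent geometric series, and the product telescopes to
\[
E\!\binom{N}{j}=\sum_{\ell_1<\cdots<\ell_j}\prod_{m}\frac{\alpha}{\alpha+c_m}=\prod_{v=1}^{j}\frac{\alpha}{v}=\frac{\alpha^{j}}{j!}.
\]
These are precisely the factorial moments of $\mathrm{Poisson}(\alpha)$, which determine that law, so $N\sim\mathrm{Poisson}(\alpha)$.

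Next I would apply Markov's inequality to $t^N$ with $t>1$: $\Pr\{N>k\}=\Pr\{N\ge k+1\}\le t^{-(k+1)}E[t^N]=t^{-(k+1)}e^{\alpha(t-1)}$. Choosing $t=(\alpha+1)/\alpha$ makes $t-1=1/\alpha$, hence $E[t^N]=e$, and yields $\Pr\{N>k\}\le e\,(\alpha/(\alpha+1))^{k+1}=\tfrac{e}{\alpha+1}\cdot\tfrac{\alpha^{k+1}}{(\alpha+1)^{k}}$. Since $e/(\alpha+1)\le e\le 14/3$ for every $\alpha>0$, the stated bound $\Pr\{|\bm{Z}|>k\mid\alpha\}\le \tfrac{14}{3}\,\alpha^{k+1}/(\alpha+1)^{k}$ follows, with considerable slack left in the constant.

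The main obstacle is the first step: the dependence induced by the nested sticks means the joint law of $\bm{Z}$ is not a product, so the moment identity must be handled with care. The factorial-moment route is what makes this tractable, since it needs only the subset probabilities above rather than the full distribution. For the truncated prior one notes $N_{L}=\sum_{\ell=1}^{L}Z_\ell\le N$, so $E[t^{N_L}]\le E[t^N]$ and the same bound applies verbatim. If one prefers to bypass the exact Poisson identification, it suffices for the Chernoff step to verify only the sub-Poisson generating-function bound $E[t^N]\le e^{\alpha(t-1)}$ directly, or to combine the Serfling approximation of Lemma \ref{Serfling} with the explicit Poisson tail; in every variant the delicate bookkeeping lives entirely in controlling the dependence among the $Z_\ell$.
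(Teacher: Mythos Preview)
Your argument is correct. The factorial-moment computation is clean: with $c_m=\#\{i:\ell_i\ge m\}$ taking the staircase values $j,j-1,\dots,1$ on the successive gaps, each geometric sum over a gap variable $a_i\ge 1$ collapses to $\alpha/v$, giving $E\binom{N}{j}=\alpha^j/j!$; since these factorial moments recover the probability generating function $e^{\alpha(z-1)}$ with infinite radius, $N=\sum_{\ell\ge 1}Z_\ell$ is exactly $\mathrm{Poisson}(\alpha)$. The Chernoff step with $t=(\alpha+1)/\alpha$ then yields $\Pr\{N>k\}\le e\,\alpha^{k+1}/(\alpha+1)^{k+1}$, and $e/(\alpha+1)\le e<14/3$ finishes it. Your remark that $N_L\le N$ handles the truncated prior is also correct.

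As for comparison with the paper: there is nothing to compare against. The paper does not prove this lemma; immediately after stating it, it simply records that the bound is a special case (with $\kappa=0$, $p=1$) of the two-parameter IBP tail bound in Ohn and Kim (2022) and moves on to use it in Corollary~\ref{coro:tailprob}. Your self-contained proof is therefore a genuine addition rather than a re-derivation. It is also somewhat sharper in spirit: by identifying the exact Poisson law you get the constant $e/(\alpha+1)$ in place of $14/3$, which would in fact tighten the downstream estimate in Corollary~\ref{coro:tailprob} (the paper only needs the order $L_n^{-2(k+1)}$, so the slack is harmless there). The alternative routes you mention at the end---proving only the sub-Poisson mgf bound, or piggy-backing on Lemma~\ref{Serfling}---would also work, but the factorial-moment identification is the cleanest and makes the truncated case immediate via stochastic domination.
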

Lemma \ref{lemma:ohn2022} is a special case with $\kappa=0$ and $p=1$ of the two parameter construction of IBP weights in \cite{ohn2022posterior}. 
Based on Lemma \ref{lemma:ohn2022}, we immediately have the following corollary.  

\begin{corollary}[Tail probability of $\pi_{L_n}(s)$]
\label{coro:tailprob}
    Let $a= c_1 L_n^{-c_3}, b = c_2 L_n^{c_4}$ with $c_1, c_2 >0$, $c_3>c_4+2\ge3$ in the Gamma hyperprior in \eqref{prior:h}. 
    For any $k \ge 0$, $S \sim \pi_{L_n}$, as $L_n \to \infty$, we have 
    $$
    Pr\{S > k\} = o(L_n^{-2(k+1)}). 
    $$
\end{corollary}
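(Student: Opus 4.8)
The plan is to bound the marginal tail by integrating the conditional tail bound of Lemma~\ref{lemma:ohn2022} against the Gamma hyperprior on $\alpha$. Since $\pi_{L_n}$ is exactly the marginal law of $|\bm{Z}|$ and truncating the number of rows at $L_n$ can only decrease $|\bm{Z}|$, Lemma~\ref{lemma:ohn2022} applies verbatim, and conditioning on $\alpha$ gives
\[
Pr\{S > k\} = \int_0^\infty Pr\{|\bm{Z}| > k \mid \alpha\}\,\pi(\alpha)\,d\alpha \le \frac{14}{3}\, E\!\left[\frac{\alpha^{k+1}}{(\alpha+1)^k}\right],
\]
where the expectation is taken under $\alpha \sim \mathrm{Gamma}(a,b)$. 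This reduces the corollary to a single Gamma-moment estimate, and essentially the whole difficulty is packed into how fast that moment decays in $L_n$.

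The key step is to split the integral at $\alpha=1$ so as to exploit the damping factor $(\alpha+1)^{-k}$ to full effect. On $\{\alpha \le 1\}$ I would keep $\alpha^{k+1}/(\alpha+1)^k \le \alpha^{k+1}$ and use that, as $a = c_1 L_n^{-c_3}\to 0$, the Gamma density behaves near the origin like $a\,\alpha^{a-1}$ (because $1/\Gamma(a)\sim a$ and $b^a \to 1$); this yields $E[\alpha^{k+1} I(\alpha \le 1)] \sim a/(k+1) \asymp L_n^{-c_3}$. On $\{\alpha > 1\}$ I would instead use the elementary bound $\alpha^{k+1}/(\alpha+1)^k \le \alpha$, valid for all $\alpha \ge 0$, so that only the truncated first moment enters, $E[\alpha\, I(\alpha>1)] \asymp ab \asymp L_n^{c_4 - c_3}$. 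The point of the split is that the damping collapses the $b^{k+1}$ inflation of the raw $(k+1)$-st moment down to a single power of $b$ on the heavy region.

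Matching exponents then finishes the argument: both pieces are $o(L_n^{-2(k+1)})$ once $c_3 > c_4 + 2(k+1)$, which for $k$ held fixed is compatible with the stated hypothesis $c_3 > c_4+2$ (it is exactly that hypothesis when $k=0$, and is met by taking $c_3$ suitably large for the fixed $k=K_n$ used in the proof of Theorem~\ref{theo:NoSupset}). I expect the large-$\alpha$ regime to be the main obstacle: although almost all of the prior mass sits far from the origin only because the scale $b = c_2 L_n^{c_4}$ is enormous, a careless $\alpha^{k+1}$ bound there would import $(k+1)$ positive powers $L_n^{c_4}$ and defeat the target decay. The resolution is to lean entirely on the $(\alpha+1)^{-k}$ damping of Lemma~\ref{lemma:ohn2022} to keep that region at first-moment size $ab$, and then let the tiny shape parameter $a$ supply the decisive factor $L_n^{-c_3}$; the Gamma tail $e^{-\alpha/b}$ plays no role until $\alpha \gtrsim b$ and is not needed. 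The small-$\alpha$ analysis, by contrast, is routine given the near-origin asymptotics of the Gamma density, paralleling the Poisson-tail heuristic $Pr\{|\bm{Z}|>k \mid \alpha\}\approx \alpha^{k+1}/(k+1)!$ already implicit in Proposition~\ref{prop: selection}.
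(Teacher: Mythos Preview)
Your route genuinely differs from the paper's. After the common step $Pr\{S>k\}\le \tfrac{14}{3}\,E[\alpha^{k+1}/(\alpha+1)^k]$, the paper asserts that $\alpha\mapsto \alpha^{k+1}/(\alpha+1)^k$ is concave and applies Jensen's inequality to conclude
\[
E\!\left[\frac{\alpha^{k+1}}{(\alpha+1)^k}\right]\;\le\;\frac{(E\alpha)^{k+1}}{(E\alpha+1)^k}\;\asymp\;(ab)^{k+1},
\]
which under $c_3>c_4+2$ would give $o(L_n^{-2(k+1)})$ for every $k$. You instead split the expectation at $\alpha=1$ and bound each piece by elementary moment estimates, arriving at $O(ab)=O(L_n^{c_4-c_3})$ and hence at the stronger requirement $c_3>c_4+2(k+1)$.

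Your analysis is the honest one. The function $f(\alpha)=\alpha^{k+1}/(\alpha+1)^k$ is \emph{strictly convex} on $(0,\infty)$ for every $k\ge 1$---a direct computation gives $f''(\alpha)=k(k+1)\,\alpha^{k-1}/(\alpha+1)^{k+2}>0$---so Jensen runs the wrong way and the paper's displayed inequality is false. Concretely, your $\{\alpha>1\}$ estimate already furnishes the lower bound $E[f(\alpha)]\ge 2^{-k}E[\alpha\,I(\alpha>1)]\asymp ab$, which for $k\ge 1$ exceeds the paper's claimed upper bound $(ab)^{k+1}$ since $ab\to 0$. Thus the corollary cannot hold for all $k\ge 0$ under the bare hypothesis $c_3>c_4+2$; your condition $c_3>c_4+2(k+1)$ is essentially sharp for this line of argument. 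What your split buys is a correct and tight bound; what the paper's Jensen step appears to buy---a uniform-in-$k$ rate---is illusory. For the single fixed value $k=K_n$ used in Theorem~\ref{theo:NoSupset}, your remedy of choosing $c_3$ large enough for that $K_n$ is the appropriate repair.
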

\begin{proof}
  \begin{align*}
      Pr\{S > k\} &= \int  Pr\{|\bm{Z}| > k|\alpha\} \text{Gamma}(\alpha; a, b) d\alpha \\
      &\le \frac{14}{3} E\left(\frac{\alpha^{k+1}}{(\alpha+1)^k}\right). 
  \end{align*}
For any $k \ge 1$, $x^{k+1}/x^k$ is concave and thus, by Jensen's inequality we have 
$$
E\left(\frac{\alpha^{k+1}}{(\alpha+1)^k}\right) \le \frac{[E(\alpha)]^{k+1}}{[E(\alpha+1)]^k} = o(L^{-2(k+1)}). 
$$
\end{proof}

The following lemma provides the property of the adaptive precision parameter $\lambda_n(\delta)$. 
\begin{lemma}[Adaptive $\lambda_n(\delta)$]
\label{lemma:lambda}
   Given $\delta>0$, for $\lambda_n(\delta)$ in \eqref{lambdan}, as $K_n/p \to 0$, $n, p, L_n \to \infty$, we have 
   $$
  \sup_{\bm{d}_0 \in \Tilde{l}_0[K_n]}P_{\bm{d}_0}\{\lambda_n(\delta) ||\bm{d}_0||_1 \ge \delta\} < \frac{1}{p}. 
   $$
\end{lemma}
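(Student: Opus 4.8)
The plan is to first exploit the fact that the random threshold $\lambda_n(\delta)$ renders the target event free of $\delta$. Substituting $\lambda_n(\delta) = \delta/(p\Bar{|\bm{y}|})$ together with $p\Bar{|\bm{y}|} = \sum_{i=1}^p |y_i^*|$, the factor $\delta$ cancels and one finds
\[
\{\lambda_n(\delta)||\bm{d}_0||_1 \ge \delta\} = \Big\{\sum\nolimits_{i=1}^p |y_i^*| \le ||\bm{d}_0||_1\Big\}.
\]
Hence it suffices to bound, uniformly over $\bm{d}_0 \in \Tilde{l}_0[K_n]$, the probability that the total absolute differenced signal undershoots the $\ell_1$ norm of the true mean vector.

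Next I would write $y_i^* = d_{0i} + \epsilon_i$ with $\epsilon_i \overset{\text{iid}}{\sim} N(0,1)$ under $P_{\bm{d}_0}$, and split the sum over the support $S_0$ and its complement. On $S_0^c$ the signal vanishes, so $|y_i^*| = |\epsilon_i|$, while on $S_0$ the reverse triangle inequality gives $|y_i^*| \ge |d_{0i}| - |\epsilon_i|$. Summing yields $\sum_{i=1}^p |y_i^*| \ge ||\bm{d}_0||_1 + \sum_{i\in S_0^c}|\epsilon_i| - \sum_{i\in S_0}|\epsilon_i|$, so the bad event is contained in $\{U < V\}$ with $U = \sum_{i \in S_0^c}|\epsilon_i|$ and $V = \sum_{i\in S_0}|\epsilon_i|$. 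The crucial structural observation is that the distribution of $U - V$ depends on $\bm{d}_0$ only through $|S_0| \le K_n$, which is precisely what delivers the uniformity over the class (and, incidentally, means the cut-off defining $\Tilde{l}_0[K_n]$ is not needed here).

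Then I would control $P(U \le V)$ by a second-moment bound. With $E|\epsilon_i| = \sqrt{2/\pi}$ and $\mathrm{Var}(|\epsilon_i|) = 1 - 2/\pi$, writing $m = |S_0|$ gives $E(U-V) = (p-2m)\sqrt{2/\pi}$ and $\mathrm{Var}(U-V) = p(1-2/\pi)$. Since $E(U-V) > 0$ for large $p$, Chebyshev's inequality yields
\[
P(U - V \le 0) \le \frac{\mathrm{Var}(U-V)}{[E(U-V)]^2} = \frac{p(1-2/\pi)}{(p-2m)^2(2/\pi)}.
\]
This bound is increasing in $m$, so its worst case over $\Tilde{l}_0[K_n]$ is $m = K_n$; multiplying by $p$ and invoking $K_n/p \to 0$ shows that $p$ times the right-hand side tends to $(1-2/\pi)/(2/\pi) = \pi/2 - 1 \approx 0.57 < 1$. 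Therefore, for all sufficiently large $p$ the uniform bound is strictly below $1/p$, which is the assertion.

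The argument is short once the $\delta$-cancellation is spotted and the reverse-triangle-inequality reduction to i.i.d. half-normals is in place; the only genuinely quantitative hinge is that the half-normal mean-to-spread ratio forces the limiting constant $\pi/2-1$ to lie below one, so the crude Chebyshev estimate already suffices and no exponential (Chernoff/sub-Gaussian) concentration is required. The main point to watch is keeping the estimate uniform, and this is automatic because $U-V$ depends on $\bm{d}_0$ solely through the support size $|S_0|$.
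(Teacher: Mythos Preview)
Your argument is correct. Both your proof and the paper's start from the same $\delta$-cancellation, rewriting the event as $\{\bar{|\bm y|}\le p^{-1}\|\bm d_0\|_1\}$, and both finish with Chebyshev's inequality, but the intermediate decompositions differ. The paper works with the folded-normal moments of each $|y_i^*|$ directly: on $S_0$ it invokes the approximation $E|y_i^*|\to |d_{0i}|$, which is justified only because the cut-off defining $\Tilde l_0[K_n]$ forces $|d_{0i}|\to\infty$, and then centers $\bar{|\bm y|}$ at $\mu_0+p^{-1}\|\bm d_0\|_1$ before applying Chebyshev with deviation $\mu_0$. Your reverse-triangle-inequality step instead strips out the signal entirely and reduces to a difference $U-V$ of sums of pure half-normals; the controlling random variable then depends on $\bm d_0$ only through $|S_0|$, so the uniformity over $\Tilde l_0[K_n]$ is automatic and, as you note, the signal cut-off plays no role in this lemma. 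Your route also delivers the limiting constant $\pi/2-1<1$ cleanly, whereas the paper's displayed final inequality $1/(p\mu_0^2)<1/p$ is a slip (since $\mu_0^2=2/\pi<1$); had the paper used the sharper variance $(1-2/\pi)/p$ rather than $1/p$, the two computations would land on the same constant.
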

\begin{proof}
As $y_i^* \sim N(d_{0i}, 1)$, $|y_i^*|$  follows a folded normal distribution so that 
\begin{align*}
  E(|y_i^*|) &= \sqrt{\frac{2}{\pi}}\exp(-d_{0i}^2) + d_{0i}(1- 2\Phi(-d_{0i})), \\
  \text{Var}(|y_i^*|) &= d_{0i}^2 + 1 - E^2(|y_i^*|). 
\end{align*}
For $d_{0i} = 0$, $E(|y_i^*|) = \sqrt{2/\pi} \equiv \mu_0, \text{Var}(|y_i^*|) = 1- \mu_0^2$.; 
for $d_{0i} \not = 0$, as $L_n \to \infty$, 
$E(|y_i^*|) \to d_{0i}, \text{Var}(|y_i^*|) \to 1$. 
Therefore, for sufficiently large $p$, we have
\begin{align*}
E(\Bar{|\bm{y}|}) \to \mu_0 + \frac{1}{p}||\bm{d}_0||_1, \text{Var}(\Bar{|\bm{y}|}) \to \frac{1}{p}. 
\end{align*}

Then, by Chebyshev's inequality, we have 
\begin{align*}
     P_{\bm{d}_0}&\{\lambda_n(\delta) ||\bm{d}_0||_1 \ge \delta\} \\
     &= P_{\bm{d}_0} \{\Bar{|\bm{y}|} \ge \frac{1}{p} ||\bm{d}_0||_1\}\\
     & = P_{\bm{d}_0}\{|\Bar{|\bm{y}|} -  E(\Bar{|\bm{y}|})| \ge \mu_0 \}\\
     & \le \frac{1}{p\mu_0^2} < \frac{1}{p}. 
\end{align*}
\end{proof}

Now we start the proof of Theorem \ref{theo:NoSupset}. 
\begin{proof}
Let $\sigma(\bm{y}*)$ be the sigma field generated by the data $\bm{y}*$. 
Lemma \ref{lemma:lambda} indicates that there exists a Borel set $\mathbb{B}_n \in \sigma(\bm{y}*)$ so that $P_{\bm{d}_0}(\mathbb{B}_n^c) < 1/p$ and $\lambda_n(\delta)||\bm{d}_0||_1 < \delta$ holds on $\mathbb{B}_n$. 

Note that 
\begin{align*}
   E_{\bm{d}_0} \Pi_{n, L_n}{(\mathcal{B}|\bm{y}^*)} &= \int \frac{N_n(\mathcal{B})}{R_n} f_{p, \bm{d}_0}(\bm{y}^*) d\bm{y}^* \\
   &= Rn^{-1} \int\int_{\mathcal{B}} f_{p, \bm{d}}(\bm{y}^*)d\pi(\bm{d})d\bm{y}^* \\
   & = Rn^{-1} \int_{\mathcal{B}} \int f_{p, \bm{d}}(\bm{y}^*)d\bm{y}^*d\pi(\bm{d}) \\
   & = Rn^{-1} \pi(\mathcal{B}).  
\end{align*}

Hence, by Lemma \ref{Lemma:castilo2015} and Corollary \ref{coro:tailprob}, we have 
\begin{align*}
      E_{\bm{d}_0} \Pi_{n, L_n} &\{\bm{d}: |\bm{d}| > K_n|\bm{y}^*\}  \\
      & \le P_{\bm{d}_0}(\mathbb{B}_n^c) + E_{\bm{d}_0} [\pi(|d| > K_n ) \bm{1} \mathbb{B}_n] \\
      & <\frac{1}{p} + R_n^{-1} \pi(|d| > K_n ) \\
      & \le \frac{1}{p} + Q_1{K_n\log(L_n)}L_n^{-2}\exp(\lambda ||\bm{d}_0||_1), \\
      & <  \frac{1}{p} + Q_1{K_n\log(L_n)}L_n^{-2}\exp(\delta), 
\end{align*}
where $Q_1 = (1+ \log L_n)(e Q_0 \log L_n)^{-1}$ with $Q_0$ given by \eqref{pi0s}. 
Obviously, the RHS of the last inequality on the above tends to zero as $n, L_n \to \infty$. 
\end{proof}

\subsection{Proof of Corollary \ref{Falsenegative}}

\begin{proof}
    Corollary \ref{Coro:MAPasym} implies that $d_i^{\text{MAP}}$ is a consistent estimator of ${d}_{0i}$. 
    Therefore, with the cut-off of $\Tilde{l}_0[K_n]$, it suffices to showing that, for $M$ in Theorem \ref{theo:RecDim}, 
   $$
   \inf_{\bm{d}_0 \in \Tilde{l}_0[K_n]}E_{\bm{d}_0} \Pi_{n, L_n}\left\{\psi < \frac{M}{3} \sqrt{K_n \log(L_n/K_n)} | \bm{y}^*\right\} \to 1, 
   $$   
   for as $n, L_n \to \infty$. 
    Since 
    $$
    \psi_0 = p^{-1/2}||\bm{d}_0 - \Bar{d}_0 \bm{1}_p||_2 \le p^{-1/2} ||\bm{d}_0||_2, 
    $$
    therefore, $3\psi_0 < M\sqrt{K_n \log(L_n/K_n)}$ by Assumption (A2). 

    Corollary \ref{Coro:MAPasym} indicates that $\Bar{d} \to \Bar{d}_0$. 

    Then by triangle inequality, 
    we have 
    \begin{align*}
  &E_{\bm{d}_0} \Pi_{n, L_n}\left\{\psi < \frac{M}{3} \sqrt{K_n \log(L_n/K_n)}|\bm{y}^*\right\} \ge \\
  &E_{\bm{d}_0} \Pi_{n, L_n}\left\{\psi_0 + p^{-1/2}||\bm{d} - \bm{d}_0||_2 < \frac{M}{3} \sqrt{K_n \log(L_n/K_n)}|\bm{y}^*\right\}. 
    \end{align*}
Theorem \ref{minimax} indicates that the RHS of the above inequality tends to $1$. 
\end{proof}

\subsection{Proof of Theorem \ref{theorem:converge}}
\label{subsec:proofTheo4}
\begin{proof}

It is trivial that
$$|\sum_{\ell=1}^\infty h_\ell I(\xi_\ell\le t)| \leq \sum_{\ell=1}^\infty|h_\ell|.$$
Then, for any integers $m_1 < m_2$, 
we have 
\begin{align*}
P(\sum_{\ell=m_1+1}^{m_2}|h_\ell| > \epsilon ) & \leq P\left(\bigcup_{\ell=m_1+1}^{m_2} |h_\ell| > \frac{\epsilon}{m_2-m_1}\right) \\
    & \leq \sum_{\ell=m_1+1}^{m_2} P\bigg(|h_\ell| > \frac{\epsilon}{m_2-m_1} \bigg) \\
    & \leq \sum_{\ell=m_1+1}^{m_2} \bigg[ 1 - F_{\text{0}}\bigg( \frac{\epsilon}{m_2-m_1}  \bigg)  \bigg] \eta_\ell \\
    &\quad + F_{\text{0}}\bigg( \frac{-\epsilon}{m_2-m_1}  \bigg) \eta_\ell \\
    & \leq 2 \sum_{\ell=m_1+1}^{m_2} \eta_\ell. 
\end{align*}
This inequality indicates that if $\sum_{\ell=1}^\infty \eta_\ell$ is converged, then we have  $\sum_{\ell=1}^\infty |h_\ell|$ converged according to probability. 
To prove the convergence of $\sum_{\ell=1}^\infty \eta_\ell$, it is equivalent to prove $\sum_{\ell=1}^\infty E(\eta_\ell) < \infty$. Firstly, we have 
\begin{align*}
    E(\eta_\ell) =\prod_{j=1}^\ell E\{E(p_j|\alpha)\} = \left\{E\left(\frac{\alpha}{1+\alpha} \right)\right\}^\ell. \\
\end{align*}
Then by Jensen's inequality,  for any fixed $a, b$ in the Gamma prior, 
\begin{align*}
    \sum_{\ell=1}^\infty E(\eta_\ell) \le    \sum_{\ell=1}^\infty \left\{\frac{ab}{1+ab}\right\}^{\ell} =ab <\infty.
\end{align*}
\end{proof}

\section{Additional simulations}
\label{sec:addSim}
\subsection{Model misspecification}
\label{subsec:modelmis}
We conduct additional simulations under the case where our method meets with model misspecification, including heavy-tailed noises in mean-shifted models, auto-correlated noises in mean-shifted models, and an AR(2) model with structural changes. 
We generate simulated data under the following settings and conduct 300 Monte Carlo replicates under each setting. 

\begin{enumerate}
    \item[] (\textbf{MS.1}) Changes of means with heavy tailed noises. 
    We generate $n=400$ $y_i = \mu_i + \epsilon_i$, where $\epsilon_i \sim \sqrt{2}^{-1}t(4)$ are i.i.d. heavy-tailed noises. 
    We set $K=7$ change-points at $(50, 100, 150, 200, 250, 300, 350)$, leading to $8$ segments with segment mean $\mu = (0, 1.5, 3, 1.5, 3, 0.5, 2, 0)$. 
    This setting is similar to setting \textbf{S.1} except for the heavy-tailed noise type. 
    \item[] (\textbf{MS.2}) Changes of means with auto-correlated noises. 
    We generate $n=400$ $y_t = \mu_t + \epsilon_t$, where $\epsilon_1 \sim N(0, 1), \epsilon_{t} = 0.5 \epsilon_{t-1} + \alpha_t$, and $\alpha_t \sim N(0, 1)$ are i.i.d. Gaussian noises. 
    We take the same setting on the means $\mu$ as in setting \textbf{S.1}. 
\item[] (\textbf{MS.3}) Changes of auto-correlation coefficients in mixture of AR(1) and AR(2) model. 
We generate $n=450$ observations and $y_1 \sim N(0, 1)$. 
For $t\ge 2$, 
\begin{equation*}
y_t=\left\{
\begin{aligned}
&0.5y_{t-1} + \epsilon_t,  t \le 50;\\
&-0.5y_{t-1} + \epsilon_t, 50<t\le 100; \\
&0.65y_{t-1} + 0.35y_{t-1} + \epsilon_t, 100<t\le 200; \\
&-0.25y_{t-1} + \epsilon_t, 300<t \le 300; \\
&-0.85y_{t-1} -0.35 y_{t-2} + \epsilon_t; 300<t \le 400; \\
&0.45y_{t-1} + \epsilon_t, 400<t \le 450. 
\end{aligned}
\right.
\end{equation*}
Here $\epsilon_t \sim N(0,1)$ are i.i.d. Gaussian noises. 
Under this setting, $K=5$ change-points are located at $(50, 100, 200, 300, 400)$. 
\end{enumerate}

Examples of the simulated data under cases \textbf{MS.1} to \textbf{MS.2} are presented in Figures \ref{HTMean7CP} to \ref{AR2mis5CP}. 
In Figure \ref{AR2mis5CP}, the red line denotes the first order auto-correlation coefficient. 
Note that on the interval $(100, 200)$, both the first and the second order auto-correlation coefficients are positive and hence the signs of the data on the interval are grouped together. 

\begin{figure*}[!htb]
    \centering
    \subfigure[]{
    \begin{minipage}[t]{0.3\linewidth}
      \centering
\includegraphics[height = .7\textwidth]{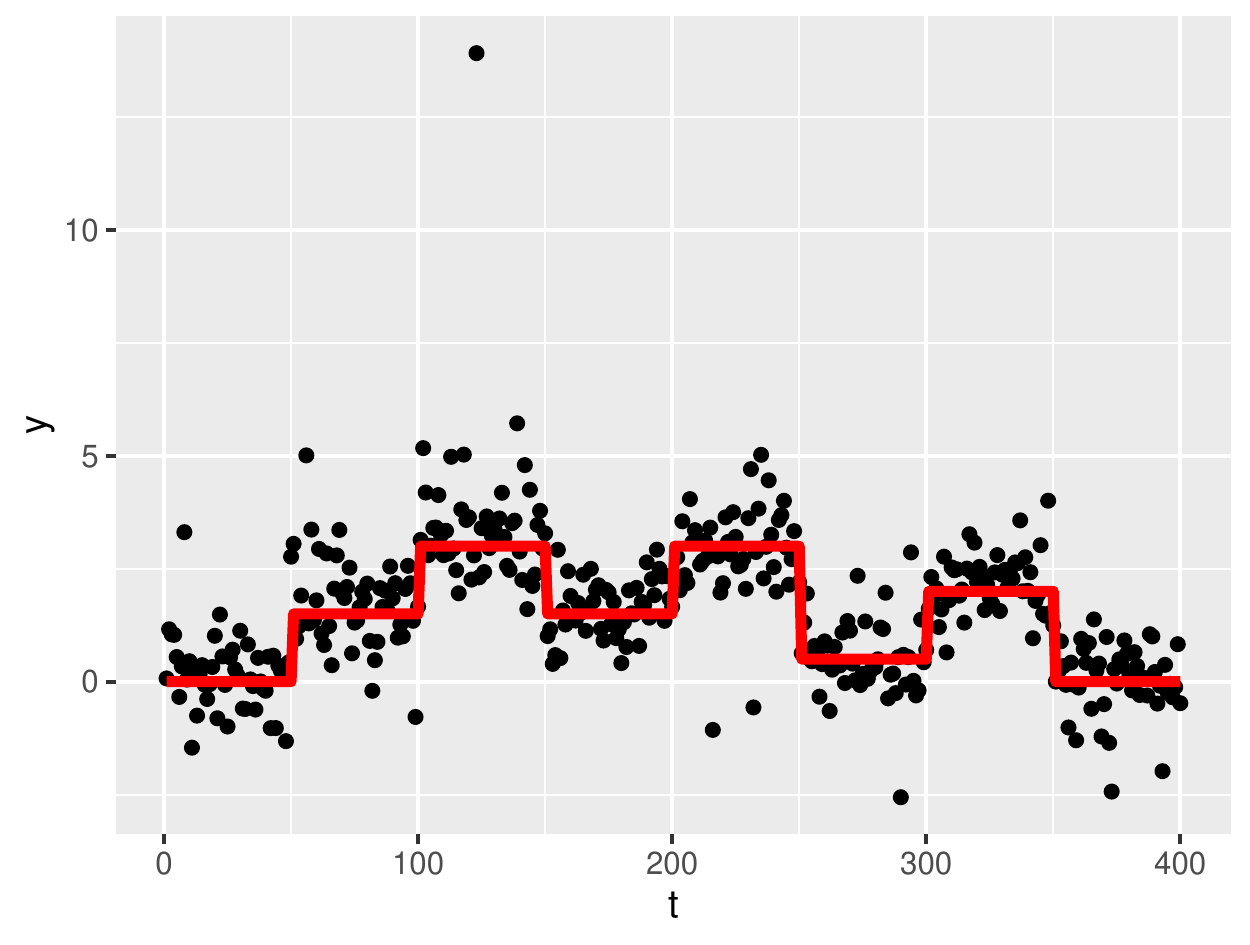}
\label{HTMean7CP}
    \end{minipage}
    }
\subfigure[]{
    \begin{minipage}[t]{0.3\linewidth}
      \centering
\includegraphics[height = .7\textwidth]{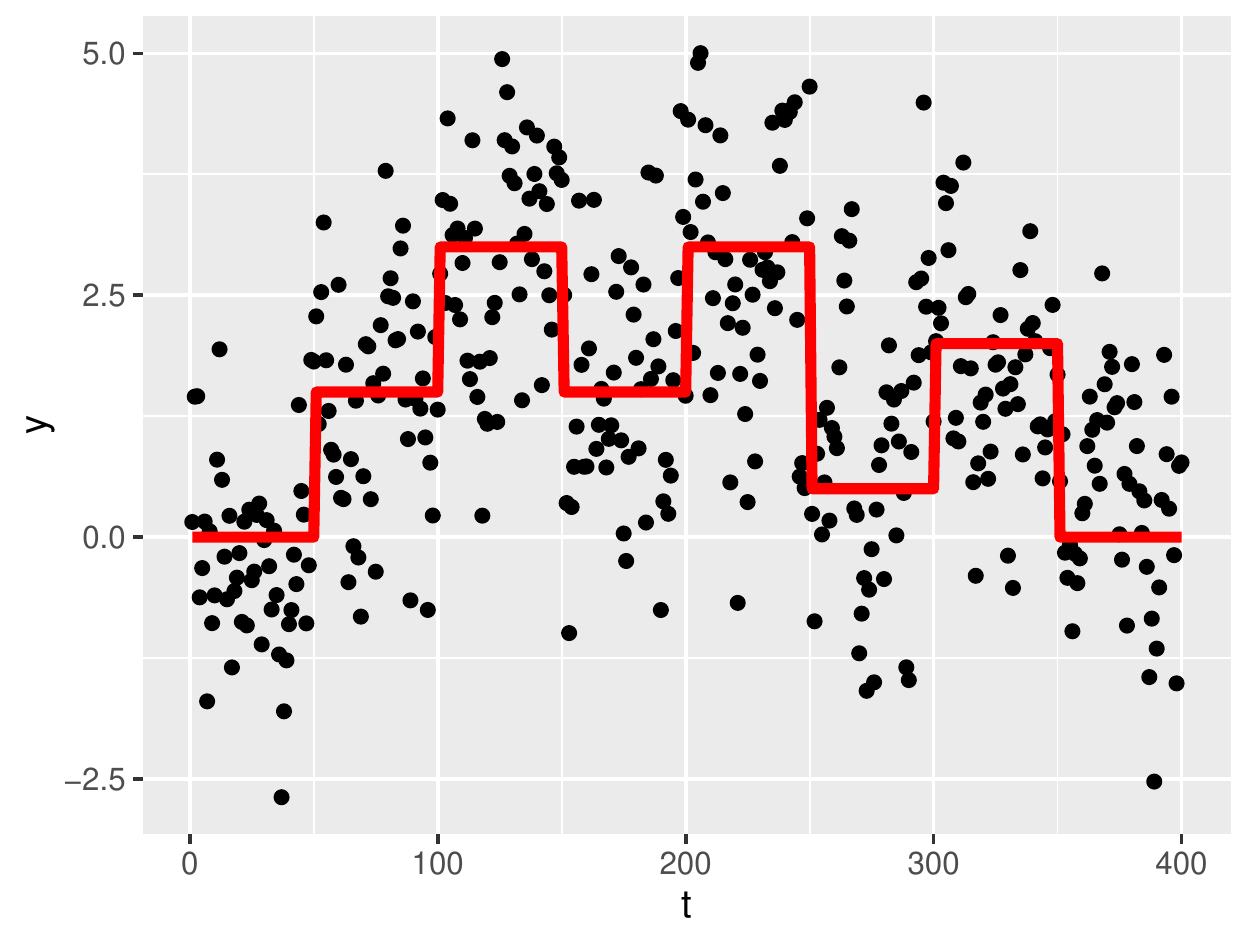}
\label{Dependent7CP}
    \end{minipage}
    }
    \subfigure[]{
    \begin{minipage}[t]{0.3\linewidth}
      \centering
\includegraphics[height = .7\textwidth]{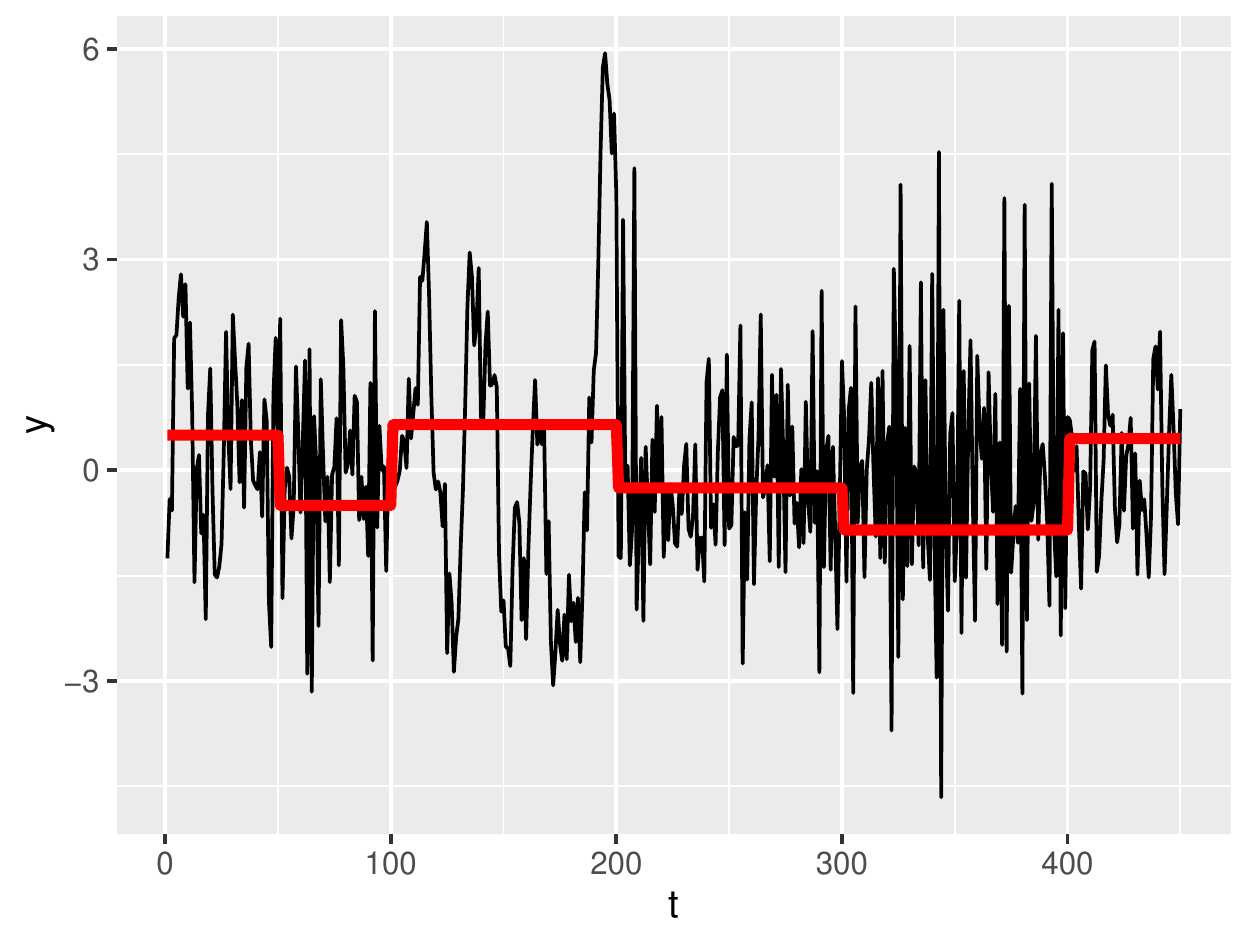}
\label{AR2mis5CP}
    \end{minipage}
    }
    \caption{\footnotesize{Examples of generated data in simulations. (a) to (c), settings \textbf{MS.1} to \textbf{MS.3}. }}
    \label{fig:missimexample}
\end{figure*}

Besides competitors under simulation settings \textbf{S.1} to \textbf{S.5}, we add the heavy-tailed version of package $\texttt{not}$ \cite{baranowski2019narrowest} under setting \textbf{MS.1}, named NOT-HT; 
we also include a nonparametric estimator of change-point \texttt{changepoint.np} by \cite{haynes2017computationally} in settings \textbf{MS.1} and \textbf{MS.2}. 

Results are given by Table \ref{tab:addsimres}. 
We find that under setting \textbf{MS.1}, NOSE is comparable with the best approach even though under model misspecifications. 
Under setting \textbf{MS.2}, MOSUM outperforms since it does not require independent assumptions on the data stream with shifts in the mean. 
Under setting \textbf{MS.3}, although \texttt{wbsts} has a higher frequency of correct detection of the number of change-points, their estimation of the locations is poor, leading to much lower precision and recall, and higher Hausdorff distance. 
\begin{table*}[!htb]
	\centering
 \footnotesize
	\caption{\footnotesize{Results of change-points detection under model mispecification settings \textbf{MS.1} to \textbf{MS.3} among 300 Monte Carlo replicates. The best results are bold. }}
 \label{tab:addsimres}
	\begin{tabular*}{\textwidth}{@{\extracolsep{\fill}}cccccccccccc @{\extracolsep{\fill}}}
		\hline
	Setting &	Method & \multicolumn{7}{c}{Frequency of $\hat{K}-K$} & Precision & Recall & $d_H \times 10^2$\\
		\hline
  & & $\le-3$ &-2 & -1 & 0 & +1 & +2 & $\ge +3$  &  &  & \\
  \hline
	\textbf{MS.1}&	NOSE & 1 & 3 & 4 & $260$ & 31 & 1 & 0 & $0.97$ & $0.98$ & $1.6$\\
        &    NOT-HT & 0 & 0 & 0 & $\bm{295}$ & 4 & 1 & 0 & $\bm{0.99}$ & $0.98$ & $\bm{0.9}$ \\
         &   SMUCE & 0 & 0 & 1 & 107 & 63 & 59 & 70 & 0.84 & 0.99 & 3.8\\
          &  WBS & 0 & 0 & 0 & 34 & 18 & 59 & 189 & 0.67 & 0.99 & 5.6\\
          &  FDRSeg & 0 & 0 & 0 & 15 & 8 & 22 & 255 & 0.55 & 0.99 & 6.7\\
        &    PELT & 0 &0 & 0 & 73 & 45 & 87 & 95 & 0.80 & 0.99 & 3.8\\
        & PELT-np & 0 & 0 & 0 & 227 & 43 & 26 & 4 & 0.95 & 0.99 & 1.8\\
        & TUGH & 0 & 0 & 0 & 242 & 48 & 9 & 1 & 0.97 & 0.99 & 1.8 \\
        & MOSUM & 0 & 0 & 3 & 255 & 41 & 1 & 0 & 0.98 & 0.99 & 1.9\\
        \hline
\textbf{MS.2} & NOSE & 0 & 2 & 19 & $87$ & 89 & 65 & 38 & $0.70$ & 0.80 & 5.2\\
& NOT & 1 & 0 & 9 & 57 & 32 & 49 & 153 & 0.64 & 0.87 & 6.1\\
 & SMUCE & 0& 0 & 1 & 2 & 7 & 27 & 264 & 0.55 & 0.91 & 7.5 \\
 & WBS & 0 & 0 & 0 & 0 & 4 & 1 & 295 & $0.43$ & 0.94 & 8.4\\
 & FDRSeg & 0 & 0 & 0 & 0 & 0 & 1 & 299 & 0.28 & 0.95 & 9.9 \\
 & PELT & 4 & 11 & 28 & $126$ & 83 & 30 & 18 & $0.79$ & 0.83 & $4.6$ \\
 &PELT-NP & 0 & 1 & 2& 46 & 76 & 68 & 107 & 0.66 & 0.84 & 5.8 \\
 & TUGH & 0 & 0 & 0 & 1 & 13 & 14 & 272 & 0.53 & 0.91 & 7.1 \\
& MOSUM & 0 & 3 & 39 & \textbf{176} & 70 & 12 & 0 & \textbf{0.96} & \textbf{0.91} & \textbf{4.3}\\
\hline
\textbf{MS.3} & NOSE & 0 & 55 & 144 & 78 & 22 & 7 & 0 & $\bm{0.83}$ & $\bm{0.69}$ & $3.8$\\
& WBSTS & 14 & 57 & 84 & \textbf{90} & 40 & 15 & 0 & 0.54 & 0.46 & 7.0\\ 
& B-P & 191 & 74 & 35 & 0 & 0 & 0 & 0 & $0.79$ & 0.38 & $\bm{2.6}$\\
		\hline
	\end{tabular*}
\end{table*}

\subsection{Simulations for DRAIP data}
\label{subsec:SimDRAIP}
We generate a series of independent Gaussian data to simulate the DRAIP data. 
We generate synthetic data based on the detection result given by NOSE in the real DRAIP data. 
That is, 7 change-points are set at $(37, 137, 206, 336, 426, 510, 630)$. 
On each segment divided by these change-points, data are i.i.d. Gaussian variables with means $\mu = (0.141, 0.124, 0.399, 0.214, -0.112, -0.093, -0.053, \\0.116)$ (the sample mean of the DRAIP data on each segment) and $\sigma$ being the sample SDs on those segments divided by NOSE. 
We conduct 300 Monte Carlo replicates for the simulation. 
An example is presented in Figure \ref{fig:simDRAIP}. 

\begin{figure} [H]
	\centering
	\includegraphics[width = .5 \textwidth]{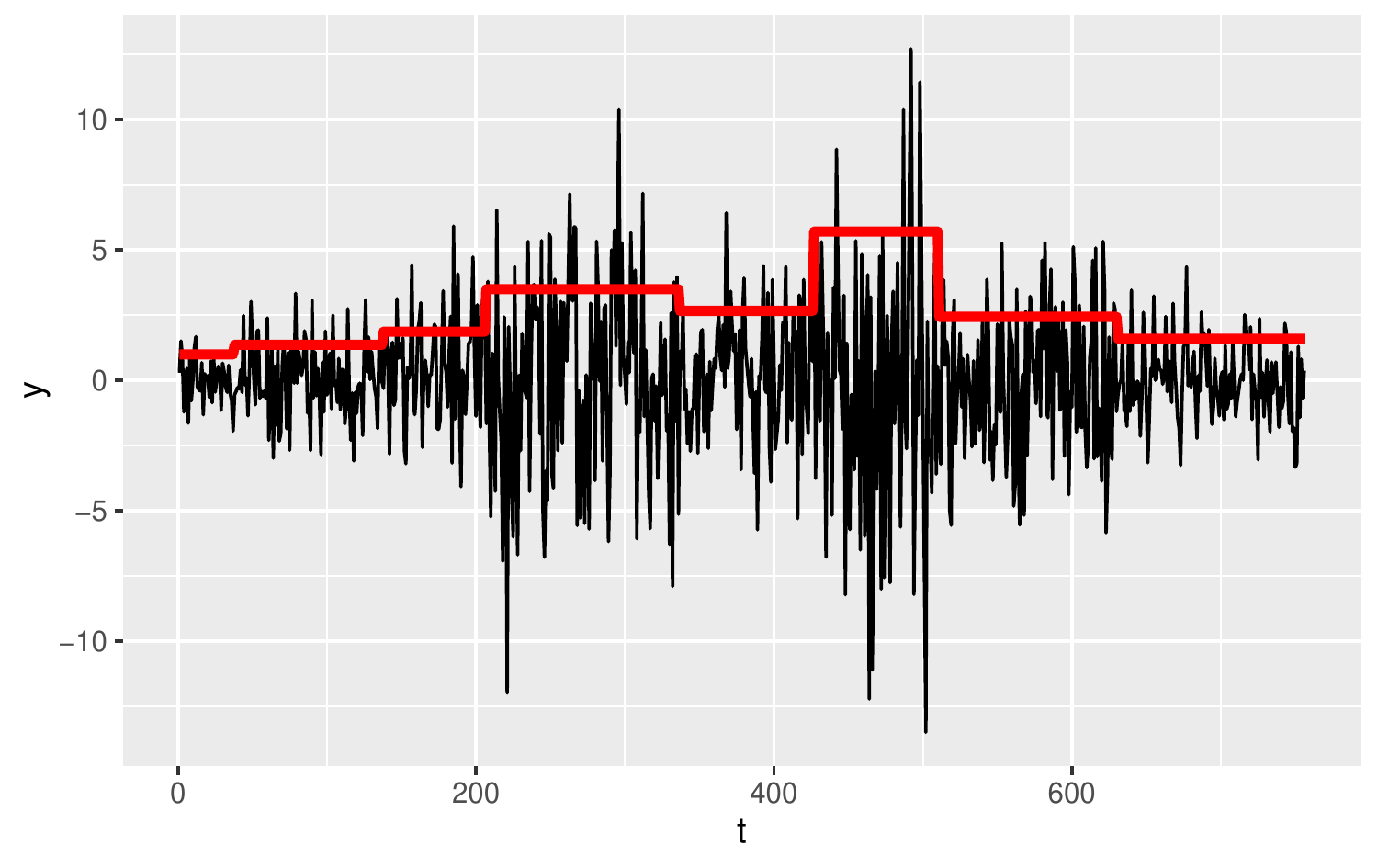}
	\caption{\footnotesize{Simulated example for the DRAIP data and the true values of scale parameters (in red polyline).}} \label{fig:simDRAIP}
\end{figure}
We present the detection results in Table \ref{tab:draipsim}. 
As expected, the small jump sizes and varying means lead to serious under-detection of change-points for all approaches. 
Even so, NOSE performs much better in correctly detecting change-points compared with other approaches. 
This simulation demonstrates the reliability of detection results given by NOSE on the DRAIP data.

\begin{table*}[!htb]
	\centering
 \footnotesize
	\caption{\footnotesize{Results of change-points detection under simulations for the DRAIP data and the ACGH data. }}\label{tab:draipsim}
\begin{tabular*}{\textwidth}{@{\extracolsep{\fill}}cccccccccccc @{\extracolsep{\fill}}}		
\hline
	Setting &	Method & \multicolumn{7}{c}{Frequency of $\hat{K}-K$} & Precision & Recall & $d_H \times 10^2$\\
		\hline
  & & $\le-3$ &-2 & -1 & 0 & +1 & +2 & $\ge +3$  &  &  & \\
  \hline
  DRAIP & NOSE & 8 & 147 & 110 & \textbf{33} & 2  & 0 & 0 & 0.90 & \textbf{0.71} & $3.4$\\
        &  NOT & 224 & 60 & 11 & 5 & 0 & 0 & 0 & 0.94 & 0.54 & \textbf{2.0} \\
        &   SMUCE & 282 & 17 & 1 & 0 & 0 & 0 & 0 & \textbf{1} & 0.48 & 19.5\\
        &  PELT & 95 & 119 & 78 & 8 & 0 & 0 & 0 & 0.92 & 0.64 & 2.6\\
		\hline
  ACGH&	NOSE & 0 & 0 & 1 & 108 & 140  & 44 & 7 & 0.93 & 0.99 & $2.5$\\
  &HSMUCE & 0 & 0 & 1 & 35 & 131 & 102 & 31 & 0.90 & 0.93 & 15.5\\
        &    NOT & 0 & 0 & 0 & 28 & 12 & 107 & 153 & 0.81 & $0.98$ & 18.2 \\
         & R-FPOP & 0 &53 & 166 & 21 & 60 & 0 & 0 & 0.99 & 0.84 & 3.25\\
         &   SMUCE & 0 & 0 & 0 & 0 & 0 & 0 & 300 & 0.51 & 0.98 & 20.9\\
          &  WBS & 0 & 0 & 0 & 0 & 0 & 0 & 300 & 0.52 & 0.98 & 20.9\\
          &  FDRSeg & 0 & 0 & 0 & 0 & 0 & 0 & 300 & 0.30 & 0.97 & 21.3\\
          & TUGH & 1 & 0 & 0 & 1 & 0 & 0 & 298 & 0.48 & 0.96 & 20.2 \\
          & MOSUM & 0 & 0 & 0 & 3 & 5 & 34 & 258 & 0.74 & 0.94 & 13.1\\
          \hline
	\end{tabular*}
\end{table*} 

\subsection{Simulations for ACGH data}
\label{subsec:SimACGH}
We generate a series of independent Gaussian data to simulate the ACGH data. 
We use the smooth signal estimated by \texttt{DeCAFS} \citep{romano2022detecting} as the means of Gaussian variables. 
The scale parameter is set as the sum of the estimated standard deviations of the drift and the AR(1) noise process. 
An example is presented in Figure \ref{fig:simACGH}. 
As can be found in the figure, such a data-generating process simulates the true data quite well with an average mean square error of 0.0265 (0.001) among the simulated datasets (standard deviation in bracket). 
The Gaussian scheme naturally avoids most possible outliers. 
For comparison, we use the detection result on the real ACGH dataset given by NOSE as the golden standard. 
That is, 13 change-points are set at $(73, 123, 263, 342, 524, 583, 657, 745, 1724, 1906, 1965, 2041, 2143)$. 
Since the data stream is long, we set the window size for true positive detection as $25$ in the simulation. 
We conduct 300 Monte Carlo replicates for the simulation. 
The simulation results combined in Table \ref{tab:draipsim} shows that both NOSE and R-FPOP provide consistent estimation results with that of the real-data experiment in the simulations. 
By removing most outliers, the results of HSMUCE tend to more similar to that of NOSE. 
Compared with the real-data experiment, \texttt{NOT} seems to be slightly over-detect change-points in simulations. 
In terms of the remaining methods, they significantly over-detect change-points in both real-data experiments and simulations. 
We do not incorporate the PELT method here since it fails to detect any change-points in most cases. 

\begin{figure} [H]
	\centering
	\includegraphics[width = .5\textwidth]{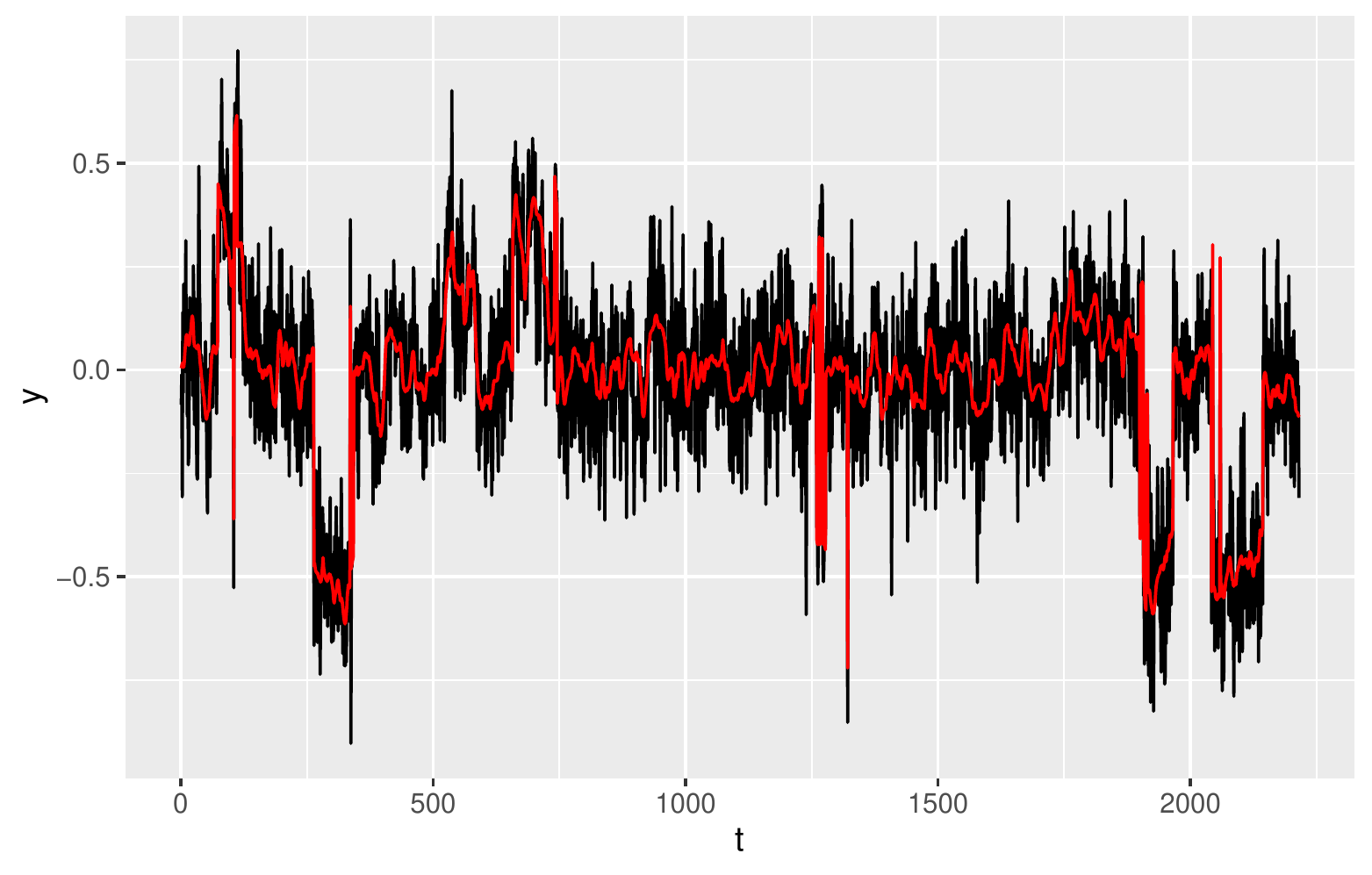}
	\caption{\footnotesize{Simulated example for the ACGH data and the smooth signal estimated by DeCAFS (in red curves).} }
 \label{fig:simACGH}
\end{figure}

\newpage
\bibliographystyle{apalike}
\linespread{0.5}
\selectfont
\bibliography{main.bbl}

\end{document}